\documentclass[11pt]{article}
\usepackage[margin=1in]{geometry}
\usepackage{amsmath}
\usepackage{graphicx}
\usepackage{mathtools}
\usepackage{comment}
\usepackage{enumerate}
\usepackage{booktabs}
\usepackage{threeparttable}

\usepackage{url} 
\newcommand{\blind}{1}

\usepackage{amsmath,amsthm, amssymb, bbm}
\usepackage{blindtext}
\usepackage{enumerate}
\usepackage{listings}
\usepackage{color, colortbl}
\usepackage{xcolor}
\usepackage[nottoc]{tocbibind}
\usepackage[utf8]{inputenc}
\usepackage{caption}
\usepackage{tabularx}
\usepackage{array}
\usepackage{rotating}
\usepackage{wrapfig}
\usepackage{graphicx}
\usepackage{subcaption}
\definecolor {processblue}{cmyk}{0.96,0,0,0}
\usepackage{mathrsfs}
\usepackage{hyperref}
\hypersetup{
	colorlinks   = true,
	linkcolor = red,
	citecolor    = blue
}
\usepackage{fancyvrb}
\usepackage{enumitem}
\definecolor{LightCyan}{rgb}{0.88,1,1}
\usepackage{parskip}
\usepackage{float}
\usepackage{titling}
\usepackage{varwidth}
\usepackage{multirow}
\usepackage{breqn}
\usepackage{bm}
\usepackage{titlesec}
\titleformat{\subsection}    
{\bfseries\itshape}{\thesubsection}{1em}{}
\usepackage{xr}
\usepackage{setspace}

\newtheorem{assumption}{Assumption}
\newtheorem{lemma}{Lemma}
\newtheorem{proposition}{Proposition}

\theoremstyle{remark}

\newtheorem{remark}{Remark}
\usepackage{amssymb,mathtools}

\makeatletter
\newcommand*{\addFileDependency}[1]{
	\typeout{(#1)}
	\@addtofilelist{#1}
	\IfFileExists{#1}{}{\typeout{No file #1.}}
}
\makeatother

\def\indist{\rightsquigarrow}
\def\ind{\perp\!\!\!\perp}

\newcommand{\var}{\text{var}}

\newcommand{\Pb}{\mathbb{P}}

\newcommand{\Pn}{\mathbb{P}_n}

\newcommand{\E}{\mathbb{E}}
\newcommand{\R}{\mathbb{R}}

\newcommand{\muhat}{\widehat\mu}

\newcommand{\mhat}{\widehat{m}}

\newcommand{\rhat}{\widehat{r}}

\def\logit{\text{logit}}
\def\expit{\text{expit}}

\usepackage[backend=bibtex, style=nature, doi=true, url=true, citestyle=authoryear-comp, intitle=true, maxbibnames=99]{biblatex}
\bibliography{ref.bib}

\begin{document}

	\def\spacingset#1{\renewcommand{\baselinestretch}%
		{#1}\small\normalsize} \spacingset{1}

	
	\if1\blind
	{
		\title{\bf On regression with estimated covariates and conditional effects given the propensity score}
	\author{Jiaqi Wu\thanks{Department of Statistics, Rutgers, The State University of New Jersey. Email: jw1726@stat.rutgers.edu} \and Matteo Bonvini\thanks{Corresponding author. Department of Statistics, Rutgers, The State University of New Jersey. Email: mb1662@stat.rutgers.edu.} \and Edward H. Kennedy\thanks{Department of Statistics \& Data Science, Carnegie Mellon University. Email: edward@stat.cmu.edu.} \and Jennie E. Brand \thanks{Department of Sociology, University of California, Los Angeles. Email: brand@soc.ucla.edu} \and Yu Xie \thanks{Department of Sociology, Princeton University. Email: yuxie@princeton.edu}}
		\date{ \today \\ \medskip \textit{Preliminary. Comments welcome.}}
		\maketitle
	} \fi
	
	\if0\blind
	{
		\bigskip
		\bigskip
		\bigskip
		\begin{center}
			{\bf Title}
		\end{center}
		\medskip
	} \fi
	
	\begin{abstract}
Motivated by the study of heterogeneous returns to education in \cite{brand2010benefits}, which considers how the effect of completing college on earnings varies with the (unknown) probability of completing college, we analyze the problem of estimating a nonparametric regression function when certain covariates are estimated in a first step. Plug-in estimators that treat the estimated covariates as known generally suffer from first-stage estimation error. To mitigate this issue, we analyze two debiasing approaches within a framework that is agnostic to the choice of the first-stage estimation method and relies on either local-smoothing or sieve-based methods for the second-stage regression. In particular, we consider: (i) influence function-based estimators of pathwise differentiable parameters that approximate the target estimand, and (ii) a variant of plug-in estimators that directly aims to correct their bias. For each method, we upper bound the estimation error and characterize conditions under which oracle rates can be approached, highlighting the possible gains in terms of convergence rates relative to the plug-ins. Simulation studies illustrate the finite-sample behavior of the methods. We apply our methodology to data from the National Longitudinal Survey of Youth 1997 and find evidence that completing college yields the largest reductions in unemployment for individuals least likely to do so, consistent with earlier findings in the literature (\cite{brand2010benefits}; \cite{brand2023overcoming}).
	\end{abstract}
\section{Introduction}
In this work, we consider the problem of estimating a nonparametric regression function in which certain covariates are not directly observed but need to be estimated in a first stage. Our work is motivated by the study of returns to education on certain outcomes of interest, such as labor-related outcomes. In particular, \cite{brand2010benefits} and \cite{brand2023overcoming} have shown that the causal effect of completing college on wages, among other outcomes, is greater for those students less likely to do so. Their finding provides evidence in favor of the \textit{negative selection hypothesis} whereby students who are least likely to  receive treatment (in this case college completion) are also the ones who would benefit the most from it. This contrasts with the \textit{positive selection hypothesis} which views students as rational agents completing college with larger probability to yield a greater benefit from doing so. Negative selection may occur because attending and completing college is a complex social phenomenon governed by many mechanisms in addition to pure economic incentives. For instance, for students from an advantaged background, completing college may be the norm whether or not it can provide a significant boost to their wages. We refer the reader to the original works and references therein.

The approach taken in \cite{brand2010benefits} and here is to estimate the conditional average treatment effect, identified under the no-unmeasured-confounding assumption (in addition to consistency and positivity), conditional on the unobserved but estimable probability of treatment (the propensity score). Mathematically, our goal is to estimate $\tau(t) := \E(Y^1 - Y^0 \mid r(X) = t)$, where $Y^a$ denotes the potential outcome had treatment been set to $A = a$, $A$ denotes the indicator for treatment and $r(X) = \Pb(A = 1 \mid X)$ denotes the propensity score. As the propensity score is unknown in observational studies, our work contributes to the literature on nonparametric regression with generated regressors, which our motivating application allows us to connect to the literature on conditional average treatment effect estimation (e.g., \cite{foster2023orthogonal}, \cite{kennedy2020optimal} and \cite{nie2021quasi}). Even sidestepping for a moment that $\tau(t)$ depends on the partially observed outcome $Y^1 - Y^0$ (for each unit, at most one potential outcome can be observed), our target is a nonparametric regression function on an estimated covariate. As such, even if the regressor was known exactly, our target is not root-$n$ estimable in nonparametric models. Because $\tau(t)$ is a one-dimensional curve while $r(X)$ is a multi-dimensional surface, we aim to be as agnostic as possible with respect to the estimators of $r$ while considering local smoothing and sieve-based estimators of the second-stage regression of $Y$ on $r(X)$.

Several studies have examined the effects of estimated covariates on downstream tasks. A canonical example occurs when a nonparametric regression on estimated covariates enters the construction of an estimator of a finite-dimensional parameter of interest. In this case, the estimand would be a functional of the regression on estimated covariates, which would thus act as a nuisance parameter rather than as the main estimand as in our setting \parencite{hahn2013asymptotic, hahn2018nonparametric, mammen2016semiparametric, escanciano2014uniform}. The main focus of this literature is on deriving asymptotic normality (at the root-$n$ scale) and establishing the contribution of the generated regressors to the limiting variance. The intermediate estimators of the nonparametric regression on estimated covariates considered are of the \textit{plug-in} variety. That is, the outcome is simply regressed on the estimated covariate without further adjustments, see, e.g., Section 5 in \cite{hahn2018nonparametric}, Section 2.2 in \cite{mammen2016semiparametric}. Without further modifications or committing to a specific estimator for the generated regressors, such plug-in estimators exhibit errors that are essentially first-order in the error incurred in estimating the covariates; see, e.g., Theorem 2 in \cite{mammen2016semiparametric} or Appendix D in \cite{hahn2018nonparametric}. More recently, \cite{escanciano2023automatic} derive automatically debiased estimators of parameters defined by finite-dimensional moment conditions that rely on nuisances that can depend on generated regressors, extending the literature on double machine learning \parencite{chernozhukov2018double, chernozhukov2021automatic, kennedy2022semiparametric} to cover settings with generated covariates.

There are relatively fewer works that consider a nonparametric regression on estimated covariates as the main target of inference. Our work builds directly on \cite{mammen2012nonparametric}, which present a detailed analysis of the error incurred by estimating the regressor(s) in a first step by analyzing a second-stage, plug-in local linear estimator that regresses the observed outcome onto the estimated covariate. As discussed in Section \ref{sec:mammen}, the convergence rates they derive for the plug-in estimator cast a rather pessimistic light on the accuracy with which one can recover the estimated regression function with generated regressors. Their work is rather general as they do not assume that the unobserved covariate is itself a regression function, although this setting is one of the several examples considered in their work. In our work, the unobserved regressor is a regression function (the propensity score). One of the main goals of our paper is to investigate whether this extra structure can be exploited to construct estimators with more favorable statistical properties than plug-in estimators. 

Other examples of plug-in estimators of two-stage regression with estimated covariates can be found in \cite{song2008uniform} and \cite{andrews1995nonparametric}. \cite{sperlich2009note} derives the asymptotic distribution for a local smoothing estimator regressing the outcome on estimated covariates under conditions on the bias and variance of the estimated regressors; see also \cite{scholz2016nonparametric}. This formulation of the problem bears resemblance to that of the error-in-variables regression \parencite{fan1993nonparametric}, where the ``covariate measured with error'' consists of the true covariate plus bias and variance terms. One key difference with the literature on error-in-variables models is that consistency of the estimator of the unknown regressors means that the error in ``measuring'' the covariates vanishes as the sample size increases. In order to flexibly incorporate black-box, machine learning methods for estimating the unobserved regressors, we prefer specifying only high-level, rate conditions on the covariates' estimators and thus we build directly on the approach taken in \cite{mammen2012nonparametric}. 

Furthermore, \cite{wuHanTongLi2024propensity} considers estimating the conditional average treatment effect given a low-dimensional vector of effect modifiers, conditioning on  the propensity score to deconfound the treatment-outcome association. However, they operate under restrictive conditions on the estimated propensity score effectively ensuring that its error, assumed to be of order $n^{-1/2}$, is asymptotically negligible (see their Assumption 1). Finally, \cite{tian2026bracketing} propose regressing conditional average treatment effects on estimated propensity scores---the same type of estimand we consider in Section~\ref{sec:application}---as a diagnostic tool for assessing monotonicity assumptions used to bracket different causal effects. Their theoretical analysis, however, does not focus on the estimation of these regressions and, in particular, does not account for the additional error induced by using estimated propensity scores. Further, their implementation is limited to low-dimensional parametric models for both the conditional effects and the propensity scores. Our work therefore provides a range of estimators for implementing this diagnostic approach, together with theoretical guarantees allowing for more flexible nuisance estimation.

Our manuscript proceeds as follows. We describe the setup and notation in Sections \ref{sec:setup} and \ref{sec:notation}, while Sections \ref{sec:mammen} and \ref{sec:contributions} summarize the results of \cite{mammen2012nonparametric} most relevant to our setting and our main contributions, respectively. To simplify the exposition and isolate what we view as the key challenges of the problem we consider, we first introduce our proposed estimators for the closely related problem of estimating $m(t) = \E(Y \mid r(X) = t)$, where $r(X) = \E(A \mid X)$, using $n$ iid copies of $(Y, A, X) \sim \Pb$. While we strive to be agnostic with respect to the estimation of $r(x)$, we focus on local-smoothing-based estimators (Section \ref{sec:local_smoothing}) and sieve-based estimators (Section \ref{sec:sieves}) for the second-stage regression.  
Sections \ref{sec:sims} and \ref{sec:da} contain our simulation experiments and our empirical application, respectively. 

\subsection{Setup}\label{sec:setup}
We first consider estimating $m(t) = \E\{Y \mid r(X) = t\}$, where $r(X) = \E(A \mid X)$, using $n$ iid copies of $Z \equiv (Y, A, X) \sim \Pb$. We assume that $Y$ and $A$ are scalar random variables while $X \in \R^p$. The regression model is 
\begin{align*}
    Y = \mu(X) + \epsilon, \quad \text{ where } \quad \E(\epsilon \mid X) = 0.
\end{align*}
There is a wide range of possible structural assumptions one may impose on this problem. For example, suppose $X \sim \mathcal{N}_p(\theta, \Sigma)$, $\mu(X) = \beta^\intercal X$ and $r(X) = \expit(X^\intercal \gamma)$. Then, $m(t)$ has the following closed-form solution:
\begin{align*}
    m(t) = \beta^\intercal \E\left\{X \mid X^\intercal \gamma = \logit(t)\right\} = \beta^\intercal\theta + \frac{\beta^\intercal\Sigma \gamma}{\gamma^\intercal \Sigma \gamma}\left\{\logit (t) - \gamma^\intercal \theta\right\}.
\end{align*}
Thus, in this case, the parameters can be estimated by maximum likelihood. The focus of this paper, however, is on estimating $m(t)$ in nonparametric models for both $t \mapsto m(t)$ and $x \mapsto r(x)$.

As noted in \cite{mammen2012nonparametric}, one key factor dictating the difficulty of the problem is whether the regressor $r(X)$ is enough to capture all the influence of $X$ on $Y$. That is, whether the random variable 
\begin{align}\label{eq:rho}
    \rho(X) \equiv \E(Y \mid X) - \E\{Y \mid r(X)\}
\end{align}
is zero almost surely or not. In our application, this assumption would impose that the effect heterogeneity induced by $X$ can be fully captured by the heterogeneity induced by $r(X)$, i.e., $\E(Y^1 - Y^0 \mid X) = \E\{Y^1 - Y^0 \mid r(X)\}$. This is likely not the case in practice. However, it satisfies $\E\{\rho(X) \mid r(X)\} = 0$ by construction, which also implies $\E\{\rho(X)\} = 0$ by the law of iterated expectation. Thus, although $\rho(X)$ may be nonzero, the remaining heterogeneity not captured by $r(X)$ averages to zero within each stratum of $r(X)$ and across the population. In deriving our results, we keep the dependence of the estimators' errors on $\rho(X)$ explicit. Based on the conditional mean-zero property that $\E\{\rho(X) \mid r(X)\} = 0$, one may hope that $\E\{\rho(X) \mid \rhat(X)\}$ is converging to zero, possibly at a rate comparable to $\sup_{x \in \R^p}|\rhat(x) - r(x)|$. Relaxing this assumption is important avenue for future work. We refer to \cite{mammen2012nonparametric} for a discussion of empirical applications where $\rho(X) = 0$ almost surely is a plausible condition.\footnote{While in many applications it may be overly restrictive, the parametric Gaussian model above does not generally impose the single index structure $\mu(X) = m(r(X))$ (corresponding to the condition $\rho = 0$ in our notation); this can be seen by direct computation since 
\begin{align*}
    \mu(X) - m(r(X)) = (X - \theta)^\intercal\left(\beta - \gamma \frac{\beta^\intercal \Sigma \gamma}{\gamma^\intercal \Sigma \gamma}\right),
\end{align*}
which would be zero only in special cases (such as when $\gamma$ is aligned with $\beta$.)} 

\medskip 

\begin{assumption}\label{main:assumption} 
Throughout we make the following assumptions:
\begin{enumerate}

    \item 
    $r(X)$ is continuously distributed with compact support $I_r$. Its density function $f_r(\cdot)$ is Lipschitz continuous and bounded above and away from zero on $I_r$. To simplify the exposition, throughout, we take $I_r = [0, 1]$.

    \item 
    The function $m(t) = \E\{Y \mid r(X) = t\}$ is twice differentiable on $I_r$, with bounded first and second derivatives.

    \item The random variables $Y$ and $A$ are bounded.
\end{enumerate}
\end{assumption}
Assumption \ref{main:assumption} imposes mild regularity conditions on the data generating process. In particular, the first condition allows us to apply standard arguments to derive the asymptotic normality of the oracle estimator with access to the true covariate $r(X)$. The second condition imposes mild smoothness on the true regression function, which we leverage both to derive the convergence rate of the oracle estimator and to bound the contribution of having to estimate $r(x)$ to the final rate. Notice that if $m(t)$ possesses additional smoothness, one could employ higher order polynomial regression to better track it. Finally, the third condition is a commonly invoked boundedness condition (it can be found, e.g., \cite{kennedy2022minimax, robins2017higher}), which we expect can be relaxed with a more careful analysis.

\subsection{Notation}\label{sec:notation}
Throughout, we employ the notation $\Pb f = \int f(z) d\Pb(z)$ and $\Pn f = n^{-1} \sum_{i=1}^n f(Z_i)$. Further, we define the $L_q$- norm $\|f\|^q_{q, \Pb} = \int |f|^q(z) d\Pb(z)$ and the sup-norm $\|f\|_\infty = \sup_z|f(z)|$. For a vector \(v \in \mathbb R^k\), we write the Euclidean norm $\|v\|^2_2 = \sum_{j=1}^k v_j^2$. We let $a \lesssim b$ denote $a \leq C b$ for some constant $C$ not depending on the sample size. We let $a \asymp b$ if $a \lesssim b$ and $b \lesssim a$. For a $k \times k$ matrix $M$, we let $\|M \|_{\rm op} = \sup_{v: \|v\|_2 \neq 0} \|M v\|_2 / \|v\|_2$ denote the operator norm of $M$. 

When the context is clear, we abbreviate the notation to represent a random variable that is a function of $X$ as follows: $\rhat \equiv \rhat(X)$ and $\rhat_i \equiv \rhat(X_i)$. We have $\mu(X) = \E(Y \mid X)$, $\rho = \mu(X) - \E\{Y \mid r(X)\} \equiv \mu - m(r)$. Thus, $Y = m(r) + \rho + \epsilon$, where $\E(\epsilon \mid X) = 0$ and $\E(\rho \mid r) = 0$. We assume that the sample consists of $n$ iid observations and let $D^n$ denote a separate, auxiliary iid sample of size $n$ used to estimate $r(X)$ (and all other nuisance functions).

To describe the local smoothing estimators, we introduce the following notation. We let $K(u)$ denote a symmetric twice continuously differentiable density function, with derivative $K'(u)$ and compact support. We let $K_{ht}(r) = h^{-1}K((r-t)/h)$, $K'_{ht}(r) = h^{-2}K'((r-t)/h)$ and $g_{ht}(r) = \begin{bmatrix} 1 & (r-t)/h\end{bmatrix}^\intercal$. We also let $e_1$ and $e_2$ denote the standard basis of $\R^2$. 

Finally, to describe the estimators based on sieves, we let $u \mapsto \Phi_k(u)$ denote  a $k$-dimensional vector of basis functions and $u \mapsto \dot\Phi_k(u)$ denote the vector of their derivatives. For instance, if $\Phi_k(u)$ denotes the cosine basis, we have $\Phi_0(u) = 1$, $\Phi_j(u) = \sqrt{2} \cos(\pi j u)$ and $\dot\Phi(u) = - \sqrt{2}\pi j \cdot \sin(\pi j u)$. Notice that $\Phi_0(u), \Phi_1(u), \ldots, \Phi_{k-1}(u)$ is orthonormal in $L_2[0, 1]$ and, further, that $\int_0^1 \dot\Phi(u)\dot\Phi(u)^\intercal du$ is a diagonal matrix with non-zero entries equal to $\pi^2 \cdot j^2$.
\subsection{Previous work}\label{sec:mammen}
To the best of our knowledge, \cite{mammen2012nonparametric} provide the most recent comprehensive analysis of the problem of estimating a regression with estimated covariates in a nonparametric setting (i.e., both $m(t)$ and $r(X)$ are estimated nonparametrically).\footnote{The setting covered by \cite{mammen2012nonparametric} is more general than ours since they do not assume that $r$ is a regression function, and they allow for the possibility of multiple unobserved covariates. Our understanding is that the setup considered here captures the main subtleties and difficulties of the problem well.}  Their first result (Theorem 1) bounds the error incurred by the plug-in estimator that regresses $Y$ on $\rhat(X)$ via local linear smoothing. To simplify our exposition and avoid imposing certain empirical process conditions, we consider a variant of their strategy by which the generated regressor $\rhat(x)$ is computed using the separate independent sample $D^n$. For a vanishing bandwidth $h$, let $W_{it}(X^n, \rhat; h)$ denote the weight of the local linear regression for observation $i$, i.e., 
\begin{align*}
W_{it}(X^n, \rhat; h) = e_1^\intercal \cdot [\Pn\{K_{ht}(\rhat) g_{ht}(\rhat)g_{ht}(\rhat)^\intercal\}]^{-1} \cdot K_{ht}(\rhat_i) \cdot  g_{ht}(\rhat_i).
\end{align*}
The estimator analyzed in \cite{mammen2012nonparametric} can be written as $\widehat{m}(t) = n^{-1}\sum_{i=1}^n W_{it}(X^n, \rhat; h) Y_i$. Its error satisfies the following decomposition. Consider a first-order Taylor expansion of $m(r_i)$ around $m(t)$:
\begin{align*}
m(r_i) = g_{ht}(\rhat_i)^\intercal \begin{bmatrix}
    m(t) \\ h \cdot m'(t)
\end{bmatrix} + m'(t)(r_i - \rhat_i) + \{m'(\overline{t}_i)  - m'(t)\}(r_i - t),
\end{align*}
where $\overline{t}_i$ is an intermediate value between $r_i$ and $t$. Since $Y_i = m(r_i) + \rho_i + \epsilon_i$, we have
\begin{align*}
    \widehat{m}(t) - m(t) = \frac{1}{n}\sum_{i=1}^n W_{it}(X^n, \rhat; h) \left[m'(t)(r_i - \rhat_i) + \rho_i +  \{m'(\overline{t}_i)  - m'(t)\}(r_i - t) + \epsilon_i \right].
\end{align*}
If $m'(t)$ is Lipschitz continuous and the weights $W_{it}(X^n, \rhat; h)$ are localized, we have
\begin{align*}
    \left| W_{it}(X^n, \rhat; h)\{m'(\overline{t}_i)  - m'(t)\}(r_i - t)\right| & \lesssim   \left| W_{it}(X^n, \rhat; h)(r_i - t)^2\right| \\
    & \lesssim |W_{it}(X^n, \rhat; h)|\{h^2 + (r_i - \rhat_i)^2\}.
\end{align*}
The error term $\epsilon_i$ is mean-zero given $(D^n, X^n)$; in this respect, a standard local smoothing argument yields that $n^{-1}\sum_{i=1}^n W_{it}(X^n, \rhat; h)\epsilon_i = O_{\Pb}((nh)^{-1/2})$ under mild conditions. Therefore, we may write
\begin{align*}
  \widehat{m}(t) - m(t) & = \frac{1}{n}\sum_{i=1}^n W_{it}(X^n, \rhat; h) \{m'(t) \cdot (r_i - \rhat_i) + \rho_i\} + O_{\Pb}\left(h^2 + \frac{1}{\sqrt{nh}} + \|\rhat - r\|_\infty^2\right).
\end{align*}
Notice that, unless the conditional mean of $Y$ given $X$ is only a function of $r(X)$, $\rho(X)$ is non-zero and may contribute first-order to the error $\widehat{m}(t) - m(t)$. Even if $\rho(X)$ is not zero, we have, by construction, $\E\{\rho(X) \mid r(X)\} = 0$. In this light, the expression above further reduces to
\begin{align*}
  \widehat{m}(t) - m(t) & = m'(t) \cdot \frac{1}{n}\sum_{i=1}^n W_{it}(X^n, \rhat; h) (r_i - \rhat_i) + \frac{1}{n}\sum_{i=1}^n \{W_{it}(X^n, \rhat; h) - W_{it}(X^n, r; h)\}\rho_i \nonumber \\
  & \hphantom{=} + O_{\Pb}\left(h^2 + \frac{1}{\sqrt{nh}} + \|\rhat - r\|_\infty^2\right).
\end{align*}
Under mild smoothness assumptions on the kernel function, by the mean-value-theorem, one may show that the second term can be bounded as
\begin{align*}
\left| \frac{1}{n}\sum_{i=1}^n \{W_{it}(X^n, \rhat; h) - W_{it}(X^n, r; h)\}\rho_i \right|  \lesssim \frac{\|\rhat - r\|_\infty \sup_{t_1, t_2}|\E(\rho \mid \rhat = t_1, r = t_2, D^n)|}{h} + O_{\Pb}\left(\frac{1}{\sqrt{nh}}\right).
\end{align*}
The fact that $\E\{\rho(X) \mid r(X)\} = 0$ motivates \cite{mammen2012nonparametric} to consider an assumption akin to $|\E\{\rho(X)\mid\rhat(X), D^n\}| \ \lesssim \|\rhat - r\|_\infty$ almost surely (their Assumption 4). Under the stronger assumption that $|\E(\rho \mid \rhat, r, D^n)|\lesssim \|\rhat - r\|_\infty$ almost surely, this informal analysis of the plug-in estimator yields that
\begin{align}\label{eq:plugin_expansion_local}
    \widehat{m}(t) - m(t) = m'(t) \cdot \frac{1}{n}\sum_{i=1}^n W_{it}(X^n, \rhat; h) (r_i - \rhat_i) + O_{\Pb}\left(h^2 + \frac{1}{\sqrt{nh}} + \frac{\|\rhat - r\|_\infty^2}{h}\right)
\end{align}
The oracle rate for estimating $m(t)$ had $r(X)$ been known is of order $h^2 + (nh)^{-1/2} \asymp n^{-2/5}$ when $h \asymp n^{-1/5}$. As pointed out in \cite{mammen2012nonparametric}, the non-stochastic component of the first term would typically be first-order in $r - \rhat$. Interestingly, the decrease in variance from the fact that this first term is a kernel weighted average of $r_i - \rhat_i$ means that it might be negligible relative to the oracle rate $n^{-2/5}$ even if $\rhat$ converges to $r$ at a slower rate. See their Corollaries 1-4 for the case when $\rhat$ itself is a local polynomial estimator. Informally, removing from their Corollary 1's error decomposition the portion related to the first-stage function class' complexity (in the spirit of using sample splitting), the rate obtained (applied to our settings) is of order
\begin{align*}
    \frac{1}{\sqrt{nh}} + h^2 + \frac{\|\rhat - r\|^2_\infty}{h} + h\|\rhat - r\|_\infty + \frac{1}{n g^p} + g^{\alpha},
\end{align*}
if $r(x)$ is $\alpha$-smooth and it is estimated by a local polynomial with bandwidth $g$. When $h \asymp n^{-1/5}$, the oracle rate is achieved if $\|\rhat - r\|_\infty = o_\Pb(n^{-3/10})$. Ignoring log terms, this means that $(ng^p)^{-1/2} = o(n^{-3/10}) \iff g \gg n^{-2/(5p)}$. To make $g^\alpha$ negligible at $n^{-2/5}$-scale, one needs $g \ll n^{-2/(5\alpha)}$, which leads to the requirement that $\alpha > p$. 

This work aims to study new estimators of $m(t)$ with more favorable error decompositions than Eq. \eqref{eq:plugin_expansion_local}, while remaining agnostic with respect to the choice of the estimator of $r$. Following the principles of semiparametric efficiency theory, plug-in estimators of pathwise differentiable parameters can often be substantially improved through debiasing based on the parameter's influence function \parencite{bickel1993efficient, tsiatis2006semiparametric, chernozhukov2018double, kennedy2022semiparametric, liu2017semiparametric, robins2017higher}. Related ideas have recently been considerably generalized to cover parameters, such as conditional average treatment effects and dose-response functions, that are not root-$n$ estimable in nonparametric models \parencite{foster2023orthogonal, kennedy2020optimal, kennedy2022minimax, nie2021quasi, kennedy2017nonparametric, luedtke2024one, zhang2026higher, bonvini2022fast}. In these settings, the estimands are typically defined as regressions of unobserved but estimable pseudo-outcomes on observed covariates; the resulting convergence rates often take the form ``oracle rate plus second-order nuisance errors,'' where the oracle rate is the rate that would be achieved by an estimator with access to the true pseudo-outcome. Our analysis suggests that estimating a regression function with estimated covariates is, in general, a substantially different and more delicate statistical problem than estimating a regression function with an estimated pseudo-outcome. In particular, one should not expect the pure second-order nuisance errors that may be obtainable when the outcome is estimated and the covariates are known. A formal negative result would require minimax lower bounds, which we leave as an important direction for future work. Section~\ref{appendix:sec_estimated_outcomes_vs_covariates} presents a stylized example highlighting a key distinction between unobserved covariates and unobserved outcomes.
\subsection{Summary of our methodological contributions}\label{sec:contributions}
The estimation strategy summarized above is rather general since it does not require the unobserved covariate $r(X)$ to be a regression function. It is natural to ask whether better rates can be achieved when $r(X) = \E(A \mid X)$, as is the case for our motivating example. Our attempt to answer this question hinges on a couple of observations. First, for a fixed bandwidth $h$, the local linear moment condition:
\begin{align*}
    \E\left(K_{ht}\{r(X)\}g_{ht}\{r(X)\}\left[\mu(X) - g_{ht}\{r(X)\}^\intercal \beta\right]\right) = 0
\end{align*}
is a pathwise differentiable parameter with influence function $\varphi(Z; \beta) = \varphi_1(Z) - \varphi_2(Z) \beta$, where $\varphi_1$ and $\varphi_2$ are given in \eqref{eq:varphi_1} and \eqref{eq:varphi_2}, respectively. This motivates the estimator $\mhat_{\text{bc}}(t) = g_{ht}(t)^\intercal \widehat\beta$, where $\widehat\beta$ solves $n^{-1} \sum_{i=1}^n \widehat\varphi(Z_i, \widehat\beta) = 0$. The subscript  ``lbc" stands for \underline{b}ias-\underline{c}orrection of a plug-in estimator based on \underline{l}ocal smoothing. A similar estimator based on sieves can be derived by solving the influence function-based estimator of the following moment condition:
\begin{align*}
    \E\left(\Phi_k\{r(X)\}\left[\mu(X) - \Phi_k\{r(X)\}^\intercal \beta\right] \right) = 0,
\end{align*}
yielding a bias-corrected estimator $\mhat_{\rm sbc}(t; k) = \Phi_k(t)^\intercal\widehat\beta_{\rm bc}$. The subscript ``sbc" stands for \underline{b}ias-\underline{c}orrection of a plug-in estimator based on \underline{s}ieves. Our analysis reveals that, while the bias of these estimators can be lower than that of their plug-in counterparts, their variances may in fact be larger. This is because the presence of the derivatives of the kernel function or the basis vector involve an additional division by $h$ and multiplication by $k$, respectively, which inflates our bounds on the variances when $h\to 0$ and $k \to \infty$. When $h$ or $k$ are fixed constants, the inflation in the variance does not show up in the rates and the bias-correction works in the usual way as for pathwise differentiable parameters. In particular, for a fixed bandwidth or number of basis terms, the bias is second-order in $(\rhat - r)$. 

Our bound on the estimation errors of $\widehat{m}_{\text{lbc}}(t; h)$ and $\mhat_{\rm sbc}(t; k)$ are such that, depending on the data generating process, the reduction in the bias from estimating $r(X)$ might not be enough to offset the increase in variance so that the overall convergence rate is in fact slower than that of the plug-in estimators. For example, consider the estimator $\mhat_{\rm {lbc}}(t; h)$ based on local smoothing and suppose that the smoothing bias is of order $h^2$. The extra division by $h$ coming from the derivative of the kernel function leads to a bound on the variance of order $(nh^3)^{-1}$, which then appears to preclude attainment of the oracle rate, which is obtained by balancing the oracle standard error and bias, of orders $(nh)^{-1/2}$ and $h^2$, respectively. We find that this issue is substantially mitigated if one is willing to assume that $\rho = 0$ since, under this condition, our inflated bound on the standard error, $(nh^3)^{-1/2}$, gets multiplied by $\|\muhat - \mu\|_\infty + \|\rhat - r\|_\infty$, which goes to zero at a certain rate. However, in the empirical application we consider, the condition $\rho = 0$ is highly implausible.

Such potential shortcomings of these bias-corrected estimators motivate us to consider other strategies to improve the plug-ins. In this respect, the second estimation procedure that we study consists of the plug-in estimator minus an estimate of the first terms in \eqref{eq:plugin_expansion_local} and \eqref{eq:plugin_expansion_sieves}. To refer to these estimators, we rely on the subscript ``lcpi" and ``scpi" to highlight that these are \underline{c}orrected \underline{p}lug-\underline{i}ns based on \underline{l}ocal smoothing and \underline{s}ieves. The rationale is that the first term in either \eqref{eq:plugin_expansion_local} and \eqref{eq:plugin_expansion_sieves} is one of the two leading terms involving the first-order difference $(\rhat - r)$. The other term involves $\rho$ and it appears more difficult to correct for. One may hope that this term is negligible, since $\E(\rho \mid r) = 0$ by definition. 

Finally, inspired by the calibrated debiased machine learning framework in \cite{van2024automatic}, we also consider an isotonic calibration step for $\rhat$. In our numerical experiments, we have found that calibrating $\rhat(X)$ yields more stable numerical performance. In particular, we estimate $\rhat(X)$ on $D^n$ and calibrate it on the main sample to get a final estimate $\rhat_{\text{cali}}(X)$ satisfying $\sum_{i = 1}^n g(\rhat_i)(A_i - \rhat_i) = 0$ for any function $g$. The rationale for calibrating $\rhat(X)$ is clear from Eq. \eqref{eq:plugin_expansion_local} since, by writing $r_i - \rhat_i = r_i - A_i + A_i - \rhat_i$ in the first term, one has
\begin{align*}
\widehat{m}(t) - m(t) & = m'(t) \cdot \frac{1}{n}\sum_{i=1}^n W_{it}(X^n, \rhat; h) (A_i - \rhat_i) + O_{\Pb}\left(h^2 + \frac{1}{\sqrt{nh}} + \frac{\|\rhat - r\|_\infty^2}{h}\right).
\end{align*}
If $\rhat(X)$ is calibrated, the first term would vanish. However, this informal analysis would need to be adjusted because $\rhat_{\text{cali}}$ now depends on the main sample as well as $D^n$. While we explore calibrating $\rhat$ in our simulations and empirical application, we leave a detailed analysis of the calibrated procedure for future work.

\section{Estimators based on local smoothing}\label{sec:local_smoothing}
In this section, we analyze the properties of the local smoothing estimators. To start, let us define the smoothing-based approximation to $m(t)$: 
\begin{align*}
    m(t; h) = e_1^\intercal Q_{ht}^{-1}(r) \E\{K_{ht}(r)g_{ht}(r)\mu\}, \quad \text{ where } \quad Q_{ht}(r) = \E\{K_{ht}(r) g_{ht}(r) g_{ht}(r)^\intercal\}.
\end{align*}
A standard smoothing argument yields that $|m(t; h) - m(t)| \ \lesssim h^2$ when $m'(t)$ is Lipschitz continuous; see, e.g., Proposition 1.13 in \cite{tsybakov2008introduction}. Sections \ref{sec:IFbased_smoothing} and \ref{sec:dpi_smoothing} describe and analyze our proposed local smoothing-based estimators of $m(t)$.
\subsection{Influence function-based estimator of the fixed-bandwidth approximating target}\label{sec:IFbased_smoothing}
The first estimator is motivated by solving the influence function-based estimate of 
$$\E[K_{ht}(r)g_{ht}(r)\{\mu - g_{ht}(r)^\intercal \beta\}],$$ 
in a nonparametric model when treating the bandwidth $h$ as fixed. The derivation of the influence function follows by standard calculations and is omitted; we refer to the recent review by \cite{kennedy2022semiparametric} on this topic. Here, we propose and analyze the estimator $\mhat_{\text{lbc}}(t; h) = e_1^\intercal \cdot  \{\Pn \varphi_2(Z,\rhat)\}^{-1} \cdot \Pn \varphi_1(Z,\rhat, \widehat\mu)$, where
\begin{align}
    \varphi_1(Z,r, \mu) & = \left[K'_{ht}(r)g_{ht}(r) + K_{ht}(r) g_{ht}'(r)\right] \cdot (A - r) \cdot \mu + K_{ht}(r)g_{ht}(r)Y, \label{eq:varphi_1} \\
    \varphi_2(Z,r) & = \left[K'_{ht}(r)g_{ht}(r)g_{ht}(r)^\intercal + K_{ht}(r) g_{ht}'(r)g_{ht}(r)^\intercal + K_{ht}(r) g_{ht}(r)g_{ht}'(r)^\intercal\right] \cdot (A - r) \nonumber \\
    & \hphantom{=} + K_{ht}(r)g_{ht}(r)g_{ht}(r)^\intercal. \label{eq:varphi_2}
\end{align}
Crucially, while the estimator is motivated by treating the bandwidth as fixed, our analysis allows the bandwidth to vanish with the sample size; this is one key difference with prior work studying finite-dimensional functionals, including the recent pre-print by \cite{escanciano2023automatic}. The terms multiplying $(A - r)$ represent the component-wise derivatives of $u \mapsto K_{ht}(u) g_{ht}(u)$ and $u \mapsto K_{ht}(u) g_{ht}(u) g_{ht}(u)^\intercal$, respectively. Notice that $g_{ht}'(t) = \begin{bmatrix} 0 & 1/h \end{bmatrix}^\intercal$. The following proposition bounds the pointwise error.
\medskip

\begin{proposition}\label{proposition:bc_local}
In addition to Assumption \ref{main:assumption}, assume that $t$ is an interior point, $nh^3 \to \infty$ and $\|\rhat - r\|_\infty = o_{\Pb}(h)$. Then it holds that
\begin{align*}
 & \mhat_{\rm lbc}(t; h) - m(t)  - \frac{1}{2} m''(t) h^2 \int K(u) u^2 du \\
 & = \left(\frac{1}{f_r(t)} \Pn \left[K_{ht}(r)\{\epsilon+\rho - m'(t) (A - r)\} + 
 K_{ht}'(r)(A - r) \rho\right] \right)\{1 + o_{\Pb}(1)\} + o(h^2)\\
& \hphantom{=} +  O_{\Pb}\left(\frac{\|r-\rhat\|_\infty+\|\mu-\muhat\|_\infty}{\sqrt {nh^3}}+\frac{\|\rhat - r\|_\infty\cdot\{\sup_{t_1,t_2}\E(\rho^2 \mid \rhat = t_1, r = t_2, D^n)\}^{1/2} }{\sqrt{nh^5}} \right) \\
&\hphantom{=} +O_{\Pb}\left(\frac{\|\rhat - r\|_\infty\left\{\sup_{t_1,t_2}|\E(\rho\mid\rhat = t_1,r = t_2, D^n) | +\|r-\rhat\|_\infty+\|\mu-\muhat\|_\infty\right\}}{h}\right)
\end{align*}
where $f_r(\cdot)$ denotes the density of $r(X)$.
\end{proposition}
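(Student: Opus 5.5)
We write the estimator as a Z-estimator and linearize it around the oracle quantities. Set $\widehat Q = \Pn \varphi_2(Z,\rhat)$ and $\widehat b = \Pn\varphi_1(Z,\rhat,\muhat)$, and let $\beta_t = [m(t),\, h m'(t)]^\intercal$. Then $\mhat_{\rm lbc}(t;h) = e_1^\intercal \widehat Q^{-1}\widehat b$, and
\begin{align*}
\mhat_{\rm lbc}(t;h) - m(t) = e_1^\intercal \widehat Q^{-1}\,\Pn\{\varphi_1(Z,\rhat,\muhat) - \varphi_2(Z,\rhat)\beta_t\}.
\end{align*}
The proof then has two parts. The first is a matrix step: show $\widehat Q = f_r(t)\,\mathrm{diag}(1,\int u^2K)+o_\Pb(1)$, so that $e_1^\intercal\widehat Q^{-1}$ equals $f_r(t)^{-1}e_1^\intercal\{1+o_\Pb(1)\}$ up to an off-diagonal correction. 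The second, which carries all the work, is an expansion of the vector $\Pn\{\varphi_1(Z,\rhat,\muhat)-\varphi_2(Z,\rhat)\beta_t\}$.

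For the matrix step, compare $\Pn\varphi_2(Z,\rhat)$ with $\Pn\varphi_2(Z,r)$. A mean-value expansion in the argument $u$ gives a difference of size $\|\rhat-r\|_\infty/h = o_\Pb(1)$. The derivative terms multiplying $(A-r)$ have mean zero and variance $O(1/(nh^3))$, which vanishes because $nh^3\to\infty$. The remaining term is the usual kernel design matrix, which converges by Assumption \ref{main:assumption}.1 and the interior-point condition.

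For the vector step, use the structure of the summands. Write $Y = m(r)+\rho+\epsilon$ and $g_{ht}(u)^\intercal\beta_t = m(t)+m'(t)(u-t)$. The summand $\varphi_1(Z,\rhat,\muhat)-\varphi_2(Z,\rhat)\beta_t$ then splits into:
\begin{itemize}
\item a term in $\epsilon$;
\item a design bias $K_{ht}(\rhat)g_{ht}(\rhat)\{m(r)-m(t)-m'(t)(\rhat-t)\}$;
\item a derivative term $[\partial_u\{K_{ht}g_{ht}\}(\rhat)]\,(A-\rhat)\{\muhat - m(t)-m'(t)(\rhat-t)\}$;
\item a term $K_{ht}(\rhat)g_{ht}(\rhat)\,(A-\rhat)\,m'(t)$, coming from the $g'$ piece of $\varphi_2$.
\end{itemize}
The key identity is a first-order Taylor expansion of the map $u\mapsto K_{ht}(u)g_{ht}(u)\{m(r)+\rho-m(t)-m'(t)(u-t)\}$ from $r$ to $\rhat$. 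Its first-order term is $\partial_u\{\cdot\}(r)(\rhat-r)$. This cancels against the correction terms once $A-\rhat$ is written as $(A-r)+(r-\rhat)$ and $\muhat = \mu + (\muhat-\mu)$ with $\mu = m(r)+\rho$. The cancellation leaves:
\begin{itemize}
\item the oracle terms evaluated at $r$, namely $K_{ht}(r)\{\epsilon+\rho-m'(t)(A-r)\}+K'_{ht}(r)(A-r)\rho$. The $K'g$ term multiplying $m(r)-m(t)-m'(t)(r-t)=O(h^2)$ is absorbed into the $o(h^2)$ remainder.
\item second-order remainders, bounded by $\|\rhat-r\|_\infty\{\|\rhat-r\|_\infty+\|\muhat-\mu\|_\infty\}/h$, since each $u$-derivative of $K_{ht}$ costs one factor of $h^{-1}$ and the kernel localizes to a window of probability $O(h)$.
\end{itemize}
The $O(h^2)$ design bias combined with the $e_1^\intercal \widehat Q^{-1}$ step produces $\tfrac12 m''(t)h^2\int u^2K + o(h^2)$, as in the standard local linear analysis (Proposition 1.13 of \cite{tsybakov2008introduction}).

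Next, control the stochastic parts of the remainders by conditioning on $D^n$, using sample splitting. Each difference of the form $\Pn\{f(\rhat,\muhat)-f(r,\mu)\}$ is split into its conditional mean and a centered fluctuation.
\begin{itemize}
\item The centered fluctuations of the $(A-r)$-weighted derivative terms have conditional standard deviation $(\|\rhat-r\|_\infty+\|\muhat-\mu\|_\infty)/\sqrt{nh^3}$.
\item The fluctuation of the term involving $\rho$ times the second derivative of the kernel, which behaves like $h^{-3}$, contributes $\|\rhat-r\|_\infty\{\sup\E(\rho^2\mid\rhat,r,D^n)\}^{1/2}/\sqrt{nh^5}$.
\item The conditional mean of the $\rho$-terms is handled by iterating expectations over $(\rhat,r)$. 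This gives $\|\rhat-r\|_\infty\sup|\E(\rho\mid\rhat,r,D^n)|/h$.
\end{itemize}
These are exactly the remaining terms in the statement of Proposition \ref{proposition:bc_local}.

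The main obstacle is the bookkeeping in the vector step. The Taylor remainder must be organized so that every first-order term in $\rhat-r$ cancels exactly, including the $\rho$-terms that survive through $K'_{ht}(A-r)\rho$. At the same time, each remainder must be assigned the correct power of $h$ rather than a crude $h^{-2}$ bound. I would first write out the exact cancellation at the population level, treating $\rhat,\muhat$ as fixed functions, and only then add the empirical-process fluctuations.
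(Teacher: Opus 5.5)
Your proposal is correct and is essentially the paper's argument. The paper also writes $\mhat_{\rm lbc}(t;h)-m(t)=e_1^\intercal\{\Pn\varphi_2(Z,\rhat)\}^{-1}\Pn\{\varphi_1(Z,\rhat,\muhat)-\varphi_2(Z,\rhat)\beta_{ht}\}$ with $\beta_{ht}=[m(t),\,hm'(t)]^\intercal$. It expands $\mu$ so that the first-order terms in $\rhat-r$ cancel between the $(A-\rhat)$-correction and the shift $H_{ht}(\rhat)-H_{ht}(r)$. It then bounds the remainders by conditional means given $D^n$ plus $(A-r)$-centered fluctuations, producing exactly the $\sqrt{nh^3}$, $\sqrt{nh^5}$ and $1/h$ terms you list.

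The only difference is in the matrix step. The paper (Lemma~\ref{lemma:min_eigen_local}) compares $\Pn\varphi_2(Z,\rhat)$ directly with $\Pb\varphi_2(Z,r)$, using that $\Pb\{\varphi_2(Z,\rhat)-\varphi_2(Z,r)\}$ is second order in $\rhat-r$. In your version, the $(A-r)$ part of $\varphi_2(Z,\rhat)-\varphi_2(Z,r)$ must be treated as a mean-zero fluctuation, of order $\|\rhat-r\|_\infty/\sqrt{nh^5}$. Bounding it crudely through the $h^{-3}$ second kernel derivative only gives $\|\rhat-r\|_\infty/h^2$, which need not vanish when $\|\rhat-r\|_\infty=o(h)$.
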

Proposition \ref{proposition:bc_local} decomposes the error incurred by $\mhat_{\rm lbc}(t; h)$ into three components: (i) a smoothing bias term of order $h^2$; (ii) a sample average of $n$ iid observations (dependent by the bandwidth $h$) that is amenable to the application of a triangular array central limit theorem (CLT); and (iii) nuisance bias terms that depend on the bandwidth $h$, the term $\rho$ as well as the accuracy with which $r(x)$ and $\mu(x)$ are estimated. The first term is a standard smoothing bias term that would enter the error decomposition even if $r(X)$ were known. The second term is a CLT term whose scaling depends on whether $\rho$ is zero ($(nh)^{1/2}$ scale) or not ($(nh^3)^{1/2}$ scale). The third nuisance bias terms are summarized and interpreted below under simplifying assumptions. 

We note that all our results based on local smoothing rest on the assumption that $\|\widehat{r} - r\|_\infty = o(h)$; such assumption is also invoked in previous works, including \cite{mammen2012nonparametric} (Assumption 2) and we invoke a similar condition also for our results based on sieves estimation. Intuitively, the second-stage regression averages outcomes $Y_i$ such that $\rhat_i$ is within $h$ from $t$. If $\rhat_i - r_i$ remains large relative to $h$ as the sample size increases, the second-stage regression would localize in the wrong spot and thus our estimator would not approximate the oracle estimator with knowledge of $r$.

In the following, we expand on the implications of Proposition \ref{proposition:bc_local}. To simplify the rates and derive sufficient conditions on $\|r-\rhat\|_\infty$ to ensure asymptotic normality of $\mhat_{\rm lbc}(t; h)$, we further require the following rate conditions for $\widehat\mu$ and $\rho$. 

\medskip 
\begin{assumption}\label{ass:muhat}
The nuisance estimator satisfies $ \|\muhat-\mu\|_\infty \lesssim\|\rhat-r\|_\infty.$
\end{assumption}

\medskip

\begin{assumption}\label{ass:rho}
The conditional expectation of $\rho$ given $(\rhat, r)$ satisfies:
    \begin{align*}
      \sup_{t_1}|\E(\rho\mid\rhat = t_1, D^n) |\leq\sup_{t_1,t_2}|\E(\rho\mid\rhat = t_1,r = t_2, D^n)     |\lesssim\|r-\rhat\|_\infty.
    \end{align*}
\end{assumption}
Assumption \ref{ass:muhat} is made solely to ease the presentation of the results as it allows us to focus on the conditions needed on $\widehat{r}$; in our settings, it is a plausible condition as both $r(x)$ and $\mu(x)$ are regression functions on the same domain. However, if $\|\mu - \widehat\mu\|_\infty \gg \|\widehat{r} -r\|_\infty$, then one can derive similar conditions as those below, where the relevant requirements would need to be explicitly stated in terms of $\|\mu - \widehat\mu\|_\infty$. Assumption \ref{ass:rho} plays a more important role in deriving the rates below in the sense that if $\sup_{t_1, t_2} | \E(\rho \mid \rhat = t_1, r = t_2, D^n)|$ converges to zero slowly, then the nuisance bias might effectively be first-order in $\|\rhat - r\|_\infty$, rendering the bias-correction ineffective. The plausibility of Assumption \ref{ass:rho} comes from the condition $\E(\rho \mid r) = 0$ that holds by design; we consider investigating whether this condition can be relaxed as an important avenue for future work.

Under Assumptions \ref{ass:muhat} and \ref{ass:rho}, in the likely case that $\rho \neq 0$, the rate from Proposition \ref{proposition:bc_local} is
    \begin{align*}
        \widehat{m}_{\text{lbc}}(t; h) - m(t) = O_\Pb\left(h^2 +  \frac 1{\sqrt{nh^3}} + \frac{\|\rhat -r\|^2_\infty}{h}\right).
    \end{align*}
If $\|\rhat - r\|_\infty \gtrsim n^{-3/14}$, choosing $h \asymp \|\rhat - r\|_\infty^{2/3}$ yields the rate $O_{\Pb}(\|\rhat - r\|_\infty^{4/3} \ + \ n^{-1/2} \cdot \|\rhat - r\|_\infty^{-1})$. If $\|\rhat - r\|_\infty \lesssim n^{-3/14}$, choosing $h \asymp n^{-1/7}$ yields the rate $n^{-2/7}$. Further, if $\|\rhat - r\|_\infty = o_{\Pb}(n^{-3/14})$, then we have, under regularity conditions ensuring non-degeneracy and continuity of the conditional variance function, and with $h \asymp n^{-1/7}$:
    \begin{align*}
        \sqrt{nh^3}\left(\widehat{m}_{\text{lbc}}(t; h) - m(t) - \frac{1}{2}m''(t)h^2\int K(u)u^2\,du\right) \xrightarrow{d} N\left(0,\sigma^2(t)\right),
    \end{align*}
where $\sigma^2(t) = f_r(t)^{-1} \cdot \int K'(u)^2\,du \cdot \E\left[\left\{(A-r)\rho\right\}^2 \,\Big|\, r(X) = t\right].$

Recall that the oracle estimator with access to $r$ achieves a rate $n^{-2/5}$. Our analysis thus suggests that this rate is not attainable by $\widehat{m}_{\text{lbc}}(t; h)$ in the general case where $\rho \neq 0$. On the other hand, if $\rho = 0$, asymptotic Gaussianity is attainable (at rate $n^{-2/5}$ with $h \asymp n^{-1/5}$) if $\|\rhat - r\|_\infty = o_\Pb(n^{-3/10})$. 
\subsection{Direct bias correction of plug-in estimator}\label{sec:dpi_smoothing}
In this section, we consider directly estimating and subtracting off one of the two leading terms in the expansion $\mhat_{\text{lpi}}(t; h) - m(t)$ in \eqref{eq:plugin_expansion_local}, namely $m'(t) \cdot n^{-1} \sum_{i=1}^n W_{it}(X^n, \rhat;h)\{r(X_i) - \rhat(X_i)\}$. We propose estimating $m'(t)$ via local smoothing with a bandwidth $b$ that may differ from $h$. To simplify the analysis, we consider a leave-one-out type estimator in which the same observation is not used to estimate both the derivative term and the kernel smoothed average of $r - \rhat$. This naturally leads to a second-order U-statistic. In the simulations, we find that simply regressing $Y - \widehat{m}'(\rhat; b) \cdot (A - \rhat)$ onto $\rhat$ via local linear regression yields improvements in performance relative to the plug-in estimator $\mhat_{\rm lpi}(t; h)$.

To describe the estimator $\mhat_{\rm lcpi}(t; h, b)$, we need to introduce some additional notation. For a vanishing bandwidth $b$, we define
\begin{align*}
    & \widehat{Q}_{bt, -j} = \frac{1}{n-1} \sum_{i = 1, i \neq j}^n K_{bt}(\rhat_i) g_{bt}(\rhat_i) g^T_{bt}(\rhat_i), \quad  W_{2, it, -j}(X^n,\rhat; b) = e_2^T\widehat{Q}_{bt, -j}^{-1} b^{-1} K_{bt}(\rhat_i) g_{bt}(\rhat_i), \\
    & \widehat\varphi(Z_i, Z_j) = W_{2, it, -j}(X^n,\rhat; b)Y_i \times W_{jt}(X^n,\rhat; h)(A_j - t).
\end{align*}
The estimator considered is
\begin{align*}
    \widehat{m}_{\text{lcpi}}(t; h, b) = \mhat_{\text{lpi}}(t; h) - \frac{1}{n(n-1)}\mathop{\sum\sum}_{1 \leq i \neq j \leq n} \widehat\varphi(Z_i, Z_j),
\end{align*}
which satisfies the following error decomposition.
\medskip

\begin{proposition}\label{proposition:cpi_local}
In addition to Assumption \ref{main:assumption}, assume that $t$ is an interior point, $n \min(h, b) \to \infty$ and  $\|r-\rhat\|_\infty = o(\min(h, b))$. Then it holds that:
\begin{align*}
    & \widehat{m}_{\text{lcpi}}(t; h, b) - m(t)- \frac{1}{2} m''(t) h^2 \int K(u)u^2 du\\
    & = \frac{1}{f_r(t)} \cdot \left[\frac{1}{n} \sum_{i = 1}^n K_{ht}(r_i) \{\epsilon_i + \rho_i - m'(t) (A_i - r_i)\}\right] \{1 + o_{\Pb}(1)\} + o(h^2)  \\
   & \hphantom{=} + O_{\Pb}\bigg(\frac{\|\rhat - r\|_\infty}{h} \cdot \left\{\frac{1}{\sqrt{nh}} + 
\sup_{t_1,t_2}|\E(\rho\mid\rhat = t_1,r = t_2, D^n) |\right\}+\frac{1}{\sqrt{n^2 h b^3}} \\
& \hphantom{\quad\quad = O_{\Pb}\bigg(} + \frac{\|\rhat - r\|_\infty + \sup_t |\E(\rho\mid \rhat = t, D^n)| + b^2}{b \sqrt{n h}} \\
    &\hphantom{\quad\quad = O_{\Pb}\bigg(}+ \frac{\|\rhat - r\|_\infty}{\sqrt{nb^3}} + b \cdot \|\rhat - r\|_\infty + \frac{\|\rhat - r\|_\infty\{\|\rhat - r\|_\infty + \sup_t |\E(\rho\mid \rhat = t, D^n)|\}}{b}\bigg).
\end{align*}
\end{proposition}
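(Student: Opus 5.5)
We split the error of $\widehat{m}_{\text{lcpi}}(t; h, b)$ into the plug-in error and the error of the U-statistic correction, analyzing everything conditionally on $D^n$ (so $\rhat$ is a fixed function). The plug-in part is handled as in the informal analysis of Section~\ref{sec:mammen}, leading to \eqref{eq:plugin_expansion_local}. We keep the terms involving $\rho$ and $\epsilon$ explicit rather than absorbing them. Concretely, we write
\begin{align*}
\mhat_{\rm lpi}(t;h) - m(t) = m'(t)\frac1n\sum_i W_{it}(X^n,\rhat;h)(r_i-\rhat_i) + \frac1n\sum_i W_{it}(X^n,\rhat;h)(\epsilon_i+\rho_i) + R_1.
\end{align*}
Here $R_1$ collects the Taylor remainder, which is the smoothing bias $\tfrac12 m''(t)h^2\int K u^2 + o(h^2)$ plus $O(\|\rhat-r\|^2_\infty/h)$.

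For the $(\epsilon+\rho)$ term we replace $W_{it}(X^n,\rhat;h)$ by $W_{it}(X^n,r;h)$. By the mean value theorem in the kernel argument and the Lipschitz property of $K$, the difference is $O_\Pb\big(\|\rhat-r\|_\infty h^{-1}\{(nh)^{-1/2}+\sup|\E(\rho\mid\rhat,r,D^n)|\}\big)$. This accounts for the first bracket in the statement. Linearizing $W_{it}(X^n,r;h)$ via $Q_{ht}(r)\to f_r(t)\,\mathrm{diag}(1,\int Ku^2)$ then gives the leading term $f_r(t)^{-1}\Pn K_{ht}(r)(\epsilon+\rho)$, which follows from the standard consistency of $\Pn K g g^\intercal$ when $nh\to\infty$.

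Next we analyze the U-statistic. Its kernel $\widehat\varphi(Z_i,Z_j)$ factorizes into $\widehat m'_{-j}(t;b)$-type weights on $Y_i$ times $W_{jt}(A_j-t)$. Since the $h$-weights reproduce linear functions, $\frac1n\sum_j W_{jt}(A_j - t) = \frac1n\sum_j W_{jt}(A_j-\rhat_j)$. We perform a Hoeffding decomposition conditional on $D^n$ (and, to handle the random $\widehat Q$ matrices, replace them by their population counterparts at the cost of lower-order terms). The pieces are as follows.
\begin{enumerate}[label=(\roman*)]
\item The projection on $j$ is $\E_i[W_{2,it}Y_i]\cdot W_{jt}(A_j-\rhat_j)$. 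Here $\E_i[W_{2,it}Y_i]$ is the population local-linear derivative of $\E(Y\mid\rhat=\cdot,D^n)$ at $t$. This function equals $m(\cdot)+\E(\rho\mid\rhat,D^n)$ up to $O(\|\rhat-r\|_\infty)$, so the derivative equals $m'(t)+O(b^2+\{\|\rhat-r\|_\infty+\sup_t|\E(\rho\mid\rhat=t,D^n)|\}/b)$.
\item The average $\frac1n\sum_j W_{jt}(A_j-\rhat_j)$ splits into $\frac1n\sum_j W_{jt}(A_j-r_j)$, which is $O_\Pb((nh)^{-1/2})$ and becomes $-m'(t)f_r(t)^{-1}\Pn K_{ht}(r)(A-r)$, plus $\frac1n\sum_j W_{jt}(r_j-\rhat_j)$. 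The latter cancels the first plug-in term at order $m'(t)$.
\item Combining (i) and (ii), the mismatch term is the error bound of (i) multiplied by $(nh)^{-1/2}+\|\rhat-r\|_\infty$. This produces the terms $\{\|\rhat-r\|_\infty+\sup|\E(\rho\mid\rhat)|+b^2\}/(b\sqrt{nh})$, $b\|\rhat-r\|_\infty$ (after bounding $b^2\|\rhat-r\|_\infty\le b\|\rhat-r\|_\infty$), and $\|\rhat-r\|_\infty\{\|\rhat-r\|_\infty+\sup|\E(\rho\mid\rhat)|\}/b$.
\item The projection on $i$ is $W_{2,it}Y_i$ times $\E_j[W_{jt}(A_j-t)]$. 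The latter is $O(\|\rhat-r\|_\infty)$ because the $h$-weights reproduce linear functions, so $\E_j[W_{jt}(A_j - t)] = \E_j[W_{jt}(r_j-\rhat_j)]$ given $\E(A\mid X)=r$. Since the variance of the $b$-derivative weights is $O(1/(nb^3))$, this gives $\|\rhat-r\|_\infty/\sqrt{nb^3}$.
\item The degenerate part has variance $n^{-2}\E[W_{2}^2]\E[W^2]\asymp n^{-2}b^{-3}h^{-1}$, giving $(n^2hb^3)^{-1/2}$.
\end{enumerate}

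The main obstacle is step (i): controlling $\E(W_{2,it}Y_i\mid D^n)$ when the localization is in $\rhat$ rather than $r$. We would write $Y=m(r)+\rho+\epsilon$ and $m(r)=m(\rhat)+m'(\rhat)(r-\rhat)+O(\|\rhat-r\|^2_\infty)$. The derivative weights annihilate constants and reproduce slopes, so the $(r-\rhat)$ and $\E(\rho\mid\rhat,D^n)$ contributions are each at most of order their sup norm divided by $b$. This requires that the density of $\rhat$ near $t$ be bounded away from zero, which follows from $\|\rhat-r\|_\infty=o(b)$ and Assumption~\ref{main:assumption}. Replacing the leave-one-out $\widehat Q_{bt,-j}$ and $\widehat Q_{ht}$ by their expectations is routine under $n\min(h,b)\to\infty$ and contributes only the $\{1+o_\Pb(1)\}$ factor.
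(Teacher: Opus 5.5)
Your proposal is correct in substance and has the same skeleton as the paper's proof. The correction's leading part is $m'(t)$ times a kernel average of $A-\rhat$; it removes the plug-in's first-order term and leaves $-m'(t)f_r(t)^{-1}\Pn K_{ht}(r)(A-r)$, and everything else is a derivative error times a sampling or first-stage error. The technical execution differs, however.

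\emph{How the paper proceeds.} The paper obtains the cancellation exactly. It writes $Y_i=m(t)+m'(t)(\rhat_i-t)+S_i$ with $S_i=m'(t)(r_i-\rhat_i)+\tfrac12 m''(\overline t_i)(r_i-t)^2+\epsilon_i+\rho_i$. The leave-one-out matrix $\widehat Q_{bt,-j}$ is chosen precisely so that $(n-1)^{-1}\sum_{i\neq j}W_{2,it,-j}\{m(t)+m'(t)(\rhat_i-t)\}=m'(t)$ holds identically. Hence the correction equals $m'(t)\,n^{-1}\sum_j W_{jt}(A_j-t)$ plus a U-statistic in the residuals $S_i$. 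That U-statistic is bounded conditionally on $(\rhat^n,D^n)$, where the empirical weights are fixed and the $(A_i,\epsilon_i,\rho_i,r_i)$ remain independent across $i$. The bound comes from enumerating $\E(T_{ij}T_{lm})$ over all index-coincidence patterns, using $|\E(S_i\mid\rhat^n,D^n)|\lesssim\|\rhat-r\|_\infty+b^2+\sup_t|\E(\rho\mid\rhat=t,D^n)|$ and $|\E(A_j-\rhat_j\mid\rhat^n,D^n)|\le\|\rhat-r\|_\infty$. The piece $W_{jt}(\rhat_j-t)$ left over by the $i\neq j$ restriction is shown separately to be negligible.

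\emph{What your route buys and costs.} Your Hoeffding decomposition makes the origin of each term more transparent. But $\widehat\varphi(Z_i,Z_j)$ depends on all of $\rhat^n$ through the Gram matrices, so you must replace $\widehat Q_{bt,-j}$ by $Q_{bt}(\rhat)=\E\{K_{bt}(\rhat)g_{bt}(\rhat)g_{bt}(\rhat)^\intercal\mid D^n\}$. That replacement is not merely a $\{1+o_\Pb(1)\}$ factor. $Y_i$ has conditional mean close to $m(t)$, which need not vanish, and only the empirical weights annihilate constants exactly. So the derivative shifts additively by $b^{-1}e_2^\intercal Q_{bt}(\rhat)^{-1}\{Q_{bt}(\rhat)-\widehat Q_{bt,-j}\}(m(t),\,b\,m'(t))^\intercal=O_\Pb\{(nb^3)^{-1/2}\}$. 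This contributes $O_\Pb\{(n^2hb^3)^{-1/2}+\|\rhat-r\|_\infty(nb^3)^{-1/2}\}$ to the estimator. These are the same orders as your items (iv) and (v), so the bound survives, but the term has to be booked explicitly. Alternatively, subtract $m(t)+m'(t)(\rhat_i-t)$ exactly first, or Hoeffding-decompose conditionally on $(\rhat^n,D^n)$ with the empirical weights kept, as the paper does in the proof of Proposition~\ref{proposition:cpi_sieve}.

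\emph{Three smaller corrections.}
\begin{itemize}
\item The Taylor remainder in your $R_1$ is $O(h\|\rhat-r\|_\infty+\|\rhat-r\|_\infty^2)=o(h^2)$. The extra $O(\|\rhat-r\|_\infty^2/h)$ you carry belongs to the $\rho$-part of \eqref{eq:plugin_expansion_local}, which you already treat separately. It is not dominated by the stated terms in general, e.g.\ when $b>h$ and $\rho=0$.
\item Under Assumption~\ref{main:assumption}, where $m''$ is only bounded, the local-linear derivative bias is $O(b)$ rather than $O(b^2)$. This is exactly what produces $b/\sqrt{nh}$ and $b\|\rhat-r\|_\infty$ in the statement, so your overbounding lands on the right terms.
\item $\rhat(X)$ need not have a density. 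What you actually use is a well-conditioned $Q_{bt}(\rhat)$ and $\Pb(|\rhat-t|\le b\mid D^n)\lesssim b$. Both follow from $\|\rhat-r\|_\infty=o(b)$ through the same kernel expansion as in Lemma~\ref{lemma:min_eigen_local}.
\end{itemize}
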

Relative to Proposition \ref{proposition:bc_local}, the error decomposition from Proposition \ref{proposition:cpi_local} presents two key differences: (i) the leading CLT term does not involve $P_n\{K_{ht}'(r)(A - r) \rho \}$, which is of order $O_{\Pb}((nh^3)^{-1/2})$ when $\rho \neq 0$; (ii) the nuisance bias terms involve both the bandwidth $h$ used in the second-stage regression and the bandwidth $b$ used for estimating the derivative $m'(t)$. Throughout, we assume Assumption \ref{ass:rho} is satisfied.
If $\rho \neq 0$, choosing $b_* \asymp \max(h, \|\rhat - r\|_\infty^{1/2})$, the rate simplifies to
    \begin{align*}
        \widehat{m}_{\text{lcpi}}(t; h, b_*) - m(t) & = O_\Pb\left(h^2 + \frac{1}{\sqrt{nh}} + b_*^{-1}\|\rhat - r\|^2_\infty  +\frac{\|\rhat - r\|_\infty^2}{h} \right).
    \end{align*}
    In this light, if $\|\rhat - r\|_\infty \gtrsim n^{-3/10}$, choosing $h \asymp \|\rhat - r\|_\infty^{2/3}$ yields the rate
    \begin{align*}
        \widehat{m}_{\text{lcpi}}(t; h, b) - m(t) & = O_\Pb\left(\frac{1}{\sqrt{n} \|\rhat - r\|_\infty^{1/3}} + \|\rhat - r\|^{4/3}_\infty \right),
    \end{align*}
    while, if $\|\rhat - r\|_\infty \lesssim n^{-3/10}$, choosing $h \asymp b \asymp n^{-1/5}$ yields the oracle rate $n^{-2/5}$. If $\rho =0$, the oracle rate $n^{-2/5}$ is obtained by choosing $h\asymp n^{-1/5}$ if $\|r-\rhat\|_\infty\lesssim n^{-4/15}$. 

Finally, if either (i) $\|\rhat - r\|_\infty = o_{\Pb}( n^{-3/10})$ or (ii) $\|r-\rhat\|_\infty = o_{\Pb}(n^{-4/15})$ and $\rho = 0$, then we have
    \begin{align*}
            \sqrt{nh}\left\{\widehat{m}_{\text{lcpi}}(t;h,b)-m(t) -\frac 12 m''(t)h^2\int K(u)u^2du \right\} &\xrightarrow{d} N\left(0,\sigma^2(t)\right),
        \end{align*}
    where $ \sigma^2(t) = f_r(t)^{-1} \cdot \int K(u)^2 du  \cdot \E\left[\left\{\epsilon+\rho-m'(t)(A-r)\right\}^2\mid r(X)  =t\right]$.
Table~\ref{tab:lpoly_rates} summarizes the discussion above by restating the conditions derived on the first-stage error under which a CLT applies. Notice that, in the likely scenario where $\rho \neq 0$, our analysis for $\widehat{m}_{\text{lcpi}}(t;h,b)$ requires $\|\widehat{r} - r\|_\infty = o_{\Pb}(n^{-3/10})$ for inference at rate $n^{-2/5}$ while that for $\widehat{m}_{\text{lbc}}(t; h)$ requires the weaker condition $\|\widehat{r} - r\|_\infty = o_{\Pb}(n^{-3/14})$ for inference at the slower rate $n^{-2/7}$.  

\begin{table}[ht]
\centering
\caption{Summary of the conditions for inference derived from Propositions \ref{proposition:bc_local} and \ref{proposition:cpi_local}.}
\label{tab:lpoly_rates}
\begin{tabular}{lcccc}
\toprule
& \multicolumn{2}{c}{$\widehat{m}_{\text{lcpi}}(t;h,b)$} & \multicolumn{2}{c}{$\widehat{m}_{\text{lbc}}(t; h)$} \\
\cmidrule(lr){2-3} \cmidrule(lr){4-5}
& $\rho \neq 0$ & $\rho = 0$ & $\rho \neq 0$ & $\rho = 0$ \\
\midrule

\textbf{Achievable rate}
& $n^{-2/5}$ & $n^{-2/5}$
& $n^{-2/7}$ & $n^{-2/5}$ \\[6pt]
\textbf{Standard error scale}
& $(nh)^{-1/2}$ & $(nh)^{-1/2}$
& $(nh^3)^{-1/2}$ & $(nh)^{-1/2}$ \\[6pt]
\textbf{Condition on $\|\hat{r} - r\|_\infty$}
& $o_{\Pb}(n^{-3/10})$ & $o_{\Pb}(n^{-4/15})$
& $o_{\Pb}(n^{-3/14})$ & $o_{\Pb}(n^{-3/10})$ \\
\bottomrule
\end{tabular}

\end{table}

\section{Estimators based on sieves} \label{sec:sieves}
In this section, we consider estimating the projection of $m(r)$ onto the space spanned by $\Phi_k(r)$, which is a dictionary of $k$ basis functions. That is, the target is $\beta_k$ defined as
\begin{align*}
\beta_k = Q_{\Phi,k}(r)^{-1}\E\{\Phi_k(r)Y\}, \quad \text{ where } \quad Q_{\Phi,k}(r) :=\Pb\{\Phi_k(r)\Phi_k(r)^\intercal\}.
\end{align*}
We also define the corresponding empirical Gram matrix $\widehat Q_{\Phi,k}(r) := \Pn\{\Phi_k(r)\Phi_k(r)^\intercal\}$.

In deriving the results below, we allow for the possibility that $k$ increases with the sample size $n$, yielding an increasingly accurate approximation to the regression function $m(r)$. We also allow for the misspecification of the model for $m(r)$; this regime is natural when considering only a fixed number of basis terms. 

Let $\Delta_k(t) = m(t) - \Phi_k(t)^\intercal \beta_k$ and $\dot\Delta_{ k}(t) = m'(t)-\dot\Phi_k(t)^\intercal \beta_k $. By Taylor's expansion, recalling that $\rho \equiv \mu - m(r)$, we have, for some intermediate value $\overline{r}$ between $r$ and $\rhat$
\begin{align*}
    \mu(x) = \rho + \Delta_{k}(\rhat) + \Phi_k(\rhat)^\intercal\beta_k + m'(\rhat)(r - \rhat) + \frac 12m''(\overline{r})(r - \rhat)^2.
\end{align*}
Let us first consider the plug-in estimator of $\beta_k$, denoted $\widehat\beta_{\rm spi}$, and with corresponding estimator $\mhat_{\rm spi}(t; k) = \Phi_k(t)^\intercal \widehat\beta_{\rm spi, k}$. We proceed as in the analysis from Theorem 1 in \cite{mammen2012nonparametric}. We have
\begin{align}\label{eq:plugin_expansion_sieves}
\widehat\beta_{\text{spi}, k} - \beta_k & = \widehat{Q}^{-1}_{\Phi, k}(\rhat) \cdot \Pn\{\Phi_k(\rhat) Y\} - \beta_k \nonumber \\
& =  \widehat{Q}^{-1}_{\Phi, k}(\rhat) \cdot \Pn\left[\Phi_k(\rhat) \left\{m'(\rhat)(r - \rhat) + \rho + \Delta_k(\rhat) + \frac 12m''(\overline{r})(r-\rhat)^2 + \epsilon \right\}\right].
\end{align}
As expected, $\widehat\beta_{\text{spi}, k} - \beta_k$, and thus also $\mhat_{\rm spi}(t; k) - m(t)$, presents two leading terms: one involving the projection of $m'(\rhat)(r- \rhat)$ and one involving the projection of $\rho$. Similarly to the local smoothing setting, we therefore consider two alternative estimators to $\mhat_{\rm spi}(t; k)$. The first one, denoted $\mhat_{\rm sbc}(t; k)$, is based on the influence function of the projection parameter $\beta_k$ treating the number of basis elements $k$ as fixed (in the same spirit as $\mhat_{\rm lbc}(t; h)$). The second one, denoted $\mhat_{\rm scpi}(t; k, q)$, subtracts off an estimate of the projection of $m'(\rhat)(r-\rhat)$ onto the space spanned by $\Phi_k(\rhat)$ from the plug-in estimator $\mhat_{\rm spi}(t; k)$ (in the same spirit as $\mhat_{\rm lcpi}(t)$). The derivative term $m'(\rhat)$ is estimated as $\mhat'(\rhat; q) = \dot\Phi_q(\rhat)^\intercal \widehat\beta_{\text{spi}, q}$. Similarly to $\mhat_{\rm lcpi}(t; k, q)$, we estimate the correction term in $\mhat_{\rm scpi}(t; k, q)$ via a second-order $U$-statistic.

As estimating $m(t)$ based on sieves is a ``global method" approximating the target function over its entire domain, we derive bounds on $\|\widehat\beta_{\text{sbc}, k} - \beta_k\|_2$ and $\|\widehat\beta_{\text{scpi}, k, q} - \beta_k\|_2$. In turn, under mild conditions, these bounds translate into bounds on $\|\Phi_k(r)^\intercal (\widehat\beta_{\text{sbc}, k} - \beta_k)\|_{2, \Pb}$ and $\|\Phi_k(r)^\intercal (\widehat\beta_{\text{scpi}, k, q} - \beta_k)\|_{2, \Pb}$, respectively. Adding the approximation error $\Delta_{k}(t)$ to these bounds would then yield bounds on the $L_2$ norm of $\mhat_{\rm sbc}(r; k) - m(r)$ and $\mhat_{\rm scpi}(r; k, q) - m(r)$.

Our bounds on the norm of $\widehat\beta_{\text{sbc}, k} - \beta_k$ and $\widehat\beta_{\text{scpi}, k, q} - \beta_k$ depend on the operator norm of Gram matrices involving first and second derivatives of the components of $\Phi_k$. To better quantify the order of magnitude of the bounds, we instantiate these matrices taking the cosine basis as an example. Let $\phi_0(u) = 1$, $\dot\phi_0(u) = \ddot\phi_0(u) = 0$ and, for $j \geq 1$, $\phi_j(u) = \sqrt{2} \cos(\pi j u)$, $\dot\phi_j(u) = - \sqrt{2} \pi j \sin(\pi j u)$ and $\ddot\phi_j(u) = -\sqrt{2}\pi^2 j^2 \cos(\pi j u)$. Define $\Phi_k(u) = \{\phi_0(u), \ldots, \phi_{k-1}(u)\}^\intercal$ and $\dot\Phi_k$ and $\ddot\Phi_k$ analogously. In this respect, assuming $r$ is uniformly distributed, we have
\begin{align*}
    & \int_0^1 \Phi_k(u)\Phi_k^\intercal(u) du = I_{k \times k} \implies \left\|Q_{\Phi, k}(r) \right\|_{\rm op} = 1, \\
    & \int_0^1 \dot\Phi_k(u)\dot\Phi_k(u)^\intercal du = \text{diag}(0, \pi^2, 4\pi^2, \ldots, (k-1)^2 \pi^2) \implies \left\|Q_{\dot\Phi, k}(r)\right\|_{\rm op} \asymp k^2, \\
    & \int_0^1 \ddot\Phi_k(u)\ddot\Phi_k(u)^\intercal du = \text{diag}(0, \pi^4, 16\pi^4, \ldots, (k-1)^4 \pi^4) \implies \left\|Q_{\ddot\Phi, k}(r)\right\|_{\rm op} \asymp k^4.
\end{align*}

We assume that the Gram matrix $Q_{\Phi,k}(r)$ is well-conditioned and invertible uniformly in $k$, which is commonly imposed in the sieve literature; see, for example, \cite{belloni2015some} and \cite{semenova2017debiased}.

\medskip

\begin{assumption}\label{assump:sieve_eigen}
The eigenvalues of $Q_{\Phi,k}(r)$ are bounded above and away from zero uniformly over $k$.
\end{assumption}

For the cosine basis, we have
\begin{align*}
\| Q_{\dot\Phi, k}(r)\|_{\rm op} = \left\| 2\pi^2 \begin{bmatrix} 0 & 0 \\ 0 & V \Lambda_{k}(r)  V \end{bmatrix} \right\|_{\rm op},
\end{align*}
where $V = \text{diag}(1, 2, \ldots, k-1)$ and $\Lambda_{k}$ has $(i, j)$-entry equal to $\E[\sin \{ \pi i r(X)\}\sin\{\pi j r(X)\}]$. Under Assumption \ref{main:assumption}, for any unit-norm $v \in \R^k$, we have
\begin{align*}
    & v^\intercal \Lambda_k(r) v = \int_0^1 \left\{\sum_{j =1}^k v_j \sin(\pi j u) \right\}^2 f_r(u) du \in \left[\frac{1}{2} \inf_u f_r(u), \ \frac{1}{2} \sup_u f_r(u)\right].
\end{align*}
so that $1 \lesssim \lambda_\min(\Lambda_k) \lesssim \lambda_\max(\Lambda_k) \lesssim 1$.
Thus, $ \| Q_{\dot\Phi, k}(r)\|_{\rm op} \asymp k^2$. By direct calculation, we also have $\|Q_{\ddot\Phi, k}(r)\|_{\rm op} \asymp k^4$.

In deriving the bounds below, we rely repeatedly on first and second order Taylor expansions applied component wise. Based on an integral representation of Taylor's remainders, the upper bound on the error then also depends on the operator norms of 
\begin{align*}
    & \widetilde{Q}_{\Phi,k}(r,\rhat) = \Pb\left\{\int_0^1 \Phi_k(\rhat + u(r - \rhat)) \Phi_k(\rhat + u(r - \rhat))^\intercal du \right\}\\
    & \widetilde{Q}_{\dot\Phi,k}(r,\rhat) = \Pb\left\{\int_0^1 \dot\Phi_k(\rhat + u(r - \rhat)) \dot\Phi_k(\rhat + u(r - \rhat))^\intercal du \right\} \\
    & \widetilde{Q}_{\ddot\Phi,k}(r,\rhat) = \Pb\left\{ \int_0^1 \ddot\Phi_k(\rhat + u(r - \rhat)) \ddot\Phi_k(\rhat + u(r - \rhat))^\intercal du \right\}.
\end{align*}
We leave the bounds with explicit dependence on the operator norms of the three matrices above. For many commonly employed bases, we expect these norms to be of orders $O(1)$, $O(k^2)$, and $O(k^4)$, respectively, under mild conditions. For example, these orders hold for the cosine basis if, conditional on $D^n$, the random variable
$S=r(X)+U\{\widehat r(X)-r(X)\}$, where $U\sim\operatorname{Unif}[0,1]$ is independent of $(X,D^n)$, admits a density that is uniformly bounded above and away from zero. Finally, following the notation in \cite{belloni2015some}, we define $\xi_k = \sup_{u} \|\Phi_k(u)\|_2$. We refer to their Section 3 for a description of available bounds on $\xi_k$ depending on the choice of the basis vector. When the basis is bounded, such as the cosine basis, one has $\xi_k \lesssim \sqrt{k}$. We also define $\eta_k = \sup_{u} \|\dot\Phi_k(u)\|_2$ and $\zeta_k = \sup_u\|\ddot\Phi_k(u)\|_2$. For example, for the cosine basis, we have $\eta_k \lesssim k^{3/2}$ and $\zeta_k \lesssim k^{5/2}$. 
\subsection{Influence function-based estimator of the finite-dimensional approximation}
We begin by bounding the error of $\widehat\beta_{\text{sbc}, k} = \{\Pn \varphi_2(Z, \rhat)\}^{-1} \Pn \varphi_1(Z,\rhat,\muhat)$, where
\begin{align*}
    & \varphi_1(Z,r,\mu) = \dot\Phi_k(r)(A - r) \mu + \Phi_k(r) Y, \\
    & \varphi_2(Z,r) = \left\{\dot\Phi_k(r) \Phi_k(r)^\intercal + \Phi_k(r) \dot\Phi_k(r)^\intercal\right\}(A - r) + \Phi_k(r)\Phi_k^\intercal(r).
\end{align*}
The estimator $\widehat\beta_{\text{sbc}, k}$ is the influence function-based estimator of $\beta_k$ when $k$ is fixed and the model for the data generating distribution is nonparametric. In the following proposition, we bound the norm of $\widehat\beta_{\text{sbc}, k} - \beta_k$, allowing for both the possibility of growing $k$ and of a misspecified model $\Phi_k^\intercal \beta$. 

\medskip
\begin{proposition}\label{proposition:bc_seive}
Suppose Assumptions \ref{main:assumption} and \ref{assump:sieve_eigen} hold and that $\|\dot\Phi_k^\intercal\beta_k\|_\infty \lesssim 1$ and $\|\ddot\Phi_k^\intercal \beta_k\|_\infty \lesssim 1$. Further assume that

\begin{enumerate}
\item 
$\begin{aligned}[t]
\log k \cdot \Big(
    \eta_k \xi_k + \xi_k^2 + \eta_k^2 
    + \xi_k^2 \|Q_{\dot\Phi,k}(r)\|_{\rm op}
\Big) = o(n).
\end{aligned}$

\item 
$\begin{aligned}[t]
&\|\widetilde{Q}_{\dot\Phi,k}(r,\rhat)\|_{\rm op}^{1/2}
\cdot \min\Big\{
    \|r-\rhat\|_{\infty}, \;
    \xi_k \|r-\rhat\|_{2,\Pb}
\Big\} \\
&\quad
+ \min\Big\{
    \|r-\rhat\|_\infty^2 \|\widetilde Q_{\dot\Phi,k}(r,\rhat)\|_{\rm op}, \;
    \eta_k^2 \|r-\rhat\|_{2,\Pb}^2
\Big\}
= o_{\Pb}(1).
\end{aligned}$

\item 
$\begin{aligned}[t]
&\|\widetilde{Q}_{\ddot\Phi,k}(r,\rhat)\|_{\rm op}^{1/2}
\cdot \min\Big\{
    \|r-\rhat\|_{\infty} \|Q_{\dot\Phi,k}(r)\|_{\rm op}^{1/2}, \;
    \eta_k \|r-\rhat\|_{2,\Pb}
\Big\} \\
&\quad
+ \min\Big\{
    \|r-\rhat\|_\infty^2 \|\widetilde Q_{\ddot\Phi,k}(r,\rhat)\|_{\rm op}, \;
    \zeta_k^2 \|r-\rhat\|_{2,\Pb}^2
\Big\}
= o_{\Pb}\big(\|Q_{\dot\Phi,k}(r)\|_{\rm op}\big).
\end{aligned}$
\item $\begin{aligned}
    & \min\left[
        \|\rhat - r\|_\infty^2 \{
            \|\widetilde{Q}_{\dot\Phi,k}(r,\rhat)\|_{\rm op}
            + 
              \|\widetilde{Q}_{\ddot\Phi,k}(r,\rhat)\|_{\rm op}^{1/2} \|\widetilde{Q}_{\Phi,k}(r,\rhat)\|_{\rm op}^{1/2}\},\right.  \\
        & \hphantom{\min\left[
        \right.} \quad \left. \vphantom{\min\left[
        \|\rhat - r\|_\infty^2 \{
            \|\widetilde{Q}_{\dot\Phi,k}(r,\rhat)\|_{\rm op}
            + 
              \|\widetilde{Q}_{\ddot\Phi,k}(r,\rhat)\|_{\rm op}^{1/2} \|\widetilde{Q}_{\Phi,k}(r,\rhat)\|_{\rm op}^{1/2}\},\right.} \|r-\rhat\|_{2,\Pb}^2(\eta_k^2+\xi_k\zeta_k)
    \right]
 = o_{\Pb}(1)
\end{aligned}$
\end{enumerate}

Then the following statements hold:
\begin{enumerate}
    \item The estimation error in $\beta$ is bounded as
\begin{align*}
    \|\widehat\beta_{\rm sbc, k} - \beta_k\|_2 = O_\Pb\left(\frac{\xi_k+\eta_k\sqrt{\E\{\rho^2(A-r)^2\}}}{\sqrt n}+ \|R_n\|_2\right);
\end{align*}
    \item For any $k$-dim unit vector $\alpha$, the estimator is approximately linear:
\begin{align*}
& \alpha^\intercal\left(\widehat\beta_{\rm sbc, k} - \beta_k \right)\\
&= \alpha^\intercal Q_{\Phi, k}(r)^{-1}
\underbrace{\Pn\left[\Phi_k(r)\left\{\Delta_k(r) + \epsilon - \dot\Phi_k(r)^\intercal\beta_k (A - r) + \rho\right\} + \dot\Phi_k(r)(A - r)\rho\right]}_{:= T_n}\\
& \hphantom{=} + \alpha^\intercal (R_n + S_n)
\end{align*}
\end{enumerate}
where $S_n := \left[ \{\Pn \varphi_2(Z, \rhat)\}^{-1} - Q_{\Phi, k}(r)^{-1}\right] T_n $ with
\begin{align*}
& \|S_n\|_2 \\
& = O_\Pb\left(\left\{\frac{\xi_k+\eta_k\sqrt{\E\{\rho^2(A-r)^2\}}}{\sqrt n} \right\} \right. \\
& \hphantom{O_\Pb\left(\left\{\right.\right.} \quad \cdot\left[\vphantom{\sqrt{\frac{\xi_k^2 \log k}{n}}} \min\left\{
        \|\rhat - r\|_\infty^2 
        \Big(
            \|\widetilde{Q}_{\dot\Phi,k}(r,\rhat)\|_{\rm op} 
            + \|\widetilde{Q}_{\Phi,k}(r,\rhat)\|_{\rm op}^{1/2}
              \|\widetilde{Q}_{\ddot\Phi,k}(r,\rhat)\|_{\rm op}^{1/2}\Big), \|r-\rhat\|_{2,\Pb}^2(\eta_k^2+\xi_k\zeta_k) \right\} \right. \\
& \hphantom{O_\Pb\left(\left\{\right.\right.} \left. \left. \qquad \ + \frac{(\xi_k \eta_k + \xi_k^2)\log k}{n} + \sqrt{\frac{\left\{\eta_k^2
    + \xi_k^2 (1 + \|Q_{\dot\Phi, k}(\rhat)\|_{\rm op})\right\}\log k}{n}} \right] \right) \\
& \|R_n\|_2=  O_{\Pb}\Bigg(
  \left\{\eta_k + \xi_k + \zeta_k \cdot \sup_{t_1, t_2} |\E(\rho^2\mid r = t_1,\rhat = t_2, D^n)|^{1/2}\right\} \cdot \frac{\|r - \rhat\|_{2,\Pb}}{\sqrt{n}} \\
    &\hphantom{\|R_n\|_2=O_{\Pb}\Bigg( } + \|r-\rhat\|_{4,\Pb}^2  + \|\widetilde Q_{\dot\Phi,k}(r,\rhat)\|_{\rm op}^{1/2}\|(r-\rhat)\Delta_k(r)\|_{2,\Pb} \\
&\hphantom{\|R_n\|_2=O_{\Pb}\Bigg( } + \| Q_{\dot\Phi,k}(r)\|_{\rm op}^{1/2}\left\{
    \|(r-\rhat)(\mu-\muhat)\|_{2,\Pb}
    + \|(r-\rhat)\Delta_k(\rhat)\|_{2,\Pb}
    + \|r-\rhat\|_{4,\Pb}^2 
\right\} \\
&\hphantom{\|R_n\|_2=  O_{\Pb}\Bigg(} + \frac{\eta_k}{\sqrt n}\left\{
    \|\mu-\muhat\|_{2,\Pb}
    + \|\Delta_k(\rhat)\|_{2,\Pb}
\right\}  + \|\widetilde{Q}_{\ddot\Phi,k}(r,\rhat)\|_{\rm op}^{1/2}
    \|(r-\rhat)^2\E(\rho\mid r,\rhat, D^n)\|_{2,\Pb}
\Bigg).
\end{align*}
\end{proposition}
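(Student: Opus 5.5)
The plan follows the usual Z-estimator template. Write $\widehat M = \Pn\varphi_2(Z,\rhat)$ and $\widehat b = \Pn\varphi_1(Z,\rhat,\muhat)$, so that
\begin{align*}
\widehat\beta_{\rm sbc,k} - \beta_k = \widehat M^{-1}\{\widehat b - \widehat M\beta_k\} = \widehat M^{-1}\Pn\{\varphi_1(Z,\rhat,\muhat) - \varphi_2(Z,\rhat)\beta_k\}.
\end{align*}
The first step is to expand the centered moment $\varphi_1(Z,\rhat,\muhat) - \varphi_2(Z,\rhat)\beta_k$ around the truth. We substitute $Y = \Phi_k(r)^\intercal\beta_k + \Delta_k(r) + \rho + \epsilon$ and $\mu = m(r)+\rho$, and Taylor expand $\Phi_k(\rhat)$ and $\dot\Phi_k(\rhat)$ around $r$ componentwise, using integral remainders; this is where $\widetilde Q_{\Phi,k}$, $\widetilde Q_{\dot\Phi,k}$ and $\widetilde Q_{\ddot\Phi,k}$ enter. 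The goal is to write the vector as $T_n + R_n'$, where $T_n$ is exactly the displayed oracle sum.

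For the remainder, the key cancellation is that the first-order terms in $(r-\rhat)$ cancel. The plug-in error $\Phi_k(\rhat)\{m'(\rhat)(r-\rhat)\}$ is matched against $\dot\Phi_k(\rhat)(A-\rhat)\muhat$ and $\Phi_k(\rhat)\dot\Phi_k(\rhat)^\intercal\beta_k(A-\rhat)$. Writing $A-\rhat = (A-r)+(r-\rhat)$, the $(A-r)$ pieces produce the oracle terms $-\Phi_k\dot\Phi_k^\intercal\beta_k(A-r)$ and $\dot\Phi_k(A-r)\rho$, plus mean-zero cross terms of size $(\eta_k+\xi_k+\zeta_k\sup\E(\rho^2\mid\cdot)^{1/2})\|r-\rhat\|_{2,\Pb}/\sqrt n$. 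These are bounded by conditional variance calculations given $D^n$, which is legitimate because of sample splitting. What remains are products such as $(r-\rhat)(\mu-\muhat)$, $(r-\rhat)^2$, $(r-\rhat)\Delta_k$ and $(r-\rhat)^2\E(\rho\mid r,\rhat,D^n)$. We bound each in $\ell_2$ by duality: for a unit vector $v$, $|v^\intercal\Pb\{\dot\Phi_k(\cdot)g\}| \le \|Q_{\dot\Phi}\|_{\rm op}^{1/2}\|g\|_{2,\Pb}$ (respectively with the $\widetilde Q$ versions when evaluated at intermediate points). For the $\rho$ terms we condition on $(r,\rhat,D^n)$ so that only $\E(\rho\mid r,\rhat,D^n)$ survives in the bias; the first-order $\rho$ term $\Phi_k(r)\rho$ is kept in $T_n$. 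The empirical-minus-population parts of these remainders contribute the terms with an extra $1/\sqrt n$. Together this gives the stated bound on $\|R_n\|_2$.

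The second step handles the matrix. We show $\|\widehat M - Q_{\Phi,k}(r)\|_{\rm op} = o_\Pb(1)$ and quantify it, splitting into three pieces:
\begin{itemize}
\item $\Pn\Phi_k(\rhat)\Phi_k(\rhat)^\intercal - \Pn\Phi_k(r)\Phi_k(r)^\intercal$, controlled by Taylor expansion and conditions 2--4, which yield the $\min\{\cdots\}$ terms via either $\sup$-norm with $\widetilde Q$ or $L_2$ with $\xi_k,\eta_k,\zeta_k$;
\item $\widehat Q_{\Phi,k}(r) - Q_{\Phi,k}(r)$, controlled by the matrix Bernstein (Rudelson) inequality at rate $\sqrt{\xi_k^2\log k/n}$;
\item the $(A-\rhat)$-weighted derivative term, which is centered plus a bias in $r-\rhat$; its centered part is handled by matrix Bernstein with envelope $\xi_k\eta_k$ and variance proxy $\eta_k^2+\xi_k^2\|Q_{\dot\Phi,k}\|_{\rm op}$.
\end{itemize}
Condition 1 guarantees these are $o(1)$. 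Then Assumption \ref{assump:sieve_eigen} and a Neumann series argument give $\|\widehat M^{-1}\|_{\rm op} = O_\Pb(1)$ and $\|\widehat M^{-1}-Q_{\Phi,k}^{-1}\|_{\rm op} \lesssim \|\widehat M - Q_{\Phi,k}\|_{\rm op}$.

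The third step assembles the result. We write $\widehat M^{-1}(T_n+R_n') = Q_{\Phi,k}^{-1}T_n + S_n + \widehat M^{-1}R_n'$ and absorb $\widehat M^{-1}$ into $R_n$ using the $O_\Pb(1)$ operator bound. It then suffices to bound $\E\|T_n\|_2^2 \lesssim \{\xi_k^2 + \eta_k^2\E\rho^2(A-r)^2\}/n$. This uses boundedness of $Y$ and $A$, $\|\dot\Phi_k^\intercal\beta_k\|_\infty\lesssim1$, $\E\{\Phi_k\Phi_k^\intercal\}$ bounded, and the fact that $T_n$ has mean zero because $\beta_k$ is the projection and $\E(\rho\mid r)=0$. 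Combining this with the matrix bound gives both $\|S_n\|_2$ and statement 1, and statement 2 follows directly from the decomposition.

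The main obstacle is the bookkeeping in the first step. It requires verifying exactly which first-order terms cancel, including the term $\dot\Phi_k(\rhat)(r-\rhat)\{\muhat - \Phi_k(\rhat)^\intercal\beta_k\}$ versus $\rho$ and $\Delta_k$, and choosing, for each second-order term, whether to use the $\sup$-norm/$\widetilde Q$ bound or the $L_2$/envelope bound. I would first organize this by writing every remainder as $\Pb[\text{vector basis at an intermediate point}\times\text{scalar product of errors}]$ and then applying the duality bound uniformly.
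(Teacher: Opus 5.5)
Your treatment of the moment vector matches the paper's proof: the expansion of $\Pn\{\varphi_1(Z,\rhat,\muhat)-\varphi_2(Z,\rhat)\beta_k\}$ into $T_n$ plus remainders (the paper's $R_1,\dots,R_6$), the duality bound (Lemma~\ref{lemma:projection}), and the bound $\E\|T_n\|_2^2\lesssim\{\xi_k^2+\eta_k^2\E\rho^2(A-r)^2\}/n$. The gap is in the matrix step, which is what produces the stated bound on $\|S_n\|_2$.

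You split $\widehat M-Q_{\Phi,k}(r)$ into three pieces: $\Pn\{\Phi_k(\rhat)\Phi_k(\rhat)^\intercal-\Phi_k(r)\Phi_k(r)^\intercal\}$, $(\Pn-\Pb)\Phi_k(r)\Phi_k(r)^\intercal$, and the $(A-\rhat)$-weighted derivative term. You then leave the derivative term's ``bias in $r-\rhat$'' unaccounted for. Both that bias and the first piece are first order in $r-\rhat$, each roughly of size $\|Q_{\dot\Phi,k}\|_{\rm op}^{1/2}\|r-\rhat\|_\infty$. Bounded separately, they can at best put a first-order term such as $\|\widetilde Q_{\dot\Phi,k}(r,\rhat)\|_{\rm op}^{1/2}\min\{\|r-\rhat\|_\infty,\xi_k\|r-\rhat\|_{2,\Pb}\}$ into the bracket. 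That may be enough for $\|\widehat M^{-1}\|_{\rm op}=O_\Pb(1)$ and statement 1. It does not give the displayed $\|S_n\|_2$ bound, whose bracket contains only the second-order quantity $\min\{\|\rhat-r\|_\infty^2(\cdots),\|r-\rhat\|_{2,\Pb}^2(\eta_k^2+\xi_k\zeta_k)\}$ plus sampling terms.

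This is not cosmetic. For the cosine basis with $k\asymp n^{1/7}$ and $\rho\neq0$, the second-order form is what yields the $o_\Pb(n^{-5/28})$ sup-norm requirement in Table~\ref{tab:sieve_rates}. A first-order bracket would force a faster first-stage rate, e.g.\ $o_\Pb(n^{-3/14})$.

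The missing idea is that $\varphi_2$ debiases the Gram matrix exactly as $\varphi_1$ debiases the moment vector. Write $\widehat M-Q_{\Phi,k}(r)=(\Pn-\Pb)\varphi_2(Z,\rhat)+\Pb\{\varphi_2(Z,\rhat)-\varphi_2(Z,r)\}$, using $\Pb\varphi_2(Z,r)=Q_{\Phi,k}(r)$ because $\E(A-r\mid X)=0$.

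For the bias piece, sample splitting gives $\E(A-\rhat\mid X,D^n)=r-\rhat$. With $G(s)=\Phi_k(s)\Phi_k(s)^\intercal$, the bias is $\Pb\{G(\rhat)+G'(\rhat)(r-\rhat)-G(r)\}$. This is a pure second-order Taylor remainder involving $G''=2\dot\Phi_k\dot\Phi_k^\intercal+\Phi_k\ddot\Phi_k^\intercal+\ddot\Phi_k\Phi_k^\intercal$. Bounding it in sup norm through the $\widetilde Q$ matrices, or in $L_2$ through $\eta_k^2+\xi_k\zeta_k$, gives exactly the $\min$ term, which is what condition 4 controls.

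For the centered piece, apply matrix Bernstein conditionally on $D^n$ directly to $\Phi_k(\rhat)\Phi_k(\rhat)^\intercal$ and to $\{\dot\Phi_k\Phi_k^\intercal+\Phi_k\dot\Phi_k^\intercal\}(\rhat)(A-\rhat)$, rather than to the version at $r$. The variance proxies are then $\|Q_{\Phi,k}(\rhat)\|_{\rm op}$ and $\eta_k^2\|Q_{\Phi,k}(\rhat)\|_{\rm op}+\xi_k^2\|Q_{\dot\Phi,k}(\rhat)\|_{\rm op}$. This is why $\|Q_{\dot\Phi,k}(\rhat)\|_{\rm op}$ appears in the $S_n$ bound.

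Conditions 2 and 3 are then used only to compare $Q_{\Phi,k}(\rhat)$ and $Q_{\dot\Phi,k}(\rhat)$ with their counterparts at $r$ (via Lemma~\ref{lemma:op_bound_basis}). They are not used to bound a first-order Gram-matrix difference.
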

Before describing the rate and its implications for inference in more detail, we briefly unpack the four conditions needed in the statement. We do so by introducing the following two simplifying assumptions.

\medskip

\begin{assumption} \label{ass:cosine1}
    The following bounds hold: 
    \begin{itemize}
        \item $\xi_k \lesssim \sqrt{k}$, $\eta_k \lesssim k^{3/2}$ and $\zeta_k\lesssim k^{5/2}$;
        \item $\|Q_{\Phi, k}(r)\|_{\rm op} \asymp 1$ and $\|Q_{\dot\Phi, k}(r) \|_{\rm op} \asymp k^2$.
    \end{itemize}
\end{assumption}
\medskip

\begin{assumption}\label{ass:cosine2}
    The following bounds hold:
    \begin{itemize}
        \item $\|\widetilde{Q}_{\Phi,k}(r,\rhat)\|_{\rm op} \ \lesssim 1$, $\|\widetilde{Q}_{\dot\Phi,k}(r,\rhat)\|_{\rm op} \ \lesssim k^2$ and $\|\widetilde{Q}_{\ddot\Phi,k}(r,\rhat)\|_{\rm op} \ \lesssim k^4$.
    \end{itemize}
\end{assumption}

Under Assumption \ref{ass:cosine1}, the first condition reduces to $k^3 (\log k) / n \to 0$. Under Assumptions \ref{ass:cosine1} and \ref{ass:cosine2}, the second, third and fourth conditions reduce to 
\begin{align}\label{eq:rrhat}
    k^2 \min\{\|\rhat - r\|^2_\infty, \ k \|\rhat - r\|_{2, \Pb}^2\} = o_\Pb(1).
\end{align}

Next, based on Proposition \ref{proposition:bc_seive}, we derive sufficient conditions for inference. We introduce the following assumption, which, analogously to Assumption \ref{ass:muhat}, is made solely to ease the exposition.

\medskip

\begin{assumption}\label{ass:muhat2}
It holds that $\|(r-\rhat)(\muhat - \mu)\|_{2,\Pb} \ \lesssim \|r-\rhat\|_{4,\Pb}^2 $ and $\|\mu-\muhat\|_{2,\Pb}\lesssim\|r-\rhat\|_{2,\Pb}$ 
\end{assumption}
The next assumption, akin to Assumption \ref{ass:rho}, plays a more important role and ensures that the nuisance bias is of smaller order even when $\rho$ is nonzero.

\medskip

\begin{assumption}\label{ass:rho2}
The following bounds hold:
\begin{itemize}
    \item $\|\E(\rho \mid \rhat,  D^n)\|_{2,\Pb}\ \lesssim\|\E(\rho \mid \rhat, r, D^n)\|_{2,\Pb}\ \lesssim \|\rhat - r\|_{2,\Pb}$;
    \item $\|(r-\rhat)\E(\rho\mid r,\rhat, D^n)\|_{2,\Pb}\lesssim\|r-\rhat\|_{4,\Pb}^2 $.
\end{itemize}
\end{assumption}
In the results below, we assume that the choice of the basis is appropriate for approximating $m(t)$. 

\begin{assumption}\label{ass:approximation_error_basis}
It holds that $\|\Delta_k\|_\infty \lesssim k^{-s}$.
\end{assumption}
Assumption \ref{ass:approximation_error_basis} holds, for example, when $m(r)$ is H\"{o}lder smooth of order $s$ and $\Phi(r)$ denotes an appropriate basis, such as wavelet or B-spline series (see, e.g., Section 3 in \cite{belloni2015some} and Appendix A in \cite{mcgrath2026nuisance}). Under Assumptions \ref{ass:cosine1}---\ref{ass:approximation_error_basis}, we have
\begin{align*}
    \|R_n\|_2 = O_{\Pb}\left(\sqrt{\frac{k^5}{n}}\|r-\rhat\|_{2,\Pb}+\sqrt{\frac{k^3}{n}}k^{-s}+k\|r-\rhat\|_{4,\Pb}^2 + \frac{\|r-\rhat\|_{2,\Pb}}{k^{s-1}}+k^2\|r-\rhat\|_\infty\|r-\rhat\|_{4,\Pb}^2\right)
\end{align*}
If $s = 2$, choosing $k\asymp n^{1/7}$ yields that $\|\widehat{m}_{\rm sbc, k} - m\|_{2, \Pb} = O_\Pb(n^{-2/7})$ if $\|r-\rhat\|_{4,\Pb} \ = O_{\Pb}(n^{-3/14})$ and $\|r-\rhat\|_\infty\|\rhat - r\|^2_{4, \Pb} = O_\Pb(n^{-4/7})$. If $\rho = 0$, then, with $k \asymp n^{1/5}$, $\|\widehat{m}_{\rm sbc, k} - m\|_{2, \Pb} = O_\Pb(n^{-2/5})$ if $\|\rhat -r \|_{4, \Pb} = O_\Pb(n^{-3/10}).$ Furthermore, define
\begin{align*}
& \Omega =  Q^{-1}_{\Phi,k}(r) \E\left[\left\{\Phi_k(r) \delta_1 + \dot\Phi_k(r)\delta_2\}\{\Phi_k(r) \delta_1 + \dot\Phi_k(r)\delta_2\right\}^\intercal\right]  Q^{-1}_{\Phi,k}(r), \text{ where } \\
& \delta_1 = \Delta_k(r) + \epsilon + \rho - \dot\Phi_k(r)^\intercal\beta_k (A - r) \quad \text{ and } \quad \delta_2 = (A-r)\rho.
\end{align*}
Under the condition that 
\begin{align*}
    |\Phi_k(t)^\intercal (S_n + R_n)| = o_\Pb\left(\sqrt{\frac{\Phi_k(t)^\intercal \Omega \Phi_k(t)}{n}}\right),
\end{align*}
we have
\begin{align*}
     \frac{\sqrt n\{\Phi_k(t)^\intercal(\widehat{\beta}_{\rm sbc,k}-\beta_k)\}}{\sqrt{\Phi_k(t)^\intercal\Omega\Phi_k(t)}}\xrightarrow{d} N\left(0,1 \right).
    \end{align*}
In the most likely scenario that $\rho \neq 0$, it is reasonable to expect that $\Phi_k(t)^\intercal \Omega \Phi_k(t) \asymp k^3$ since the term $\E\{\dot\Phi_k \dot\Phi_k^\intercal (A - r)^2 \rho^2\}$ would dominate. If this is the case, a sufficient condition for the remainder negligibility is that $\|S_n\|_2 + \|R_n\|_2 = o_\Pb(k / \sqrt{n})$. With $s = 2$ and $k \asymp n^{1/7}$, this condition would hold if $\|r - \rhat\|_{4, \Pb} = o_\Pb(n^{-1/4})$ and $\|r-\rhat\|_\infty = o_{\Pb}(n^{-5/28})$. If $\rho = 0$, sufficient conditions to ensure asymptotic normality (at rate $n^{-2/5}$) are $\|\rhat - r\|_{4, \Pb} = o_\Pb(n^{-7/20})$ and $\|\rhat - r\|_\infty = o_\Pb(n^{-1/4})$.
\medskip

 \begin{remark}
    Suppose that one is interested in specifying a finite-dimensional approximation to $m(r)$ so that $k$ is treated as a fixed constant. Then, since $\widehat\beta_{\rm sbc, k}$ solves $\Pn(\varphi_1(Z, \rhat, \muhat) - \varphi_2(Z, \rhat)\widehat\beta_{\rm sbc, k}) = 0$, we have
    \begin{align*}
  \|\Phi_k(r)^\intercal (\widehat\beta_{\rm sbc, k} - \beta_k)\|_{2,\Pb} = O_\Pb\left(\frac{1}{\sqrt{n}} + \|\rhat - r\|_{2, \Pb}\|\muhat - \mu\|_{2, \Pb} + \|r-\rhat\|_{2,\Pb}^2  \right),
    \end{align*}
    which involves only second-order nuisance errors. This is expected since, for finite $k$, $\beta_k$ is a pathwise differentiable parameter. Furthermore, by sample splitting and the conditions that $\|\muhat - \mu\|_{2, \Pb} = o_\Pb(1)$ and
    \begin{align*}
        \|\rhat - r\|_{2, \Pb}\|\muhat - \mu\|_{2, \Pb} + \|r-\rhat\|_{2,\Pb}^2  = o_{\Pb}(n^{-1/2}),
    \end{align*}
    we have 
    \begin{align*}
        \sqrt{n}(\widehat{\beta}_{\rm sbc,k} - \beta_k) \indist N\left(0, Q_{\Phi,k}(r)^{-1}\var\{\varphi_1(Z, r) - \varphi_2(Z, r) \beta_k\}Q_{\Phi,k}(r)^{-1}\right).
    \end{align*}
\end{remark}
\subsection{Direct correction of plug-in bias}
In this section, we study a correction to a plug-in estimator based on sieves that directly subtracts off the term involving the first order bias $m'(\rhat)(r-\rhat)$. We view this estimator as the sieve-based counterpart to $\mhat_{\text{lcpi}}(t; h)$ (Section \ref{sec:dpi_smoothing}):
\begin{align*}
\widehat\beta_{\text{scpi}, k, q} = \widehat\beta_{\text{spi}, k} - \frac{1}{n(n-1)}\mathop{\sum\sum}_{1 \leq i \neq j \leq n} \dot\Phi_q(\rhat_j)^\intercal \widehat{Q}^{-1}_{\Phi, q, -j}(\rhat)\Phi_q(\rhat_i) Y_i \times \widehat{Q}_{\Phi, k}^{-1} \Phi_k(\rhat_j)(A_j - \rhat_j)
\end{align*}
Here, both $k$ and $q$ index the dimension of the approximating sieve space. In particular, the derivative $m'(\rhat)$ is estimated by first projecting $Y$ onto the $q$-dimensional space spanned by $\Phi_q(\rhat)$ and then computing $\mhat_q'(\rhat) = \dot\Phi_q(\rhat)^\intercal \widehat\beta_{\text{spi}, q}$. Notice that (modulo the removal of observation $j$ in computing the derivative estimate), summing over $i$ in the double sum above returns exactly $n^{-1} \sum_{j=1}^n \mhat_q'(\rhat_j) \times \widehat{Q}^{-1}_{\Phi, k}(\rhat) \Phi_k(\rhat_j)(A_j - \rhat_j)$.
We derive the following proposition bounding the error of $\widehat\beta_{\text{scpi}, k, q}$, which is analogous to Proposition \ref{proposition:cpi_local}.
\medskip
\begin{proposition}
    \label{proposition:cpi_sieve}
    Suppose Assumptions \ref{main:assumption} and \ref{assump:sieve_eigen} hold and that $\|\dot\Phi_q^\intercal\beta_q\|_\infty \lesssim 1$ and $\|\ddot\Phi_q^\intercal \beta_q\|_\infty \lesssim 1$. Further assume that
    \begin{enumerate}
        \item $\xi_k^2\log k+ \xi_q^2\log q = o(n)$
        \item$\begin{aligned}[t]
&\|\widetilde{Q}_{\dot\Phi,k}(r,\rhat)\|_{\rm op}^{1/2}
\cdot \min\Big\{
    \|r-\rhat\|_{\infty}, \;
    \xi_k \|r-\rhat\|_{2,\Pb}
\Big\} \\
&\quad
+ \min\Big\{
    \|r-\rhat\|_\infty^2 \|\widetilde Q_{\dot\Phi,k}(r,\rhat)\|_{\rm op}, \;
    \eta_k^2 \|r-\rhat\|_{2,\Pb}^2
\Big\}
= o_{\Pb}(1).
\end{aligned}$
\item 
$\begin{aligned}[t]
&\|\widetilde{Q}_{\dot\Phi,q}(r,\rhat)\|_{\rm op}^{1/2}
\cdot \min\Big\{
    \|r-\rhat\|_{\infty}, \;
    \xi_q \|r-\rhat\|_{2,\Pb}
\Big\} \\
&\quad
+ \min\Big\{
    \|r-\rhat\|_\infty^2 \|\widetilde Q_{\dot\Phi,q}(r,\rhat)\|_{\rm op}, \;
    \eta_q^2 \|r-\rhat\|_{2,\Pb}^2
\Big\}
= o_{\Pb}(1).
\end{aligned}$

\item 
$\begin{aligned}[t]
&\|\widetilde{Q}_{\ddot\Phi,q}(r,\rhat)\|_{\rm op}^{1/2}
\cdot \min\Big\{
    \|r-\rhat\|_{\infty} \|Q_{\dot\Phi,q}(r)\|_{\rm op}^{1/2}, \;
    \eta_q \|r-\rhat\|_{2,\Pb}
\Big\} \\
&\quad
+ \min\Big\{
    \|r-\rhat\|_\infty^2 \|\widetilde Q_{\ddot\Phi,q}(r,\rhat)\|_{\rm op}, \;
    \zeta_q^2 \|r-\rhat\|_{2,\Pb}^2
\Big\}
= o_{\Pb}\big(\|Q_{\dot\Phi,q}(r)\|_{\rm op}\big).
\end{aligned}$

    \end{enumerate}

Then the following statements hold:
\begin{enumerate}
    \item The estimation error in $\beta$ is bounded as
\begin{align*}
    \|\widehat\beta_{\rm scpi,k,q} - \beta_k\|_2 = O_\Pb\left(\frac{\xi_k}{\sqrt n}+ \|R_n\|_2\right);
\end{align*}
    \item For any unit vector $\alpha \in \R^k$, the estimator is approximately linear:
    \begin{align*}
\alpha^\intercal(\widehat\beta_{\rm scpi, k, q} - \beta_k) & = \alpha^\intercal Q_{\Phi, k}(r)^{-1} \underbrace{\Pn \left[ \Phi_k(r)\left\{\Delta_k(r) + \dot\Phi_q(r)^\intercal\beta_q(r - A) + \epsilon + \rho\right\} \right]}_{:= T_n} \\
& \hphantom{=} \quad + \alpha^\intercal (S_n + R_n),
\end{align*}
\end{enumerate}
where $S_n := \{\widehat{Q}_{\Phi, k}(\rhat)^{-1} - Q_{\Phi, k}(r)^{-1}\} T_n$ and 
\begin{align*}
\|S_n\|_2 & = \ O_\Pb\left(\frac{\xi_k}{\sqrt{n}} \cdot \left\{\sqrt{\frac{\xi_k^2 \log k}{n}} +\|\widetilde{Q}_{\dot\Phi,k}(r,\rhat)\|_{\rm op}^{1/2}\cdot\min\left\{\|r-\rhat\|_{\infty}\|Q_{\Phi,k}(r)\|_{\rm op}^{1/2}, \xi_k\|r-\rhat\|_{2,\Pb}\right\} \right. \right. \\
& \qquad \left. \left. \qquad\qquad\qquad+\min\left\{\|r-\rhat\|^2_\infty \|\widetilde Q_{\dot\Phi,k}(r,\rhat)\|_{\rm op},\eta_k^2\|r-\rhat\|_{2,\Pb}^2\right\}\vphantom{\sqrt{\frac{\xi_k^2 \log k}{n}}} \right\} \right) \\
\|R_n\|_2 & = \ O_{\Pb}\Bigg(
\frac{\xi_k + \eta_k}{\sqrt n}\|r-\rhat\|_{2,\Pb} \\
& \hphantom{ = \quad O_{\Pb}\Bigg[\,} + \|\widetilde Q_{\dot\Phi,k}(r,\rhat)\|_{\rm op}^{1/2}\|(r-\rhat)\left\{\Delta_k(r)+\E(\rho|r,\rhat,D^n)\right\}\|_{2,\Pb} \\
&\hphantom{ = \quad O_{\Pb}\Bigg[\,}
+ \left(1 + \frac{\xi_k}{\sqrt n} \right) \left\{\|(r-\rhat)\dot\Delta_q(\rhat)\|_{2,\Pb}+\|\dot\Delta_k\|_\infty\|r-\rhat\|_{2,\Pb}+\|(r-\rhat)^2\|_{2,\Pb}\right\}\\
&\hphantom{ = \quad O_{\Pb}\Bigg[\,}
+ \frac{\xi_k \eta_q}{n} + \frac{\xi_k}{\sqrt{n}} \cdot \|Q_{\dot\Phi, q}(\rhat)\|^{1/2}_{\rm op} \cdot \{\|\Delta_q(\rhat)\|_{2, \Pb} + \|\rhat - r\|_{2, \Pb} + \|E(\rho \mid \rhat, D^n)\|_{2, \Pb}\}\\
&\hphantom{ = \quad O_{\Pb}\Bigg[\,}
+ \frac{\xi_q}{\sqrt{n}} \cdot \min\left\{\|Q_{\dot\Phi, q}(\rhat)\|^{1/2}_{\rm op} \|\rhat - r\|_\infty, \eta_q \|\rhat - r\|_{2, \Pb}\right\}
\\
&\hphantom{ = \quad O_{\Pb}\Bigg[\,}
+ \min \left\{ \|\rhat - r\|_\infty \|Q_{\dot\Phi, q}(\rhat)\|^{1/2}_{\rm op}, \|\rhat - r\|_{2, \Pb} \eta_q\right\} \\
& \left. \vphantom{\frac{\xi_k}{\sqrt{n}}} \hphantom{ = \quad O_{\Pb}\Bigg[\,} \qquad \qquad \cdot \{\|\Delta_q(\rhat)\|_{2, \Pb} + \|\rhat - r\|_{2, \Pb} + \|E(\rho \mid \rhat, D^n)\|_{2, \Pb}\} \right)
\end{align*}
\end{proposition}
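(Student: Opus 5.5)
We decompose the error of the plug-in part first, using \eqref{eq:plugin_expansion_sieves}:
\begin{align*}
\widehat\beta_{\text{spi},k}-\beta_k=\widehat Q^{-1}_{\Phi,k}(\rhat)\,\Pn\Big[\Phi_k(\rhat)\Big\{m'(\rhat)(r-\rhat)+\rho+\Delta_k(\rhat)+\tfrac12 m''(\overline r)(r-\rhat)^2+\epsilon\Big\}\Big].
\end{align*}
Next we write the $U$-statistic correction as
\begin{align*}
\widehat Q^{-1}_{\Phi,k}(\rhat)\,\frac{1}{n}\sum_j \widehat m'_{q,-j}(\rhat_j)\,\Phi_k(\rhat_j)(A_j-\rhat_j),
\end{align*}
and split $A_j-\rhat_j=(A_j-r_j)+(r_j-\rhat_j)$. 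The piece $\Pn[\Phi_k(\rhat)\,m'(\rhat)(r-\rhat)]$ cancels against the piece $\Pn[\Phi_k(\rhat)\,\widehat m'_q(\rhat)(r-\rhat)]$ up to the product remainder $\Pn[\Phi_k(\rhat)\{m'(\rhat)-\widehat m'_q(\rhat)\}(r-\rhat)]$. What survives to leading order is $-\Pn[\Phi_k(r)\,\dot\Phi_q(r)^\intercal\beta_q(A-r)]$, together with $\Phi_k(r)\{\Delta_k(r)+\epsilon+\rho\}$. These make up $T_n$ once every $\rhat$ inside $\Phi_k(\cdot)$ and $\dot\Phi_q(\cdot)$ is replaced by $r$. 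The term $S_n$ then comes directly from replacing $\widehat Q_{\Phi,k}(\rhat)^{-1}$ by $Q_{\Phi,k}(r)^{-1}$. To bound $\|S_n\|_2$ we proceed as follows. A matrix Bernstein (Rudelson) bound gives $\|\widehat Q_{\Phi,k}(\rhat)-Q_{\Phi,k}(\rhat)\|_{\rm op}=O_\Pb(\sqrt{\xi_k^2\log k/n})$ conditionally on $D^n$. An integral-form Taylor expansion gives $\|Q_{\Phi,k}(\rhat)-Q_{\Phi,k}(r)\|_{\rm op}$, bounded by the min of the $\widetilde Q_{\dot\Phi,k}$ terms in condition 2. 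Assumption \ref{assump:sieve_eigen} then makes the inverses stable. Finally $\|T_n\|_2=O_\Pb(\xi_k/\sqrt n)$ by a second-moment calculation. Part 1 follows from part 2 by taking the norm.

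The remainder $R_n$ is a sum of several terms, which we bound in turn.
\begin{enumerate}[label=(\roman*)]
\item \emph{Replacing $\rhat$ by $r$ inside $\Phi_k$ for the mean-zero terms $\epsilon$, $(A-r)$ and $\rho-\E(\rho\mid r,\rhat,D^n)$.} These are centered given $D^n$, so each is a sample mean with variance bounded by $\eta_k^2\|r-\rhat\|^2_{2,\Pb}/n$ via the mean value theorem. This produces the $(\xi_k+\eta_k)\|r-\rhat\|_{2,\Pb}/\sqrt n$ term.
\item \emph{The non-centered pieces $\Phi_k(\rhat)-\Phi_k(r)$ multiplied by $\Delta_k(r)$ and $\E(\rho\mid r,\rhat,D^n)$.} These are bounded by Cauchy--Schwarz with $\widetilde Q_{\dot\Phi,k}$, using the Taylor integral representation.
\item \emph{The Taylor remainder and the derivative-approximation error.} Write $m'(\rhat)-\dot\Phi_q(\rhat)^\intercal\beta_q=\dot\Delta_q(\rhat)$. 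This yields the $\|(r-\rhat)\dot\Delta_q\|$ and $\|(r-\rhat)^2\|$ terms. The factor $1+\xi_k/\sqrt n$ accounts for empirical versus population norms. The $\|\dot\Delta_k\|_\infty$ term arises when $\Delta_k(\rhat)$ is expanded around $r$.
\item \emph{The estimation error of the derivative estimate.} We expand $\widehat\beta_{\text{spi},q,-j}-\beta_q$ using the plug-in expansion at dimension $q$, with conditions 3 and 4 controlling $\widehat Q_{\Phi,q}$. This gives a linear term $Q^{-1}_{\Phi,q}\Pn[\Phi_q\{\epsilon+\rho+\Delta_q+m'(r-\rhat)\}]$ plus a higher-order term.
\end{enumerate}

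Plugging the expansion from (iv) into the $U$-statistic yields a degenerate second-order part and a projection part. The degenerate part, whose kernel is $\dot\Phi_q(\rhat_j)^\intercal Q^{-1}_{\Phi,q}\Phi_q(\rhat_i)\epsilon_i\,\Phi_k(\rhat_j)(A_j-\rhat_j)$ minus its Hoeffding projections, has variance $O(\xi_k^2\eta_q^2/n^2)$. This gives $\xi_k\eta_q/n$, and leave-one-out is exactly what removes the diagonal bias. The projection onto $j$ has $\E[\widehat\beta_q-\beta_q\mid D^n]$-type bias of size $\|\Delta_q\|_{2,\Pb}+\|\rhat-r\|_{2,\Pb}+\|\E(\rho\mid\rhat,D^n)\|_{2,\Pb}$, multiplied by $\|Q_{\dot\Phi,q}(\rhat)\|^{1/2}_{\rm op}$. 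It carries the factor $\xi_k/\sqrt n$ when paired with the centered $(A_j-r_j)$, and the factor $\min\{\|\rhat-r\|_\infty\|Q_{\dot\Phi,q}\|^{1/2}_{\rm op},\eta_q\|\rhat-r\|_{2,\Pb}\}$ when paired with $(r_j-\rhat_j)$. The projection onto $i$ gives the $\xi_q/\sqrt n$ term.

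I expect the main obstacle to be this last step. It requires a careful Hoeffding decomposition of a $U$-statistic whose kernel contains the random leave-one-out inverse Gram $\widehat Q^{-1}_{\Phi,q,-j}$. I would first replace $\widehat Q^{-1}_{\Phi,q,-j}$ by $Q^{-1}_{\Phi,q}(\rhat)$, controlling the difference through the matrix concentration above; the eigenvalue bounds make this difference negligible. Only then would I decompose the resulting $U$-statistic, working conditionally on $D^n$ so that $\rhat$ is a fixed function.
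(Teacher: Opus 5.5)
Your decomposition follows the paper's proof almost step for step. The shared pieces are the plug-in expansion and the exact cancellation of the $\Phi_q(\rhat_i)^\intercal\beta_q$ part through the leave-one-out least-squares identity. The terms $R_1$--$R_3$ come from integral Taylor expansions and Lemma~\ref{lemma:projection}, and $S_n$ from Lemma~\ref{lemma:op_bound_basis}. The Hoeffding decomposition gives $\xi_k\eta_q/n$, the $V_j$/$\overline U$ terms, and the $\xi_q n^{-1/2}\min\{\cdot\}$ term.

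\textbf{The gap.} The step that does not go through as described is the one you flag: replacing $\widehat Q^{-1}_{\Phi,q,-j}(\rhat)$ by $Q_{\Phi,q}(\rhat)^{-1}$ before decomposing. Set $M_j=\widehat Q^{-1}_{\Phi,q,-j}(\rhat)-Q_{\Phi,q}(\rhat)^{-1}$ and $b_{-j}=(n-1)^{-1}\sum_{i\neq j}\Phi_q(\rhat_i)w_i$, with $w=\Delta_q(\rhat)+m'(\overline r)(r-\rhat)+\rho+\epsilon$. The replacement error is $\widehat Q_{\Phi,k}(\rhat)^{-1}n^{-1}\sum_j\Phi_k(\rhat_j)(A_j-\rhat_j)\dot\Phi_q(\rhat_j)^\intercal M_j b_{-j}$. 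Operator-norm concentration plus eigenvalue bounds give at best
\[
O_\Pb\Big(\sqrt{\xi_q^2\log q/n}\;\|Q_{\dot\Phi,q}(\rhat)\|_{\rm op}^{1/2}\big(\xi_q/\sqrt n+\|s\|_{2,\Pb}\big)\Big),
\]
where $s$ is the conditional mean of $w$ given $(\rhat,D^n)$.

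This is not covered by the stated $R_n$. Compared with the terms $\xi_k\eta_q/n$ and $\xi_k n^{-1/2}\|Q_{\dot\Phi,q}(\rhat)\|^{1/2}_{\rm op}\|s\|_{2,\Pb}$, it carries $\xi_q\sqrt{\log q}$ in place of $\xi_k$. That is already a $\sqrt{\log k}$ loss for the cosine basis with $q\asymp k$, and worse when $q\gg k$. It is absorbed by the $\min\{\cdot\}$ terms only when the first-stage error is not too small. The missing $\xi_k/\sqrt n$ factor should come from the centred part $A_j-r_j$. However, $b_{-j}$ contains $\epsilon_l$ for $l\neq j$, which may be correlated with $A_l$ given $X_l$. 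Extracting that factor for the difference term therefore needs a second $U$-statistic argument conditional on the design, which is exactly the analysis the replacement was meant to avoid.

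\textbf{The fix.} No replacement is needed if you condition on $\rhat^n=(\rhat(X_1),\dots,\rhat(X_n))$ together with $D^n$. Then $\widehat Q_{\Phi,q,-j}(\rhat)$ and $\widehat Q_{\Phi,k}(\rhat)$ are fixed matrices, and the units $(A_i,X_i,\epsilon_i)$ stay conditionally independent. The kernel $T_{ij}$, with the random Gram inverse inside, can then be Hoeffding-decomposed directly into $U_{ij},V_j,W_i,\overline U$ as in the paper. Only two inputs are needed:
\begin{enumerate}
\item the $\rhat^n$-measurable event $\min_j\lambda_{\min}(\widehat Q_{\Phi,q,-j})\gtrsim 1$, which follows from conditions 1 and 3;
\item the empirical-projection inequality $\|B_{-j,n}\|_2^2\lesssim (n-1)^{-1}\sum_{i\neq j}s_i^2$ for the least-squares coefficient $B_{-j,n}=\widehat Q^{-1}_{\Phi,q,-j}(n-1)^{-1}\sum_{i\neq j}\Phi_q(\rhat_i)s_i$.
\end{enumerate}
The second input controls the projection terms $V_j$ and $\overline U$ with no concentration loss.
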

The interpretation of the conditions and the result of Proposition \ref{proposition:cpi_sieve} is similar to that of Proposition \ref{proposition:cpi_local}. For the corrected plug-in estimator, to further simplify the discussion for the rates, we impose the following assumption.
\medskip
\begin{assumption}\label{ass:first-step error} The first-step error satisfies that:
\begin{itemize}
    \item $\|r-\rhat\|^2_\infty\lesssim \min\{k,q\}\cdot\|r-\rhat\|_{2,\Pb}^2$
\end{itemize}
\end{assumption}
The purpose of writing the bound on $\|R_n\|_2$ above in terms of $\min\{\|\rhat - r\|_\infty, \ \sqrt q \|\rhat - r\|_{2, \Pb}\}$ is because $L_2$ error rates are available for many function classes and, depending on the function class where $r$ reside, can be much faster than $L_\infty$ rates; for H\"{o}lder-smooth classes, however, the two rates match up to $\log n$ terms \parencite{tsybakov2008introduction}. For example, if $k\asymp n^\alpha$ and $q\asymp n ^{\beta}$ for any fixed $\alpha,\beta>0$, Assumption~\ref{ass:first-step error} is satisfied when r is H\"{o}lder-smooth and estimated in a minimax optimal sense (e.g., by using local polynomials). In particular, under Assumptions \ref{ass:cosine1}, \ref{ass:cosine2} and \ref{ass:first-step error}, the four conditions simplify to, for $v = \max(k, q)$:
\begin{align*}
v \log(v) / n \to 0 \quad \text{ and } \quad v\|r - \rhat\|_\infty = o_\Pb(1).
\end{align*} 
Under Assumption \ref{ass:cosine1}--\ref{ass:first-step error}, and if $\|\dot\Delta_l\|_\infty \lesssim l^{-s + 1}$ for $l \in \{k, q\}$, we have
\begin{align*}
\|R_n\|_2 & = \ O_{\Pb}\Bigg(
\sqrt{\frac{k^3}{n}}\|r-\rhat\|_{2,\Pb} + \frac{\|\rhat - r\|_{2, \Pb}}{k^{s-1}} + \frac{\|\rhat - r\|_{2, \Pb}}{q^{s-1}} + k\|\rhat - r\|_{4, \Pb}^2 + \frac{\sqrt{k q^3}}{n} \\
&\hphantom{ = \quad O_{\Pb}\Bigg[\,}
\left. + \sqrt{\frac{k q^2}{n}}(q^{-s} + \|\rhat - r\|_{2, \Pb}) + q\|\rhat - r\|_\infty\left(\sqrt{\frac{q}{n}}  + q^{-s} + \|\rhat - r\|_{2, \Pb}\right) \right).
\end{align*}
Therefore, if $s = 2$, $k \asymp n^{1/5}$ and one could ensure $q_* \asymp \min\{k, \|r - \rhat\|_{2,\Pb}^{-1/2}\}$, then we have
\begin{align*}
\|R_n\|_2
=
O_{\Pb}\Bigg(
&\frac{\|\widehat r-r\|_{2,\Pb}}{k}
+k\|\widehat r-r\|_{4,\Pb}^2
+\frac{k^2}{n} +\frac{q_*^{3/2}}{\sqrt n}
\|\widehat r-r\|_\infty
+q_*
\|\widehat r-r\|_\infty
\|\widehat r-r\|_{2,\Pb} \\
&+\|\widehat r-r\|_\infty
\max\left\{
\frac{1}{k},
\|\widehat r-r\|_{2,\Pb}^{1/2}
\right\}
\Bigg).
\end{align*}
In this respect, recalling that throughout we have assumed that $\|\rhat - r\|_\infty = o_\Pb(1/k)$, we have $\|\widehat m_{\rm scpi, k, q} - m\|_{2, \Pb} = O_\Pb(n^{-2/5})$ if $\|r - \rhat\|_{4, \Pb} = O_\Pb(n^{-3/10})$ and $\|r - \rhat\|_\infty \|\rhat - r\|^{1/2}_{2, \Pb} = O_\Pb(n^{-2/5})$. If $\rho = 0$, the condition assumed throughout that $\|\rhat -r\|_\infty = o_\Pb(n^{-1/5})$ together with $\|r - \rhat\|_\infty \|\rhat - r\|^{1/2}_{2, \Pb} = O_\Pb(n^{-2/5})$ suffice since the term $k\|\rhat - r\|_{4, \Pb}^2$ in the preceding display can be replaced by $\|\rhat -r\|_{4, \Pb}^2$.

To achieve the asymptotic Gaussian distribution (at rate $\sqrt{k / n} \asymp n^{-2/5}$), a sufficient condition is that $\|S_n\|_2 + \|R_n\|_2 = o_\Pb(n^{-1/2})$, which holds if $\|\rhat - r\|_\infty = o_\Pb(n^{-7/20})$. Under such condition on $\|\rhat - r\|_\infty$, the choice $q \asymp k \asymp n^{1/5}$ also works for inference. In this case, 
\begin{align*}
\frac{\sqrt n\{\Phi_k(t)^\intercal(\widehat{\beta}_{\rm scpi,k,q}-\beta_k)\}}{\sqrt{\Phi_k(t)^\intercal\Omega\Phi_k(t)}}\xrightarrow{d} N\left(0,1 \right),
\end{align*}
where $\Omega = Q^{-1}_{\Phi,k}(r)\cdot \E\left[\Phi_k(r)\Phi_k(r)^\intercal\left\{\Delta_k(r)+\epsilon + \rho -(A-r)\dot\Phi_q(r)^\intercal\beta_q\right\}^2\right]\cdot Q^{-1}_{\Phi,k}(r)$. 

Table~\ref{tab:sieve_rates} summarizes the rate conditions on the first-step estimation error needed to obtain valid inference.
Notice that these two estimators exhibit a similar pattern to that of the local linear. Although $\widehat{m}_{\text{sbc}}(t)$ converges slower than $\widehat{m}_{\text{scpi}}(t)$, it has weaker requirements for the first-step estimation, namely $\|r-\rhat\|_{4,\Pb} = o_{\Pb}(n^{-1/4})$ and $\|r-\rhat\|_\infty = o_{\Pb}(n^{-5/28})$.
\begin{table}[ht]
\centering
\caption{Summary of sufficient conditions for inference derived from Propositions \ref{proposition:bc_seive} and \ref{proposition:cpi_sieve}.}
\label{tab:sieve_rates}
\begin{tabular}{lcccc}
\toprule
& \multicolumn{2}{c}{$\widehat{m}_{\text{scpi}}(t)$} & \multicolumn{2}{c}{$\widehat{m}_{\text{sbc}}(t)$}\\
\cmidrule(lr){2-3} \cmidrule(lr){4-5}
& $\rho \neq 0$ & $\rho = 0$ & $\rho \neq 0$ & $\rho = 0$ \\
\midrule
\textbf{Achievable rate}
& $n^{-2/5}$ & $n^{-2/5}$
& $n^{-2/7}$ & $n^{-2/5}$ \\[6pt]
\textbf{Standard error scale}
& $(k/n)^{1/2}$ & $(k/n)^{1/2}$
& $(k^3/n)^{1/2}$ & $(k/n)^{1/2}$ \\[6pt]

\textbf{Condition on $\|\hat{r}-r\|_{4,\Pb}$}
& -- & --
 & $o_{\Pb}(n^{-1/4})$ & $o_{\Pb}(n^{-7/20})$ \\[6pt]
\textbf{Condition on $\|\hat{r}-r\|_{\infty}$}
& $o_{\Pb}(n^{-7/20})$ & $o_{\Pb}(n^{-7/20})$
& $o_{\Pb}(n^{-5/28})$ &  $o_{\Pb}(n^{-1/4})$ \\
\bottomrule
\end{tabular}

\end{table}

\section{Simulation Studies}\label{sec:sims}
In this section, we explore the finite-sample performance of our proposed estimators in simulations. We consider several different outcome regression designs and estimate the unobserved covariate $r(x)$ via the Lasso (with logistic link function).

\subsection{Setup}
In every scenario, we generate $d$-dimensional covariates $X \sim N(0,\Sigma)$, where $d = 200$ and $\Sigma$ is an \textit{AR(1)} covariance matrix with correlation parameter 0.7, i.e., $\Sigma_{ij}= 0.7^{|i-j|}$. We generate the variable $A$ by sampling from $A \mid X \sim \text{Bern}(r(X))$ and set $r(x) = \expit(\sum_j\beta_jx_j)$ with $\beta_j = (2j)^{-1}$. The outcome variable is generated as $Y= \mu(X)+\epsilon $, where $\epsilon\sim N(0,1)$ is independent noise.

We run 500 replications. In each replication, we split the sample into two sub-samples. The training sample is used to estimate $r(x)$, $\mu(x)$ and $m'(t)$. The remaining inference sample is used to estimate the target $m(t)$. The outcome model $\mu(x)$ is estimated using random forest (\texttt{ranger}) with default parameters. The derivative term $m'(t)$ is estimated using local linear regression (\texttt{lprobust}) with Gaussian kernel and default parameters. We estimate $r(x)$ using the Lasso with regularization parameter $\lambda$ chosen by 5-fold cross-validation. We also consider calibrating $\rhat$ using isotonic regression (\texttt{isoreg}); that is, $\rhat_{\rm cali}$ is obtained by regressing A on $\rhat$ under the monotonicity constraint on the main sample used for inference. All the code is available at \href{}{https://github.com/Jiaqi0987/Estimated-covariate.git}.

We investigate the performance of three different estimators:
\begin{itemize}
    \item $\widehat{m}_{\text{pi}}(t)$: plug-in estimator regressing $Y$ onto $\rhat(X)$;
    \item $\widehat{m}_{\text{bc}}(t)$: influence function-based bias-corrected estimator (Sections 2.1 and 3.1);
    \item $\widehat{m}_{\text{cpi}}(t)$: corrected plug-in estimator (Sections 2.2 and 3.2).
\end{itemize}
For benchmarking, we also report the performance of their oracle counterparts:
\begin{itemize}
    \item $\widehat{m}_{\text{or}}(t)$: regressing $Y$ onto true $r(X)$;
    \item $\widehat{m}_{\text{bc.or}}(t)$: the oracle version of $\widehat{m}_{\text{bc}}(t)$, which uses true regression function $\mu(X)$ while retaining the estimated covariate $\rhat(X)$;
    \item $\widehat{m}_{\text{cpi.or}}(t)$: the oracle version of $\widehat{m}_{\text{cpi}}(t)$, which uses true derivative $m'(t)$ while retaining the estimated covariate $\rhat(X)$.
\end{itemize}
And also the corresponding estimator using calibrated $\rhat_{\rm cali}$:
\begin{itemize}
    \item $\widehat{m}_{\text{pi.cali}}(t)$: the calibrated version of $\widehat{m}_{\text{pi}}(t)$, computed using $\rhat_{\rm cali}(X)$;
    \item $\widehat{m}_{\text{bc.cali}}(t)$: the calibrated version of $\widehat m_{\mathrm{bc}}(t)$, computed using $\widehat r_{\mathrm{cali}}(X)$;
    \item $\widehat{m}_{\text{bc.cali.or}}(t)$:  the calibrated oracle version of $\widehat m_{\mathrm{bc.or}}(t)$, computed using $\widehat r_{\mathrm{cali}}(X)$ and the true regression function $\mu(X)$.
\end{itemize}
Note that, because calibration enforces $\Pn\{g(\rhat_{\rm cali})(A - \rhat_{\mathrm{cali}})\} = 0$ for any function $g$,
the corrected plug-in estimator constructed with $\rhat_{\mathrm{cali}}$ (as implemented in this simulation study) coincides with plug-in estimator based on $\rhat_{\text{cali}}$; we therefore do not list it separately.
Each estimator is implemented based on three different models: (i) local linear regression, ii) regression using B-splines, and (iii) regression using cosine basis. For the local linear implementation, the bandwidth $h$ is selected by \texttt{lprobust} with default parameters, with the exception of the bandwidths entering the bias-corrected estimators, which are chosen by 5-fold cross validation. For the sieve-based estimators, the number of basis terms are chosen by 5-fold cross-validation. 

We consider six designs for $\E(Y \mid X) = \mu(X)$ of the form $\mu(X) = f\{r(X)\} + g(X)$, where $f(t)$ is (i) a sinusoidal function \textbf{(M1)}, (ii) a quasi-linear function \textbf{(M2)}, and (iii) a smooth function with a localized bump \textbf{(M3)}. The function $g(x)$ is either (i) identically zero or (ii) $g(x) = x^T \beta_g$, where $\beta_{g} $ is a fixed vector generated according to $\beta_{g,j}\sim N\left((2j)^{-1},(2j)^{-2}\right)$. The setting with $g(x) = 0$ corresponds to the case where $\rho(X) = \mu(X) - m\{r(X)\} = 0$. As a robustness check, we also considered $g(x) = x_1 x_2$ and the results were essentially unchanged.

\subsection{Results}
\subsubsection{Estimation accuracy}
We evaluate the accuracy of the estimators by averaging their squared errors across a grid of 50 equally spaced evaluation points $t_j\in[0.15, 0.85]$, i.e.,  $\text{Average squared error} = \frac 1 {50}\sum_{j=1}^{50} (\widehat{m}(t_j)-m(t_j))^2$. The points 0.15 and 0.85 correspond roughly to the 0.05- and 0.95-quantile of $r(X)$, respectively. A total sample size of $n = 5000$ is split evenly into a training sample and an inference sample. 

Figures \ref{fig:scenarioII-M1}, \ref{fig:scenarioII-M2}, \ref{fig:scenarioII-M3} report the results for the settings with $g(X) = x^T\beta_g$ (corresponding to $\rho \neq 0$). Each figure consists of 3 plots: (i) the top-left panel displays the true function $m(t)$ along with $\widehat{m}_{\text{pi}}$, $\widehat{m}_{\text{cpi}}$ and $\widehat{m}_{\text{bc}}$ estimates based on local linear smoothing from one replication; (ii) the top-right panel shows the pointwise squared error curves for the estimators based on local linear smoothing, where the gray lines come from individual simulation runs and the red line is their average; and (iii) the box-plots summarize the distribution of the average squared error for each estimator considered. The results show that any form of bias correction yields notable improvements relative to the plug-in estimator. Across all scenarios, calibrating $\widehat{r}$, in conjunction with the plug-in strategy (equivalent to the corrected plug-in strategy in this case) or the bias-correction based on the influence function of the fixed-dimensional approximating target, appears to yield the best results in terms of average squared-error. Finally, when $\rho = 0$, Figures \ref{fig:scenarioI-M1}, \ref{fig:scenarioI-M2}, \ref{fig:scenarioI-M3} show that estimators based on calibrating $\rhat$ have the best and most stable performance. Without calibrating $\rhat$, the forms of bias correction considered did not harm the performance relative to the plug-in estimator but they yield substantial gains mostly in scenario M1 (with the corrected plug-in also showing improvements in Scenario M3). 

\begin{figure}[htbp]
    \centering
    \includegraphics[width=0.8\textwidth]{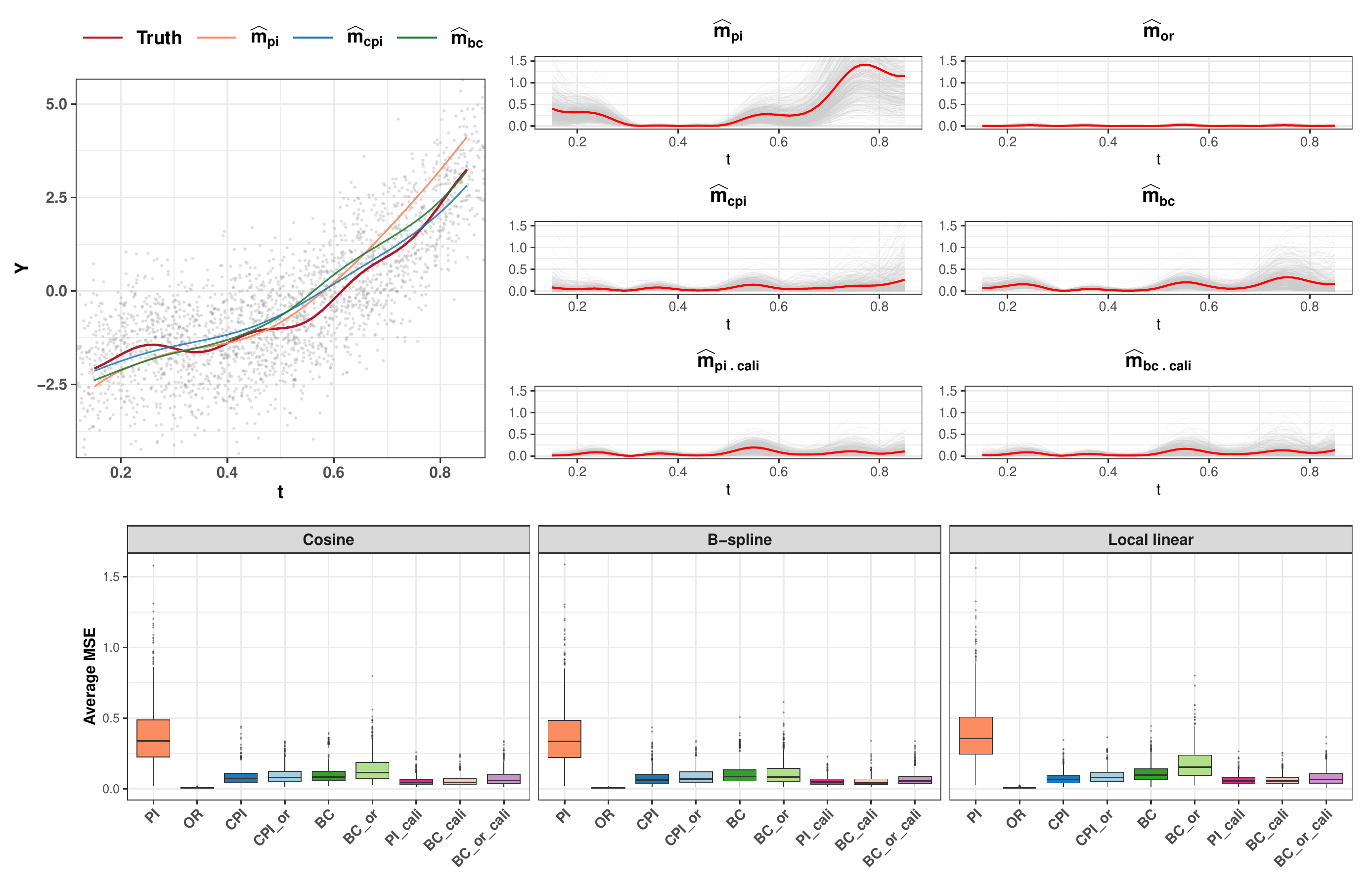}

    \caption{\textbf{M1 (sin–cosine function + linear term in X):} 
    $\mu(X) = 0.15\sin(10\pi r(X)) + \cos(2\pi r(X)) + \beta_g^T X$, 
    where $\beta_g$ is a fixed coefficient vector.
    }
    \label{fig:scenarioII-M1}
\end{figure}

\subsubsection{Coverage and CI length}

We evaluate the coverage of the confidence intervals for the proposed estimators. Because the estimators are nonparametric, the tuning parameters minimizing the mean-square-error (such as cross-validated choices) balance smoothing bias and standard error; the confidence intervals should therefore be interpreted as targeting the corresponding smoothed estimand (for example $\Phi^\intercal_k(t)\beta_k$ for sieves), rather than the true regression function $m(t)$. To make the coverage comparison meaningful, we use a two-stage simulation
procedure. First, for each simulation setting, we run a pilot simulation with
50 replications and record the selected bandwidth or the basis dimension for each estimator.
We then fix the bandwidth or basis dimension at the median value from the pilot study,
separately for each estimator. Using these fixed tuning parameters, we evaluate whether the confidence interval for each estimator covers its corresponding smoothed target.

For our coverage analysis, we fix the inference sample size at $2500$ and vary the training sample size $n_{tr}\in\{2500, 10000\}$, so that the nuisance estimates are increasingly more accurate. For scenario M1 with $\rho \neq 0$, Figures \ref{fig:coverage-ntr} and \ref{fig:cilength-ntr} report the pointwise coverage and the length of intervals, respectively. For all other settings, the results are reported in Appendix B. As expected, as $n_{tr}$ increases, the coverage moves closer to the nominal 95\% level for estimators based on some form of bias correction, while the coverage for plug-in estimators remains far below the target. In some scenarios, we notice a drop in coverage in certain regions (for example:  evaluation points near $t = 0.5$ in Figure \ref{fig:coverage-ntr_S1M1}). This is likely due to the fact that the true function $m(t)$ changes slope there, and the regularization in $\widehat{r}$ leads to more observations falling in that region, thus reducing the standard errors in a difficult landscape for estimation. This is
consistent with the condition that the first-step estimation error should be smaller
relative to the smoothing scale, for example $\|\widehat r-r\|_\infty=o(h)$ for local linear estimator. In regions where the target function varies more rapidly, smaller bandwidths may be needed to reduce smoothing bias, which makes the required first-stage accuracy condition more stringent.

\begin{figure}[htbp]
    \centering
    \includegraphics[width=\textwidth]{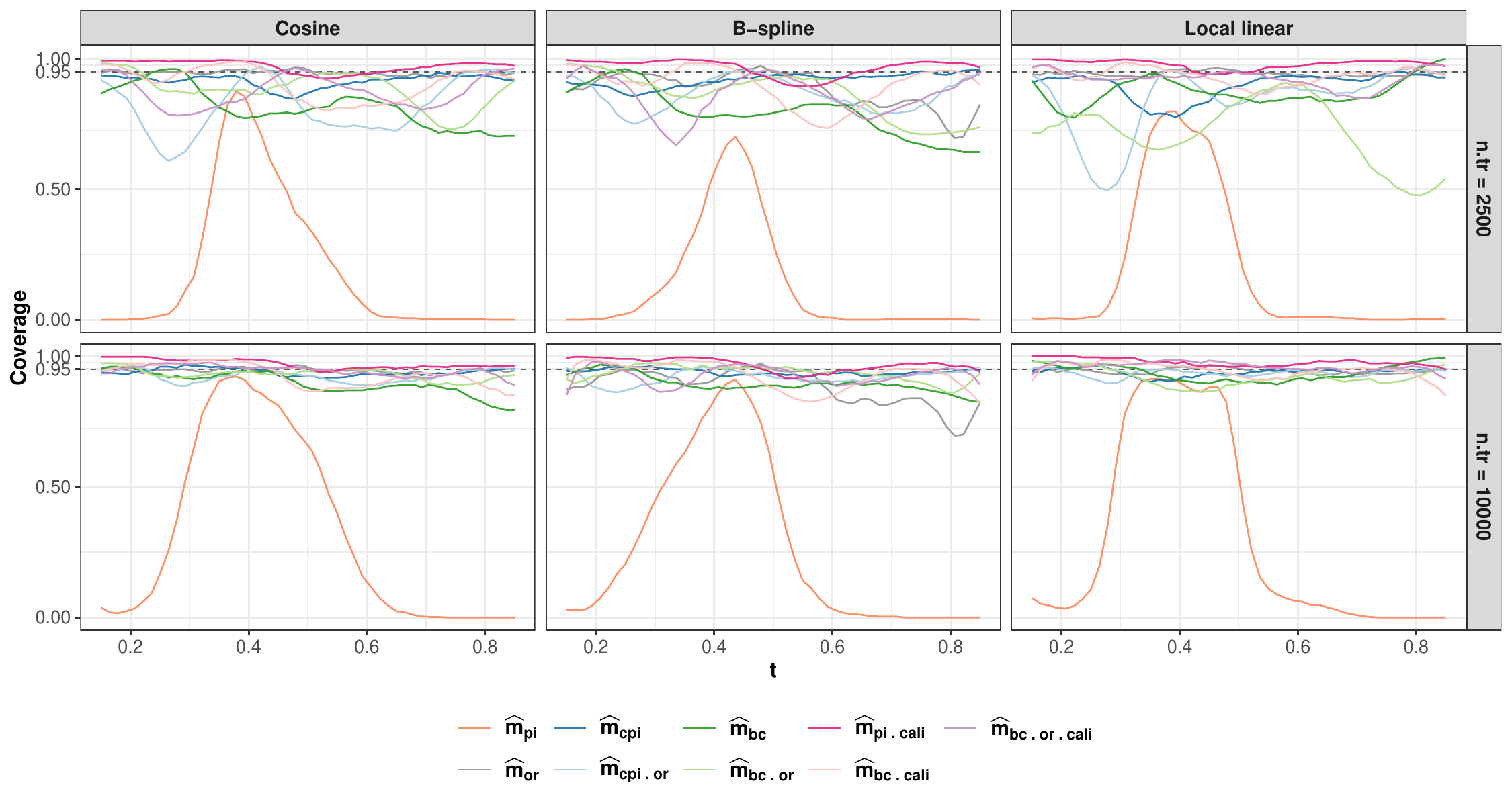}
    \caption{Pointwise coverage relative to the matched same-replication semi-oracle target for \textbf{M1 (sine–cosine function + linear term in $X$):} 
    $\mu(X) = 0.15\sin\{10\pi r(X)\} + \cos\{2\pi r(X)\}+\beta_g^TX$, where $\beta_g$ is a fixed coefficient vector.}
    \label{fig:coverage-ntr}
\end{figure}

\begin{figure}[htbp]
    \centering
    \includegraphics[width=\textwidth]{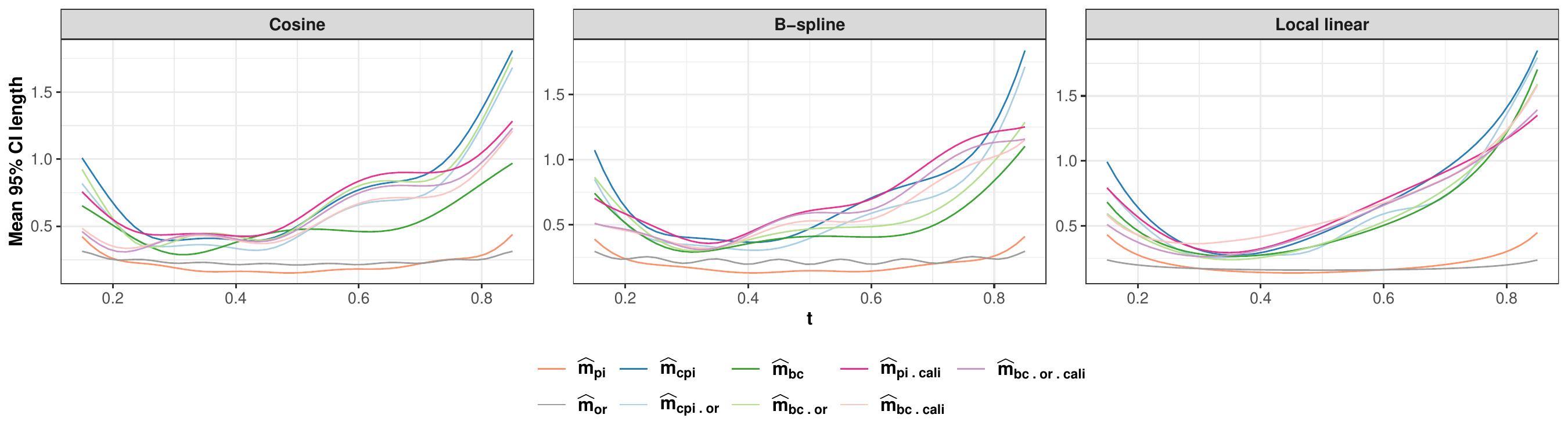}
    \caption{Pointwise confidence interval length for \textbf{M1 (sine–cosine function + linear term in $X$):} 
    $\mu(X) = 0.15\sin\{10\pi r(X)\} + \cos\{2\pi r(X)\}+\beta_g^TX$, where $\beta_g$ is a fixed coefficient vector.  Training sample uses 2500 observations.}
    \label{fig:cilength-ntr}
\end{figure}

\section{Application}\label{sec:application}\label{sec:da}
In this section, we investigate the effect of college completion on cumulative unemployment in early to mid-career. We use data from the National Longitudinal Survey of Youth 1997 (NLSY97), which is a nationally representative sample of 8,984 men and women of age 12--16 as of December 31, 1996. We restrict the sample to those who completed at least a high school degree ($n = 7,753$), those falling within a region of common support ($n = 7,626$) and to cases with no missing values on key variables including unemployment ($n = 6,978$) (but where several precollege variables were imputed) . Table \ref{tab:summary statistic} reports summary statistics for the precollege covariates used in our analysis. Our goal is to evaluate the causal effect of college completion on cumulative unemployment as a function of the true probability of completely college (propensity score). Previous analyses have shown that college completion reduces time spent unemployed over the career, and the effect is greater for those less likely to complete college \parencite{brand2023overcoming}. This finding, termed negative selection, is inconsistent with a rational-behavior model whereby agents decide to complete college based on their expected future career returns. One possible explanation for this finding is that students from advantaged backgrounds may consider college attendance and completion as culturally expected regardless of whether it is advantageous from a future career standpoint. On the contrary, attending college may not be the norm for less advantaged students, and thus it may be justified under more stringent economic gains. Workers from disadvantaged backgrounds who do not complete college face particularly poor labor market outcomes, such that we expect large benefits to degree completion, especially in reducing cumulative unemployment \parencite{brand2023overcoming}. By contrast, students from advantaged backgrounds are likely to have stronger labor market opportunities regardless of whether they complete college, suggesting smaller returns. Overall, the theory suggests that the effect of college completion on unemployment should decrease as the propensity score increases.

Mathematically, we let $A \in \{0, 1\}$ denote college completion and $(Y^1, Y^0)$ the potential cumulative unemployment rates if the unit completes versus does not complete college, respectively. A rich set of background covariates $X \in \R^p$ are observed, including sociodemographics, parents' education and income, family structure, students' high-school achievement, and school characteristics. A summary is presented in Table \ref{tab:summary statistic}; certain variables are constructed, we refer to \cite{brand2023overcoming} for their definition. We invoke (i) no-unmeasured-confounding, $A \ind (Y^1, Y^0) \mid X$, (ii) positivity, $\epsilon \leq r(X) = P(A = 1 \mid X) \leq 1 - \epsilon$ with probability 1 for some $\epsilon > 0$, and (iii) consistency, $A = a \implies Y^a = Y$. Then,
\begin{align*}
    \tau(t) = \E\{Y^1 - Y^0 \mid r(X) = t\} = \E\{\mu_1(X) - \mu_0(X) \mid r(X) = t\}, 
\end{align*}
where $\mu_a(X) = \E(Y \mid A = a, X)$. To estimate $\tau(t)$ using the methods described in the previous sections, we propose following the principle of DR-Learning \parencite{kennedy2020optimal} and regressing the following pseudo-outcome on $r(X)$:
\begin{align*}
    \varphi(Z) = \frac{A}{r(X)}\{Y - \mu_1(X)\} - \frac{1-A}{1 - r(X)}\{Y - \mu_0(X)\} + \mu_1(X) - \mu_0(X).
\end{align*}
As $\E\{\varphi(Z) \mid r(X) = t\} = \tau(t)$, Propositions~\ref{proposition:bc_local}--\ref{proposition:cpi_sieve} apply directly when treating $\varphi(Z)$ as known. In fact, the results translate by viewing  $\varphi(Z)$ as $Y$ and $\E\{\varphi(Z) \mid X\} = \mu_1(X) - \mu_0(X)$ as $\mu(X)$. In practice, $\varphi(Z)$ needs to be estimated, but because $\varphi(Z)$ is an orthogonal signal (i.e., regressing its estimate on $X$ results in estimators of the $X$-conditional CATE that have second-order dependence on the nuisance errors under mild conditions \parencite{kennedy2020optimal, foster2023orthogonal}), we expect the conclusions from Propositions \ref{proposition:bc_local}-\ref{proposition:cpi_sieve} continue to hold in this setting as well.

In \cite{brand2023overcoming}, the propensity score is estimated by a logit regression. The model is based on an iterative procedure considering all possible higher order and interaction terms \parencite{imbens2015causal} to produce a flexible specification. Heterogeneous treatment effects are estimated by plotting the treated and control unemployment outcomes along a continuous representation of the propensity score using local polynomial smoothing and taking the difference in the nonparametric curves \parencite{xie2012estimating}. In other analyses, effects are summarized within propensity score strata and effect heterogeneity is considered across the strata.

In our analysis, we adopt a stacked learning approach to estimate $r$, $\mu_0$ and $\mu_1$. The ensemble model combines predictions from Random Forests (\texttt{ranger}), Lasso with 2-knots, natural-splined continuous covariates and up to two-degree interactions (\texttt{glmnet}), Generalized Additive Models (\texttt{mgcv}), Gradient Boosted Trees (\texttt{gbm}), parametric linear / logistic models (\texttt{glm}), with cross-validated tuning parameters. For the meta learner used for stacking, we employ parametric linear regression for $\mu_0$ and $\mu_1$ and parametric logistic regression for $r$. Figure \ref{fig:fold-wise loss} reports the risk estimates for each learner and across nuisances.
\begin{figure}
    \centering
    \includegraphics[width=0.6\linewidth]{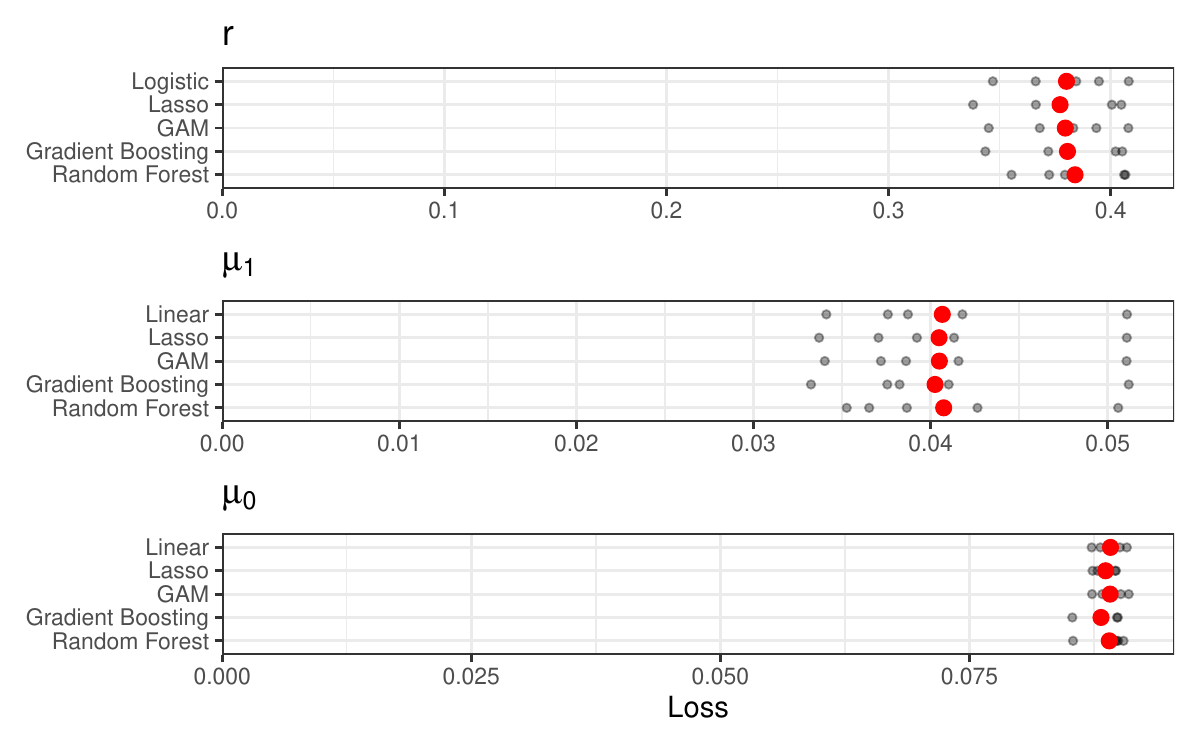}
    \caption{Loss estimates for each learner across nuisances. Gray points represent losses in each fold, and red points indicate the mean loss across folds.}
    \label{fig:fold-wise loss}
\end{figure}

Figure \ref{fig:unempprop effect} compares the estimated conditional effects as a function of the estimated propensity score when using the plug-in, corrected plug-in, and influence-function-based bias-corrected estimators. All estimates are negative (corresponding to a beneficial effect of college completion on unemployment rates). Importantly, the estimates suggest that individuals who are least likely to obtain college education benefit the most from completing college.
\begin{figure}
    \centering
    \includegraphics[width=0.8\linewidth]{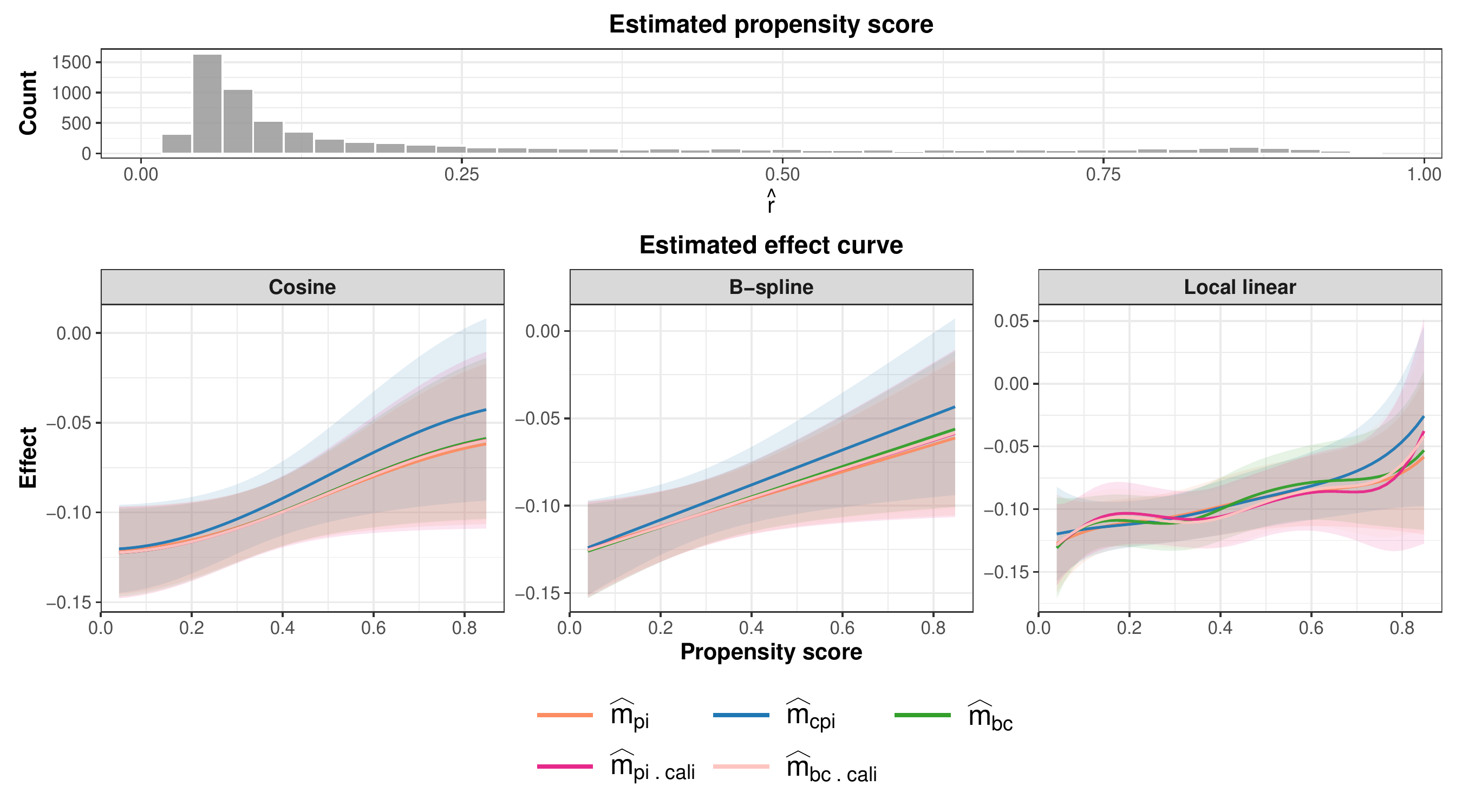}
    \caption{Distribution of propensity score (top) and estimated effect curve of college completion on cumulative unemployment with 95\% confidence band (bottom). }
    \label{fig:unempprop effect}
\end{figure}
\begin{table}[htbp]
\centering
\begin{threeparttable}
\caption{Summary Statistics for Precollege Covariates}
\small
\renewcommand{\arraystretch}{1.25} 

\begin{tabular}{lcccc}
\toprule
& \multicolumn{2}{c}{\textbf{Men} (N = 3486)} & \multicolumn{2}{c}{\textbf{Women} (N = 3492)} \\
\cmidrule(lr){2-3} \cmidrule(lr){4-5}
& Non-college & College
& Non-college & College \\

&Graduate & Graduate & Graduate & Graduate\\
Variables &(N = 2758)& (N = 728) & (N = 2500) & (N = 992)\\
\midrule

\textbf{Race} \\[0.15em]
\quad Black (0/1)       & 0.28 (0.45)& 0.14 (0.35)& 0.32 (0.47)& 0.19 (0.39)\\
\quad Hispanic (0/1)        & 0.22 (0.41)&  0.12 (0.32)& 0.24 (0.43)& 0.13 (0.34)\\
\textbf{Social Background} \\[0.15em]
\quad Father's education (years, 0--20)    & 12.12 (2.84)& 14.58 (2.91)& 11.86 (2.82)& 14.01 (3.01) \\
\quad Mother's education (years, 0--20)   & 12.30 (2.62)& 14.31 (2.44)& 11.98 (2.61)& 13.90 (2.78) \\
\quad Parent's income (\$1000s) & 42.97 (33.34)&71.54 (46.63)& 38.95 (30.80)& 64.95 (43.02)\\
\quad Intact family (0/1) & 0.41 (0.49)&0.72 (0.45)& 0.37 (0.48)& 0.67 (0.47)\\
\quad Number of siblings &4.54 (2.96)& 3.69 (2.53)& 4.72 (2.99)& 3.76 (2.65)\\
\quad Rural residence (0/1)&0.17 (0.38)& 0.18 (0.38)& 0.16 (0.37)& 0.18 (0.39)\\
\quad Southern residence (0/1) &0.33 (0.47) &0.30 (0.46) &0.34 (0.47) &0.32 (0.47)\\
\quad Married by age 18 (0/1)  &0.01 (0.07) & 0.00 (0.04)& 0.03 (0.17)&0.00 (0.05)\\
\quad Had children by age 18 (0/1) &0.01 (0.11) &0.00 (0.04) &0.07 (0.25) &0.01 (0.07)\\
\quad Delinquency  (0--10) & 0.64 (0.48)& 0.51 (0.50)& 0.46 (0.50)&0.33 (0.47) \\
\textbf{Ability and Academics}\\[0.15em]
\quad ASVAB scale (-3--3) &-0.25 (0.57) &0.23 (0.59) & -0.27 (0.62)& 0.16 (0.56)\\
\quad GPA (0--4) & 2.56 (0.71)& 3.33 (0.60)& 2.82 (0.70)& 3.45 (0.54)\\
\quad College-prep (0/1) &0.21 (0.41)& 0.58 (0.49)& 0.25 (0.43)&0.62 (0.48)\\
\textbf{Social-Psychological}\\[0.15em]
\quad Good teachers (0/1)&0.17 (0.38)&0.25 (0.43)& 0.16 (0.37)&0.22 (0.41) \\
\quad Friends aspire college (0/1)& 0.96 (0.19)& 0.99 (0.10)& 0.96 (0.19)& 1.00 (0.06)\\
\quad School safety (0/1)& 0.30 (0.46)&0.45 (0.50)&0.26 (0.44)&0.45 (0.50)\\
\bottomrule
\end{tabular}
\label{tab:summary statistic}
\begin{tablenotes}
\footnotesize
\item \textit{Notes:} Entries report mean values with standard deviations in parentheses. 
\end{tablenotes}
\end{threeparttable}
\end{table}
\section{Conclusion \& practical considerations}
In this work, we proposed and analyzed two ways of debiasing plug-in estimators of a regression on an unobserved, but estimable, covariate. One is based on directly subtracting off an estimate of the plug-in estimator's bias, while the other one is based on the influence function of the parameter, derived by viewing the second-stage tuning parameter (either the bandwidth in local smoothing or the basis dimension in the sieves regression) as fixed. For each approach, we have considered both local smoothing and regression onto a finite-dimensional space of increasing dimension. Recalling that the generic estimand considered is $\E(Y \mid r(X) = t)$, for $r(X) = \E(A \mid X)$, the upper bounds on the estimators' errors that we have derived distinguish between the case where the index $r(X)$ is sufficient, i.e., $\mu(X):= \E(Y \mid X) = \E\{Y \mid r(X)\}$, and the general case where it is not. However, in the application we considered, the general case is much more plausible.

The two approaches differ in several ways, which can be summarized as follows (with some simplifications for easier exposition, please refer to the technical propositions). First, the way we propose directly subtracting off an estimate of the plug-in's bias requires estimating the derivative of the regression function given the unobserved covariate. On the other hand, the influence-function-based estimators do not require estimating the derivative function but they rely on estimating $\mu(X) = \E(Y \mid X)$, which is a standard regression problem since $X$ is observed. Furthermore, while we consider $U$-statistic-based estimators for the former approach (coming from a leave-one-out estimator of the derivative), we find that simply regressing $Y - \widehat{m}'(\rhat)(A - \widehat{r})$ onto $\widehat{r}$ using off-the-shelf software works reliably in our simulations; here $\widehat{m}'(\rhat)$ denotes a plug-in estimator of the derivative function treating $\widehat{r}(x)$ as known, which can too be computed using available software. On the contrary, we found the construction of the influence-function-based estimators in simulations to be more delicate because of the presence of the derivative of either the kernel or the basis functions, leading to additional divisions by the bandwidth or multiplications by the number of basis terms, respectively.

In addition, our theoretical analysis finds that, in the likely case where the index $r(X)$ is not a sufficient statistic, the debiasing approach based on the influence function is unable to return an estimator that can converge as fast as the oracle estimator with access to the true $r(x)$. However, the approach based on directly estimating the plug-in bias can return such estimator even if $\|\widehat{r} - r\|_\infty$ is converging somewhat slower than the oracle rate (with a local linear second-stage regression, our required rate is faster than $n^{-3/10}$ when the oracle rate is $n^{-2/5}$). In simulations, selecting the bandwidth that is optimal for estimating the derivative itself (using off-the-shelf software such as \texttt{lprobust} \parencite{calonico2019nprobust}) leads to a reasonable performance. The influence-function-based estimators, in the general case, converge to the truth at a slower rate than the oracle. Yet, inference can be carried out at these slower regimes under weaker conditions on the accuracy of $\widehat{r}$ relative to those needed for inference, at the faster oracle rate, for the approach based on direct plug-in debiasing.

In the simulations considered, both approaches yielded estimators whose performance was at least as good as, and often much better than, that of the plug-in estimators, albeit worse than that of the oracle estimators. Notably, using a plug-in estimator after calibrating the covariate's estimate, for instance using isotonic regression, resulted in stable and reliable performance. Our theory, however, does not immediately cover calibration of $\widehat{r}(X)$ because the calibrating step occurs on the same sample used to compute the second-stage regression, while our results assume that $\widehat{r}(X)$ is estimated on a separate sample. A careful analysis of the plug-in estimator with calibrated covariates is thus an important avenue for future work. Overall, our analysis suggests that because the estimators considered have different strengths, it is advisable to try both debiasing approaches to assess the robustness of the inferences. To implement the direct debiasing of plug-in estimators, a promising way is to simply regress the outcome on the calibrated covariate.
 \section{Acknowledgments}
JW and MB gratefully acknowledge support from NSF DMS Grant 2413891. MB thanks Stijn Vansteelandt, Georgi Baklicharov and Oliver Dukes (Ghent University) for very insightful conversations. Large language models were used to assist with proofreading and identifying typographical errors during the revision of this manuscript.
\printbibliography
\newpage
\appendix
\section{Estimated outcomes versus estimated covariates}\label{appendix:sec_estimated_outcomes_vs_covariates}
At the surface level, the problem of estimating a regression function with unobserved, but estimable, covariates resembles that of estimating a regression with an unobserved, but estimable, pseudo-outcome. However, our analysis suggests that this is not the case. Here, we briefly outline a conceptual difference between the two settings in a stylized example. Consider the conditional average treatment effect function (identified under no-unmeasured-confounding):
    \begin{align*}
        \theta(t) = \E\{\mu_1(X) - \mu_0(X) \mid V = t\}, \qquad \mu_A(X) = \E(Y \mid A, X), \quad V \subset X.
    \end{align*}
    One key aspect of this problem is that, for any fixed functions $\overline{r}(x)$ and $\overline\mu_A(x)$, letting
    \begin{align*}
        \varphi_a(Z) = \frac{I(A = a)\{Y - \mu_a(X)\}}{r(X)^a\{1 - r(X)\}^{1-a}} + \mu_a(X),
    \end{align*}
   it holds that
    \begin{align*}
        & \E\{\overline\varphi_1(Z) - \overline\varphi_0(Z) \mid V = t\} - \theta(t) \\
        & = \E\left[ \left\{\frac{r(X)}{\overline{r}(X)}-1\right\} \{\mu_1(X) - \overline\mu_1(X)\} - \left\{\frac{1 - r(X)}{1 - \overline{r}(X)} - 1\right\}\{\mu_0(X) - \overline\mu_0(X)\} \mid V = t\right].
    \end{align*}
    This crucial observation then suggests that an estimator regressing the estimated pseudo-outcome $\widehat\varphi(Z)$ onto $V$ can potentially behave like an oracle estimator with access to $\varphi(Z)$ as long as the \textit{product of the errors} for estimating $r(x)$ and $\mu_A(x)$ (measured by some suitable norm) is smaller than the oracle rate. A careful analysis would need to take into consideration the properties of the second-stage regression estimator, but this intuition has been formalized and considerably generalized in recent influential works, including \cite{foster2023orthogonal} and \cite{kennedy2020optimal}. In particular, regressing $\widehat\varphi(Z)$ on $V$, the so-called DR-Learner, was analyzed in \cite{kennedy2020optimal}; we build upon this approach in Section \ref{sec:application} where $\widehat{r}(X)$ plays the role of $V$ here.

    If instead $V$ itself has to be estimated, our results suggest that one should generally not expect the same clean second-order nuisance remainder. To see one obstacle to obtain pure second-order nuisance errors, consider estimating $\E\{\mu_1(X)\}$ versus $\E[K_{ht}\{\mu_1(X)\}]$. The latter parameter would naturally appear when considering a regression onto $\mu_1(X)$ via local linear smoothing, while the former is the treatment-specific mean functional under no-unmeasured-confounding. The uncentered influence function of the former parameter is $\varphi_1(Z)$, and one has
    \begin{align*}
        \E\{\overline\varphi_1(Z)\} - \E\{\mu_1(X)\} = \E\left[ \left\{\frac{r(X)}{\overline{r}(X)}-1\right\} \{\mu_1(X) - \overline\mu_1(X)\}\right],
    \end{align*}
    paving the way again for the existence of an estimator (the so-called augmented-inverse-probability-weighted estimator \parencite{robins1994estimation}) exhibiting a pure second-order nuisance error. On the contrary, for a fixed bandwidth $h$, the (uncentered) influence function of $\E[K_{ht}\{\mu_1(X)\}]$ is
    \begin{align*}
    & \kappa_h(Z) = K'_{ht}\{\mu_1(X)\}\frac{A}{r(X)}\{Y - \mu_1(X)\} + K_{ht}\{\mu_1(X)\},\quad \text{  which satisfies } \\
        & \E\{\overline\kappa_h(Z)\} - \E[K_{ht}\{\mu_1(X)\}] \\
        & = \E\left[ K'_{ht}\{\overline\mu_1(X)\}\left\{\frac{r(X)}{\overline{r}(X)}-1\right\} \{\mu_1(X) - \overline\mu_1(X)\}\right] - \frac{1}{2} \E\left[ K''_{ht}\{\widetilde\mu_1(X)\}\{\mu_1(X) - \overline\mu_1(X)\}^2\right],
    \end{align*}
    for some intermediate value $\widetilde\mu_1(X)$ between $\mu_1(X)$ and $\overline\mu_1(X)$. Because of the presence of $K'_{ht}$ and $K''_{ht}$, by a standard smoothing argument and under mild conditions, these two apparent second-order terms are inflated by $h^{-1}$ and $h^{-2}$, respectively. When $h$ goes to zero, the nuisance errors are thus no longer purely second-order in a bandwidth-free sense. Nevertheless, they may still be asymptotically smaller than $|\E[K_{ht}\{\overline\mu_1(X)\} - K_{ht}\{\mu_1(X)\}]|$; for example, if $\|\mu_1-\overline\mu_1\|_\infty=o(h)$, then the quadratic term $h^{-2}\|\mu_1-\overline\mu_1\|_\infty^2$ is smaller than $h^{-1}\|\mu_1-\overline\mu_1\|_\infty$. Building upon \cite{mammen2012nonparametric}, in this work, we aimed to provide a more nuanced analysis than this crude calculation would imply in this stylized setting. However, the difficulty in obtaining pure second-order nuisance errors when $h$ goes to zero remains.
\section{Main proofs}
\subsection{Proof of Proposition \ref{proposition:bc_local}}
Recall that $Y = \mu(X) + \epsilon$ and, for shorthand notation, let us write $H_{ht}(r) = K_{ht}(r) g_{ht}(r)$ and $G_{ht}(r) = K_{ht}(r) g_{ht}(r) g_{ht}(r)^\intercal = H_{ht}(r) g_{ht}(r)^\intercal$. We interpret all derivatives with respect to the scalar $r$ component-wise. Thus, we have
\begin{align*}
    &\varphi_1 (Z,\rhat)= H'_{ht}(\rhat) \cdot (A - \rhat) \cdot \widehat\mu + H_{ht}(\rhat) \cdot Y, \\
   &\varphi_2 (Z,\rhat)  = G'_{ht}(\rhat) \cdot (A - \rhat)  + G_{ht}(\rhat) = \left\{H'_{ht}(\rhat)g_{ht}(\rhat)^\intercal + H_{ht}(\rhat) g'_{ht}(\rhat)^\intercal\right\} \cdot (A - \rhat) + G_{ht}(\rhat),
\end{align*}
and
\begin{align*}
    \varphi_1(Z,\rhat) & =  \left\{H_{ht}'(\rhat) \cdot(A - \rhat) + H_{ht}(\rhat) \right\} \cdot \mu+H'_{ht}(\rhat) \cdot (A - \rhat) \cdot (\widehat\mu - \mu)  + H_{ht}(\rhat) \cdot \epsilon \\
    &=: R_1 + R_2 + H_{ht}(\rhat) \cdot \epsilon.
\end{align*}
Let $\beta_{ht} = \begin{bmatrix} m(t) & h \cdot m'(t) \end{bmatrix}^\intercal$. By a Taylor expansion and definition of $\rho(X) := \mu(X) - m\{r(X)\}$, we have
\begin{align*}
    \mu & = g_{ht}(\rhat)^\intercal \beta_{ht} + m'(t)(r - \rhat) + \frac{1}{2}m''(\overline{t})(r-t)^2 + \rho,
\end{align*}
for some intermediate value $\overline{t}$ between $t$ and $r(X)$. Notice that
\begin{align*}
    \left\{H_{ht}'(\rhat) \cdot(A - \rhat) + H_{ht}(\rhat) \right\} \cdot g_{ht}(\rhat)^\intercal \beta_{ht} & = \varphi_2(Z, \rhat) \beta_{ht} - H_{ht}(\rhat)g'_{ht}(\rhat)^\intercal \beta_{ht} \cdot (A - \rhat)  \\
    & = \varphi_2(Z, \rhat)\beta_{ht} - m'(t) \cdot H_{ht}(\rhat) \cdot (A - \rhat) \\
    & = \varphi_2(Z, \rhat) \beta_{ht} - H_{ht}(\rhat) \cdot m'(t) \cdot \left\{(A - r)+ (r - \rhat)\right\}
\end{align*}
Therefore, we have
\begin{align*}
    R_1 & = \varphi_2(Z, \rhat) \beta_{ht} - m'(t) H_{ht}(\rhat) (A - r) + H'_{ht}(\rhat) \cdot (A - \rhat) \cdot m'(t)  \cdot (r - \rhat) \\
    & \hphantom{=} +\left\{H_{ht}'(\rhat) \cdot (A - \rhat) + H_{ht}(\rhat)\right\} \cdot \left\{\frac{1}{2}m''(\overline{t})(r-t)^2 + \rho\right\}
\end{align*}
Thus, we can write
\begin{align*}
    & \Pn\{\varphi_1(Z, \rhat)- \varphi_2(Z, \rhat) \beta_{ht}\} \\
    & = \Pn(R_2) + \Pn\left[H_{ht}(\rhat)\left\{\epsilon - m'(t)(A - r)\right\}\right] + \Pn\left\{H_{ht}'(\rhat) \cdot (A - \rhat) \cdot m'(t) \cdot  (r - \rhat)\right\} \\
    & \hphantom{=} + \Pn\left[\left\{H_{ht}'(\rhat) \cdot (A - \rhat) + H_{ht}(\rhat)\right\} \cdot \left\{\frac{1}{2}m''(\overline{t})(r-t)^2 + \rho\right\}\right].
\end{align*}
Next, we decompose $\Pn\{\varphi_1(Z, \rhat)- \varphi_2(Z, \rhat) \beta_{ht}\}$ into terms of three types: leading terms, terms that have mean zero given $X$ and $D^n$ and nuisance bias terms. 

To start, we first bound the term $\|\Pn(R_2)\|_2$. Note that\begin{align*}
    &\|H'_{ht}(\rhat)\|_2\lesssim\frac 1 {h^2} \cdot 1(|\rhat-t|\leq h)\lesssim\frac 1 {h^2}\cdot 1(|r-t|\leq h+\|r-\rhat\|_\infty).
\end{align*} So we have
\begin{align*}
    \|\Pb (R_2)\|_2 &= \|\Pb \{H_{ht}'(\rhat)(r-\rhat)(\muhat-\mu)\}\|_2\\&\lesssim\|r-\rhat\|_\infty\|\mu-\muhat\|_\infty\Pb\|H'_{ht}(\rhat)\|_2\\
    &\lesssim\|r-\rhat\|_\infty\|\mu-\muhat\|_\infty \cdot\frac 1{h^2}\cdot (h+\|r-\rhat\|_\infty)
\end{align*}
And 
\begin{align*}
    \Pb(\|(\Pb-\Pn) R_2\|_2^2)&\lesssim\frac 1n\Pb\|R_2\|_2^2\\&\lesssim \frac 1n\|\mu-\muhat\|_\infty^2\Pb\|H'_{ht}(\rhat)\|_2^2\\
    &\lesssim\frac 1n \|\mu-\muhat\|^2_\infty\frac 1 {h^4}\cdot(h+\|r-\rhat\|_\infty)
\end{align*}
Therefore,
\begin{align*}
    \|\Pn(R_2)\|_2
    & = O_{\Pb}\left(\frac{\|\muhat - \mu\|_{\infty}}{ \sqrt{n}h^2} \cdot(h+\|r-\rhat\|_\infty)^{1/2}+ \frac{\|\muhat - \mu\|_{\infty}\|\rhat - r\|_{\infty}}{h^2}\cdot(h+\|r-\rhat\|_\infty)\right)
\end{align*}

Similarly, we have
\begin{align*}
    & \left\|\Pn\left\{H_{ht}'(\rhat) \cdot (A - \rhat) \cdot m'(t) \cdot  (r - \rhat)\right\}\right\|_2 \\
    & =O_{\Pb}\left(\frac{\|r-\rhat\|_{\infty}}{ \sqrt{n}h^2} \cdot(h+\|r-\rhat\|_\infty)^{1/2}+ \frac{\|\rhat - r\|_{\infty}^2}{h^2}\cdot(h+\|r-\rhat\|_\infty)\right)
\end{align*}
In addition, since
\begin{align}
     \left\|\int_0^1H'_{ht}(r+u(\rhat-r))du\right\|_2 \lesssim\frac 1 {h^2}\cdot 1(|r-t|\leq h+\|r-\rhat\|_\infty),
     \label{eq:cpi_local_2}
\end{align} we have
\begin{align}
    &\Pn\left[H_{ht}(\rhat)\left\{\epsilon - m'(t)(A - r)\right\}\right]\nonumber\\ &\quad = \Pn\left[H_{ht}(r)\left\{\epsilon - m'(t)(A - r)\right\}\right]\nonumber \\
    & \hphantom{\quad =} + \Pn\left[\int_0^1 H'_{ht}(r + u(\rhat - r))(\rhat - r) du \cdot \left\{\epsilon - m'(t)(A - r)\right\}\right]\nonumber \\
    & \quad= \Pn\left[H_{ht}(r)\left\{\epsilon - m'(t)(A - r)\right\}\right] + O_{\Pb}\left(\frac{\|\rhat - r\|_{\infty}}{\sqrt{n}h^2}\cdot(h+\|r-\rhat\|_\infty)^{1/2}\right)
    \label{eq:cpi_local_1}
\end{align}
Furthermore, we have
\begin{align*}
H_{ht}'(\rhat) \cdot (A - \rhat) + H_{ht}(\rhat) & = H_{ht}(r) + H_{ht}'(\rhat) \cdot (A - r) + \int_0^1 \{H'(r + u(\rhat - r)) - H'_{ht}(\rhat)\}(\rhat - r) du \\
& = H_{ht}(r) + H_{ht}'(r) \cdot (A - r) + \int_0^1H_{ht}''(r + u(\rhat - r)) (\rhat - r) du \cdot (A - r) \\
& \hphantom{=} + \int_0^1 \{H'(r + u(\rhat - r)) - H'_{ht}(\rhat)\}(\rhat - r) du\\
&=: H_{ht}(r) + H_{ht}'(r) \cdot (A - r)+I_1+I_2
\end{align*}

we will have
\begin{align*}
    \left\|\Pn\left\{H'_{ht}(r)(A-r)\cdot \frac 12m''(\bar t)(r-t)^2\right\}\right\|_2 = O_{\Pb}\left(\sqrt{\frac{\Pb\|H'_{ht}(r)(r-t)^2\|_2^2}{ n}}\right) = O_{\Pb}\left(\sqrt{\frac hn}\right).
\end{align*}

Since $\left\|\int_0^1H''_{ht}(r+u(\rhat-r))du\right\|_2\lesssim \frac 1 {h^3}\cdot 1(|r-t|\leq h+\|r-\rhat\|_\infty)$,
\begin{align*}
    \Pb\|I_1\cdot(r-t)^2\|_2^2 &\lesssim\Pb\left\{(r-\rhat)^2\left\|\int_0^1H''_{ht}(r+u(\rhat-r))du\right\|_2^2\cdot(r-t)^4\right\}\\
    &\lesssim\|r-\rhat\|_{\infty}^2\cdot\frac{1}{h^6}\left(h+\|r-\rhat\|_\infty\right)^5,
\end{align*}
we have
\begin{align*}
    \left\|\Pn\left\{I_1\cdot \frac 12m''(\bar t)(r-t)^2\right\}\right\|_2 = O_{\Pb}\left(\sqrt{\frac{\Pb\|I_1\cdot(r-t)^2\|_2^2}{ n}}\right) = O_{\Pb}\left(\frac{\|r-\rhat\|_{\infty} }{\sqrt n h^3}\cdot (h+\|r-\rhat\|_\infty)^{5/2}\right).
\end{align*}

And for the term $I_2\cdot\frac 12m''(\bar t)(r-t)^2$,
\begin{align*}
    \left\|\Pb\left\{I_2\cdot\frac 12m''(\bar t)(r-t)^2\right\}\right\|_2&\lesssim\|r-\rhat\|_\infty\Pb\left\| \left\{\int_0^1 H'_{ht}(r+u(\rhat-r))du- H'_{ht} (\rhat)\right\}(r-t)^2\right\|_2\\
    &\lesssim \frac{\|r-\rhat\|_\infty}{h^2}\cdot(h+\|r-\rhat\|_\infty)^3.
\end{align*}
Also
\begin{align*}
    \Pb\left\|I_2\cdot \frac 12 m''(\bar t)(r-t)^2\right\|_2^2\lesssim\frac{\|r-\rhat\|_\infty^2}{h^4}\cdot(h+\|r-\rhat\|_\infty)^5.
\end{align*}
Then we have
\begin{align*}
    \left\|\Pn\left\{I_2\cdot \frac 12 m''(\bar t)(r-t)^2\right\}\right\|_2 = O_{\Pb}\left(\frac{\|r-\rhat\|_\infty}{\sqrt n h^2}\cdot(h+\|r-\rhat\|_\infty)^{5/2}+\frac{\|r-\rhat\|_\infty}{h^2}\cdot(h+\|r-\rhat\|_\infty)^3\right)
\end{align*}

In this light, we have
\begin{align*}
 &\Pn\left[\left\{H_{ht}'(\rhat) \cdot (A - \rhat) + H_{ht}(\rhat)\right\} \cdot \frac{1}{2} m''(\overline{t})(r - t)^2\right] 
 = \frac{1}{2} m''(t) \cdot \Pn\{ H_{ht}(r) (r - t)^2\} + o_\Pb(h^2)\\&\hphantom{=} + O_{\Pb}\left(\sqrt{\frac hn}+\frac{\|r-\rhat\|_\infty}{\sqrt n h^3}\cdot(h+\|r-\rhat\|_\infty)^{5/2}+\frac{\|r-\rhat\|_\infty}{h^2}\cdot(h+\|r-\rhat\|_\infty)^3\right)
\end{align*}
Because $\rho$ is not necessarily vanishing as $h \to 0$ and $n \to \infty$, we have an additional leading term in the following expression:
\begin{align*}
   &  \Pn\left[\left\{H_{ht}'(\rhat) \cdot (A - \rhat) + H_{ht}(\rhat)\right\} \cdot \rho\right]   \\
   &= \Pn\left[ \left\{H_{ht}(r) + H_{ht}'(r) \cdot (A - r)\right\} \cdot \rho \right] \\
& \hphantom{=} + O_{\Pb}\left(\frac{\|\rhat - r\|_\infty\cdot\{\sup_{t_1,t_2}\E(\rho^2 \mid \rhat = t_1, r = t_2, D^n)\}^{1/2} }{\sqrt{n}h^3} \cdot (h+\|r-\rhat\|_\infty)^{1/2}\right)\\
&\hphantom{=} + O_{\Pb}\left( \frac{\|\rhat - r\|_\infty\sup_{t_1,t_2}|\E(\rho\mid\rhat = t_1,r = t_2, D^n) | }{h^2}\cdot(h+\|r-\rhat\|_\infty)\right)
\end{align*}
Putting everything together, we have that
\begin{align*}
&\Pn\{\varphi_1(Z,\rhat) - \varphi_2(Z,\rhat) \beta_{ht}\} \\& = \Pn \left[H_{ht}(r)\left\{\epsilon - m'(t)(A - r) \right\}\right] + \frac{1}{2} m''(t)\Pb\{H_{ht}(r) (r - t)^2 \} + o_\Pb(h^2) \\
& \hphantom{=} + \Pn\left[ \left\{H_{ht}(r) + H_{ht}'(r) \cdot (A - r)\right\} \cdot \rho \right] \\
& \hphantom{=} +O_{\Pb}\left(\frac{\|\muhat - \mu\|_{\infty}}{ \sqrt{n}h^2} \cdot(h+\|r-\rhat\|_\infty)^{1/2}+ \frac{\|\muhat - \mu\|_{\infty}\|\rhat - r\|_{\infty}}{h^2}\cdot(h+\|r-\rhat\|_\infty)\right)\\
& \hphantom{=} +O_{\Pb}\left(\frac{\|r-\rhat\|_{\infty}}{ \sqrt{n}h^2} \cdot(h+\|r-\rhat\|_\infty)^{1/2}+ \frac{\|\rhat - r\|_{\infty}^2}{h^2}\cdot(h+\|r-\rhat\|_\infty)\right)\\
& \hphantom{=} +  O_{\Pb}\left(\sqrt{\frac hn}+\frac{\|r-\rhat\|_\infty}{\sqrt n h^3}\cdot(h+\|r-\rhat\|_\infty)^{5/2}+\frac{\|r-\rhat\|_\infty}{h^2}\cdot(h+\|r-\rhat\|_\infty)^3\right) \\
& \hphantom{=} + O_{\Pb}\left(\frac{\|\rhat - r\|_\infty\cdot\{\sup_{t_1,t_2}\E(\rho^2 \mid \rhat = t_1, r = t_2, D^n)\}^{1/2} }{\sqrt{n}h^3} \cdot (h+\|r-\rhat\|_\infty)^{1/2}\right)\\
&\hphantom{=} + O_{\Pb}\left( \frac{\|\rhat - r\|_\infty\sup_{t_1,t_2}|\E(\rho\mid\rhat = t_1,r = t_2, D^n) | }{h^2}\cdot(h+\|r-\rhat\|_\infty)\right)
\end{align*}

By definition and change of variable, we have
\begin{align*}
    \Pb\varphi_2(Z,r) &=\Pb G_{ht}(r)  = \int K_{ht}(r)g_{ht}(r)g_{ht}^\intercal(r) f_r(r)dr \\&= \int K(u)g(u)g(u)^\intercal f_r(t+hu)du.
\end{align*}
Since $f_r(\cdot)$ is Lipschitz continuous, $f_r(t+hu) = f_r(t) + O(h)$. Therefore,
\begin{align*}
\Pb G_{ht}(r)
= f_r(t)
\begin{pmatrix}
1 &  0\\
 0 &  \int u^2 K(u)\,du
\end{pmatrix}
+ O(h).
\end{align*} 
Since $f_r(t)$ is bounded and away from 0, $\lambda_\min(\Pb \varphi_2(Z,r)) \gtrsim 1$. Therefore, we have
\begin{align*}
    &\mhat_{\rm lbc}(t; h) - m(t) \\& = e_1^\intercal \{\Pn\varphi_2(Z,\rhat)\}^{-1}\cdot\Pn\{\varphi_1(Z,\rhat) - \varphi_2(Z,\rhat) \beta_{ht}\} \\
    & = e_1^\intercal \{\Pb\varphi_2(Z,r)\}^{-1}\{\Pb\varphi_2(Z,r) - \Pn\varphi_2(Z,\rhat)\}\{\Pn \varphi_2(Z,\rhat)\}^{-1}\cdot \Pn\{\varphi_1(Z,\rhat) - \varphi_2(Z,\rhat) \beta_{ht}\}\\&\hphantom{=} + e_1^\intercal \{\Pb\varphi_2(Z,r)\}^{-1}\cdot \Pn\{\varphi_1(Z,\rhat) - \varphi_2(Z,\rhat) \beta_{ht}\} \\
    & = e_1^\intercal \{\Pb\varphi_2(Z,r)\}^{-1}\cdot \Pn\{\varphi_1(Z,\rhat) - \varphi_2(Z,\rhat) \beta_{ht}\} + o_\Pb\left(\Pn\{\varphi_1(Z,\rhat) - \varphi_2(Z,\rhat) \beta_{ht}\}\right),
\end{align*}
where the last equality follows by lemma \ref{lemma:min_eigen_local} and the condition $nh^3 \to \infty$ and $\|\rhat - r\|_\infty = o(h)$. In addition, we have $e_1^\intercal \{\Pb\varphi_2(Z,r)\}^{-1} = \begin{bmatrix} f^{-1}_r(t) & 0 \end{bmatrix} + O(h)$ and 
\begin{align*}
\frac{1}{2} m''(t)\Pb\{H_{ht}(r) (r - t)^2 \} = \frac{1}{2} m''(t) f_r(t) h^2 \int K(u) g(u) u^2 du + o(h^2),
\end{align*}
where $f_r$ denotes the density of $r(X)$. Therefore, relying on $\|\rhat - r\|_\infty = o(h)$, we have reached
\begin{align*}
 & \mhat_{\rm lbc}(t; h) - m(t)  - \frac{1}{2} m''(t) h^2 \int K(u) u^2 du \\
 & = \left(\frac{1}{f_r(t)} \Pn \left[K_{ht}(r)\{\epsilon+\rho - m'(t) (A - r)\} + 
 K_{ht}'(r)(A - r) \rho\right] \right) + o_\Pb(h^2)\\
 & \hphantom{=} + o_\Pb\left(\| \Pn \left[K_{ht}(r)\{\epsilon+\rho - m'(t) (A - r)\} + 
 K_{ht}'(r)(A - r) \rho\right]\|_2 \right) \\
& \hphantom{=} +  O_{\Pb}\left(\frac{\|r-\rhat\|_\infty+\|\mu-\muhat\|_\infty}{\sqrt {nh^3}}+\frac{\|\rhat - r\|_\infty\cdot\{\sup_{t_1,t_2}\E(\rho^2 \mid \rhat = t_1, r = t_2, D^n)\}^{1/2} }{\sqrt{nh^5}} \right) \\
&\hphantom{=} +O_{\Pb}\left(\frac{\|\rhat - r\|_\infty\left\{\sup_{t_1,t_2}|\E(\rho\mid\rhat = t_1,r = t_2, D^n) | +\|r-\rhat\|_\infty+\|\mu-\muhat\|_\infty\right\}}{h}\right)
\end{align*}
\subsection{Proof of Proposition \ref{proposition:cpi_local}}
To enhance clarity in our proof of this proposition, we don't use the $\Pn$ notation and instead use the regular $\sum$ notation. We will repeatedly rely on the following expansion,
\begin{align}\label{eq:taylorY}
    Y & = m(t) + m'(t)(r - t) + \frac{1}{2}m''(\overline{t})(r - t)^2 + \epsilon + \rho,
\end{align}
for some intermediate value $\overline{t}$ between $r$ and $t$. Recall the notation 
\begin{align*}
    & W_{it}(X^n, \rhat; h) = e_1^\intercal \cdot \widehat{Q}_{ht}^{-1}\cdot H_{ht}(\rhat_i), \quad H_{ht}(\rhat) = K_{ht}(\rhat) g_{ht}(\rhat), \quad \widehat{Q}_{ht} = \frac{1}{n}\sum_{i = 1}^n K_{ht}(\rhat_i) g_{ht}(\rhat_i) g_{ht}(\rhat_i)^\intercal, \\
    & W_{2, it, -j}(X^n, \rhat; b) = e_2^\intercal \cdot \widehat{Q}_{bt, -j}^{-1} \cdot b^{-1} \cdot H_{bt}(\rhat_i), \quad \widehat{Q}_{bt, -j} = \frac{1}{n-1}\sum_{i = 1, i \neq j}^n K_{bt}(\rhat_i) g_{bt}(\rhat_i) g_{bt}(\rhat_i)^\intercal \\
    & \widehat\varphi(Z_i, Z_j; h, b) = W_{2, it, -j}(X^n, \rhat; b) Y_i \cdot W_{jt}(X^n, \rhat; h)(A_j - t) \\
    & \mhat_{lcpi}(t; h, b) = \frac{1}{n}\sum_{i =1}^n W_{it}(X^n, \rhat; h) Y_i - \frac{1}{n(n-1)}\mathop{\sum\sum}_{1 \leq i \neq j \leq n} \widehat\varphi(Z_i, Z_j; h, b)
\end{align*}
By the properties of local-linear weights (see, e.g., Proposition 1.12 in \cite{tsybakov2008introduction}), we have
\begin{align*}
    & \frac{1}{n}\sum_{i = 1}^n W_{it}(X^n, \rhat; h) = 1, \quad \frac{1}{n}\sum_{i = 1}^n W_{it}(X^n, \rhat; h)(\rhat_i - t) = 0, \quad \frac{1}{n-1}\sum_{i = 1, i \neq j}^n W_{2, it, -j}(X^n, \rhat; b) = 0, \\
    & \text{and } \quad \frac{1}{n-1}\sum_{i = 1, i \neq j}^n W_{2, it, -j}(X^n, \rhat; b)(\rhat_i-t) = 1.
\end{align*}
By \eqref{eq:taylorY}, we have
\begin{align*}
   \frac{1}{n}\sum_{i =1}^n W_{it}(X^n, \rhat; h) Y_i & = m(t) + m'(t) \cdot \frac{1}{n}\sum_{i = 1}^n  W_{it}(X^n, \rhat; h)(r_i - t) \\
   & \hphantom{=} + \frac{1}{n}\sum_{i = 1}^n  W_{it}(X^n, \rhat; h)\left\{\frac{1}{2}m''(\overline{t}_i)(r_i - t)^2 + \epsilon_i + \rho_i\right\}.
\end{align*}
Furthermore, we have
\begin{align*}
    & \frac{1}{n(n-1)}\mathop{\sum\sum}_{1 \leq i \neq j \leq n} \widehat\varphi(Z_i, Z_j; h, b) \\
    & = \frac{1}{n(n-1)}\mathop{\sum\sum}_{1 \leq i \neq j \leq n} W_{2, it, -j}(X^n, \rhat; b)\{m(t) + m'(t)(\rhat_i - t)\} \cdot W_{jt}(X^n, \rhat; h)(A_j - t) \\
    & \hphantom{=} + \frac{1}{n(n-1)}\mathop{\sum\sum}_{1 \leq i \neq j \leq n} W_{2, it, -j}(X^n, \rhat; b) \left\{m'(t)(r_i - \rhat_i) + \frac{1}{2}m''(\overline{t}_i)(r_i - t)^2 + \epsilon_i + \rho_i\right\} \\
    & \qquad\qquad\qquad\qquad\qquad \times \ W_{jt}(X^n, \rhat; h)(A_j - t)  \\
    & = m'(t) \cdot \frac{1}{n}\sum_{i = 1}^n  W_{it}(X^n, \rhat; h) (A_i-t)\\
     & \hphantom{=} + \frac{1}{n(n-1)}\mathop{\sum\sum}_{1 \leq i \neq j \leq n} W_{2, it, -j}(X^n, \rhat; b) \left\{m'(t)(r_i - \rhat_i) + \frac{1}{2}m''(\overline{t}_i)(r_i - t)^2 + \epsilon_i + \rho_i\right\} \\
    & \qquad\qquad\qquad\qquad\qquad \times \ W_{jt}(X^n, \rhat; h)(A_j - t)
\end{align*}
In this light, we have
\begin{align}\label{eq:main_error_derivative}
    & \mhat_{lcpi}(t; h, b) - m(t) \nonumber \\
    & = \frac{1}{n}\sum_{i = 1}^n  W_{it}(X^n, \rhat; h)\left\{\frac{1}{2}m''(\overline{t}_i)(r_i - t)^2 + \epsilon_i - m'(t)(A_i - r_i) + \rho_i\right\} \nonumber \\
    & \hphantom{=} - \frac{1}{n(n-1)}\mathop{\sum\sum}_{1 \leq i \neq j \leq n} W_{2, it, -j}(X^n, \rhat; b) \left\{m'(t)(r_i - \rhat_i) + \frac{1}{2}m''(\overline{t}_i)(r_i - t)^2 + \epsilon_i + \rho_i\right\} \nonumber \\
    & \qquad\qquad\qquad\qquad\qquad \times \ W_{jt}(X^n, \rhat; h)(A_j - t)
\end{align}
The first term can be analyzed in a way similar to that of the proof of Proposition \ref{proposition:bc_local}. In particular, 
\begin{align*}
    W_{it}(X^n, \rhat; h) = e_1^\intercal \widehat{Q}_{ht}^{-1} \{H_{ht}(\rhat) - H_{ht}(r)\} + e_1^\intercal \widehat{Q}_{ht}^{-1}(Q_{ht} - \widehat{Q}_{ht})Q_{ht}^{-1} H_{ht}(r) + e_1^\intercal Q_{ht}^{-1} H_{ht}(r).
\end{align*}
The same arguments as those from Lemma \ref{lemma:min_eigen_local}, yields that 
\begin{align*}
    \lambda_\min(\widehat{Q}_{ht}) \geq c - \|\widehat{Q}_{ht} - Q_{ht}\|_{\rm F} = c - O_{\Pb}\left(\frac{1}{\sqrt{n}h}\cdot (h+\|r-\rhat\|_\infty)^{1/2} + \frac{\|\rhat - r\|_\infty}{h^2}\cdot(h+\|r-\rhat\|_\infty)\right).
\end{align*}
Furthermore, we have already shown in Eq.(\ref{eq:cpi_local_1})
\begin{align*}
  & \frac{1}{n} \sum_{i = 1}^n \{H_{ht}(\rhat_i) - H_{ht}(r_i)\}\{\epsilon_i - m'(t) (A_i - r_i)\} =  O_{\Pb}\left(\frac{\|\rhat - r\|_\infty}{\sqrt{n}h^2}\cdot(h+\|r-\rhat\|_\infty)^{1/2}\right).
\end{align*}
Also because Eq.(\ref{eq:cpi_local_2}), we have
\begin{align*}
   &\frac{1}{n} \sum_{i = 1}^n \{H_{ht}(\rhat_i) - H_{ht}(r_i)\} \rho_i  = \frac{1}{n} \sum_{i = 1}^n \int_0^1 H'_{ht}(r_i + u(\rhat_i - r_i)) (\rhat_i - r_i) du \cdot \rho_i \\& = O_{\Pb}\left(\Pb\left\{|r-\rhat|\cdot|\E(\rho\mid r,\rhat, D^n)|\cdot\left\|\int_0^1H'_{ht}(r+u(\rhat-r))du\right\|_2\right\}\right)\\
   &\hphantom{=} +O_{\Pb}\left(\sqrt{\frac{\Pb \{(r-\rhat)^2\E(\rho^2\mid r,\rhat, D^n)\left\|\int_0^1H'_{ht}(r+u(\rhat-r))du\right\|_2^2\}}{n}}\right)\\
   & = O_{\Pb}\left(\frac{\|\rhat - r\|_\infty \sup_{t_1,t_2}|\E(\rho\mid\rhat = t_1,r = t_2, D^n) |}{h^2} \cdot(h+\|r-\rhat\|_\infty) \right. \\
   & \left. \qquad \hphantom{=} + \frac{\|\rhat - r\|_\infty\|\E(\rho^2\mid r,\rhat, D^n)\|_\infty^{1/2}}{\sqrt{n}h^2}\cdot (h+\|r-\rhat\|_\infty)^{1/2}\right)
\end{align*}
Similarly,
\begin{align*}
   & \frac{1}{n} \sum_{i = 1}^n \{H_{ht}(\rhat_i) - H_{ht}(r_i)\}\cdot\frac 12 m''(\bar t_i)(r_i-t)^2=\\
    &  \hphantom{=} O_{\Pb}\left(\frac{\|r-\rhat\|_\infty}{h^2}\cdot(h+\|r-\rhat\|_\infty)^3+\frac{\|r-\rhat\|_\infty}{\sqrt n h^2}\cdot(h+\|r-\rhat\|_\infty)^{5/2}\right).
\end{align*}
 Under the assumption the density $f_r$ is bounded above and away from zero, $Q_{ht}$ has eigenvalues bounded above and below away from zero. In addition, since $nh \to \infty$ and $\|\rhat - r\|_\infty = o(h)$, we have  $\|\widehat{Q}_{ht}^{-1}\|_{\rm op}$ is bounded with probability tending to one.  Further, \begin{align*}
     e_1^\intercal Q_{ht}^{-1} H_{ht}(r) = \{f_r(t)^{-1} + O(h)\} K_{ht}(r),
 \end{align*}
 so that
\begin{align*}
   & \frac{1}{n}\sum_{i = 1}^n  W_{it}(X^n, \rhat; h)\left\{\epsilon_i + \rho_i - m'(t)(A_i - r_i) + \frac{1}{2}m''(\overline{t}_i)(r_i - t)^2 \right\} \\
   & = \frac{1}{f_r(t)} \cdot \left[\frac{1}{n} \sum_{i = 1}^n K_{ht}(r_i) \{\epsilon_i + \rho_i - m'(t) (A_i - r_i)\}\right] + \frac{1}{2} m''(t) h^2 \int K(u)u^2 du \\
   & \hphantom{=} + O_{\Pb}\left(\frac{\|\rhat - r\|_\infty}{h} \cdot \left\{\frac{1}{\sqrt{nh}} + 
\sup_{t_1,t_2}|\E(\rho\mid\rhat = t_1,r = t_2, D^n) |\right\}\right) + o_\Pb\left(h^2 + \frac{1}{\sqrt{nh}}\right).
\end{align*}
Our proof is finished after applying Lemma \ref{lemma:extra_Op} bounding the order of the $U$-statistic term in Eq. \eqref{eq:main_error_derivative}.
\medskip
\begin{lemma}\label{lemma:extra_Op}
Under the conditions of Proposition \ref{proposition:cpi_local}, it holds that
\begin{align*}
& \frac{1}{n(n-1)}\mathop{\sum\sum}_{1 \leq i \neq j \leq n} W_{2, it, -j}(X^n,\rhat; b) \left\{m'(t)(r_i - \rhat_i) + \frac 12 m''(\bar t_i)(r_i-t)^2  + \epsilon_i + \rho_i\right\} W_{jt}(X^n,\rhat; h)(A_j - t) \\
& \quad\quad = O_{\Pb}\left(\frac{1}{\sqrt{n^2 h b^3}} + \frac{\|\rhat - r\|_\infty + \|\E(\rho\mid \rhat,r, D^n)\|_\infty}{b \sqrt{n h}} + \frac{\|\rhat - r\|_\infty}{\sqrt{nb^3}}+\frac{b}{\sqrt{nh}}\right)\\&\quad\quad \quad + O_{\Pb}\left(b \cdot \|\rhat - r\|_\infty + \frac{\|\rhat - r\|_\infty\{\|\rhat - r\|_\infty + \|\E(\rho\mid \rhat,r, D^n)\|_\infty\}}{b}\right).
\end{align*}

\end{lemma}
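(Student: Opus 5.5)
The plan is to factor the double sum through the local-linear structure of the weights and then bound the resulting pieces one at a time. Write $U_n$ for the $U$-statistic on the left-hand side of Lemma \ref{lemma:extra_Op}, and set $\delta_i := m'(t)(r_i - \rhat_i) + \frac12 m''(\bar t_i)(r_i-t)^2 + \epsilon_i + \rho_i$ and $w_j := W_{jt}(X^n,\rhat;h)(A_j - t)$.

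\textbf{Step 1: replace $A_j - t$ by $A_j - \rhat_j$.} Since $\frac1n\sum_j W_{jt}(\rhat_j - t)=0$, the sum $\frac1n\sum_j w_j$ equals $\frac1n\sum_j W_{jt}(A_j-\rhat_j)$. This is $O_\Pb((nh)^{-1/2} + \|\rhat-r\|_\infty)$: the first part comes from $A_j - r_j$, which is mean zero given $X$, and the second from $r_j-\rhat_j$. The leave-$j$-out matrix $\widehat Q_{bt,-j}$ differs from the full-sample $\widehat Q_{bt}$ by an $O((nb)^{-1})$ rank-one correction. I will first replace $W_{2,it,-j}$ by the full-sample weight $W_{2,it}$ and carry the correction as a remainder. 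It should contribute at most $1/(n b\cdot b)$ times typical sizes, which is absorbed in $(n^2hb^3)^{-1/2}$ under $nb\to\infty$.

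\textbf{Step 2: factor into a product of two averages.} With full-sample weights the sum over $i\neq j$ factorizes, up to diagonal terms, as
\[
\Big\{\tfrac{1}{n}\textstyle\sum_i W_{2,it}\delta_i\Big\}\Big\{\tfrac1n\textstyle\sum_j w_j\Big\}.
\]
The first factor is the local-linear derivative estimation error at bandwidth $b$. I will split it by component of $\delta_i$ and use $W_{2,it}=O(b^{-1}\cdot b^{-1}K)$:
\begin{itemize}
\item The $m''$ part gives $O(b)$.
\item The $r-\rhat$ part gives $O(\|\rhat-r\|_\infty/b)$.
\item The $\epsilon$ part gives $O_\Pb((nb^3)^{-1/2})$.
\item The $\rho$ part: using $\E(\rho\mid\rhat,r,D^n)$ for the conditional mean plus a centered fluctuation, it gives $O(\|\E(\rho\mid\rhat,r,D^n)\|_\infty/b + (nb^3)^{-1/2})$.
\end{itemize}

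\textbf{Step 3: multiply out, keeping the dependence structure.} Naively multiplying the two factors gives cross products such as $(nb^3)^{-1/2}(nh)^{-1/2}$, which is the wrong size. The degenerate part of the $U$-statistic should instead give $(n^2hb^3)^{-1/2}$. So I will not take products of $O_\Pb$ bounds blindly. Instead I use a Hoeffding-type decomposition conditional on $(X^n,D^n)$, where the weights are fixed:
\begin{itemize}
\item The term in $\epsilon_i(A_j-r_j)$ is a degenerate second-order $U$-statistic. Its conditional variance is $n^{-2}\sum_{i\ne j}W_{2,it}^2W_{jt}^2\cdot O(1)$, which is $O(n^{-2}\cdot n^{2}\cdot (nb^3)^{-1}(nh)^{-1})$. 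This gives the term $(n^2hb^3)^{-1/2}$.
\item The products of a mean term in one index with a centered term in the other give $\frac{b+\|\rhat-r\|_\infty/b+\|\E\rho\|_\infty/b}{\sqrt{nh}}$ and $\frac{\|\rhat-r\|_\infty}{\sqrt{nb^3}}$.
\item The product of the two mean terms gives $b\|\rhat-r\|_\infty$ and $\|\rhat-r\|_\infty(\|\rhat-r\|_\infty+\|\E\rho\|_\infty)/b$.
\end{itemize}
The $\rho_i$ fluctuation pairs with $A_j-r_j$ in the same way as $\epsilon_i$. The diagonal terms $i=j$ are excluded by construction, so $\rho_j$ and $A_j$ never appear in the same summand.

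Throughout, the matrices $\widehat Q_{ht}^{-1}$ and $\widehat Q_{bt}^{-1}$ are bounded in operator norm with probability tending to one, by the argument of Lemma \ref{lemma:min_eigen_local} and the assumption $\|\rhat-r\|_\infty=o(\min(h,b))$. The indicator-support bounds, $|\rhat-t|\le h$ implying $|r-t|\le h+\|\rhat-r\|_\infty$, control the kernel moments exactly as in the proof of Proposition \ref{proposition:bc_local}.

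\textbf{Main obstacle.} The hardest step is the dependence of all the weights on the whole sample through $\widehat Q$. This breaks the clean conditional-independence structure, and the factorization above holds only after conditioning on $(X^n,D^n)$. Conditioning fixes the weights, but the $\rho$ terms are then not mean zero. I plan to handle this by decomposing $\rho_i=\E(\rho_i\mid\rhat_i,r_i,D^n)+\tilde\rho_i$, treating $\tilde\rho_i$ like $\epsilon_i$. I will justify the second-moment calculations via the bounded operator norm of $\widehat Q^{-1}$ on a high-probability event.
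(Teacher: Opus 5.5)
As written, your conditioning step fails. You freeze the weights by conditioning on $(X^n,D^n)$. Under that conditioning, $\rho_i=\mu(X_i)-m(r_i)$ is deterministic, and so is your $\tilde\rho_i=\rho_i-\E(\rho_i\mid\rhat_i,r_i,D^n)$, so it cannot be treated like $\epsilon_i$. Taken literally, the $\rho$-parts of your mean$\times$mean and mean$\times$centered pieces can then only be controlled through $|\rho_i|\lesssim 1$. That yields $\|\rhat-r\|_\infty/b$ and $1/(b\sqrt{nh})$, which are not in the statement; the statement has $\|\E(\rho\mid\rhat,r,D^n)\|_\infty$ in their place.

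The missing observation is the paper's key device. Every weight $W_{jt}$, $W_{2,it}$, $W_{2,it,-j}$ is a function of $\rhat^n=(\rhat(X_1),\dots,\rhat(X_n))$ and $D^n$ alone. So you should condition on the coarser $\mathcal{F}=\sigma(\rhat^n,r^n,D^n)$; the paper uses $\sigma(\rhat^n,D^n)$. Given $\mathcal{F}$, the weights are fixed and the observations are conditionally independent. Moreover, $\epsilon_i$, $A_j-r_j$ and $\tilde\rho_i$ are all conditionally centered, which is exactly what your Hoeffding split needs.

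With that change your argument works and reproduces the bound term by term. Write $\delta_i=a_i+\xi_i$ with $a_i=m'(t)(r_i-\rhat_i)+\tfrac12 m''(\bar t_i)(r_i-t)^2+\E(\rho_i\mid\rhat_i,r_i,D^n)$ and $\xi_i=\epsilon_i+\tilde\rho_i$, and write $A_j-\rhat_j=(r_j-\rhat_j)+(A_j-r_j)$. Then:
\begin{itemize}
\item The $\mathcal{F}$-measurable piece gives $b\|\rhat-r\|_\infty+\|\rhat-r\|_\infty(\|\rhat-r\|_\infty+\|\E(\rho\mid\rhat,r,D^n)\|_\infty)/b$.
\item The two linear pieces give $\{b+(\|\rhat-r\|_\infty+\|\E(\rho\mid\rhat,r,D^n)\|_\infty)/b\}/\sqrt{nh}$ and $\|\rhat-r\|_\infty/\sqrt{nb^3}$.
\item The degenerate piece gives $(n^2hb^3)^{-1/2}$. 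Its $(i,j)/(j,i)$ variance cross-terms do not vanish, because $\E\{\epsilon(A-r)\mid X\}\neq 0$. They are $O\{(nb\max(b,h))^{-2}\}$, however, and are absorbed since $\max(b,h)\ge\sqrt{bh}$.
\end{itemize}

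Two bookkeeping points remain.
\begin{itemize}
\item The weight swap costs $O_{\Pb}\{1/(nb\max(b,h))\}$, and you need this sharper form. A bound of $1/(nb^2)$ exceeds $(n^2hb^3)^{-1/2}$ when $h>b$.
\item Because the $j$-sum excludes $j=i$, your orthogonality step leaves the diagonal remainder $-\{n(n-1)\}^{-1}\sum_i W_{2,it}\delta_iW_{it}(\rhat_i-t)=O_{\Pb}\{1/(nb)\}$. This is also absorbed.
\end{itemize}

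The paper instead keeps the leave-one-out weights and bounds $\E(U^2)$ by enumerating index-coincidence patterns ($T_{ij}^2$, $T_{ij}T_{ji}$, $T_{ij}T_{il}$, \dots, $T_{ij}T_{lm}$). Your swap-then-Hoeffding organization is more transparent, since each term of the bound comes from exactly one piece. The price is the extra weight-swap remainder.
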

\begin{proof}
To prove the statement, we break the double sum into two terms according to:
\begin{align*}
W_{jt}(X^n,\rhat;h)(A_j - t) = W_{jt}(X^n,\rhat;h)(A_j - \rhat_j) + W_{jt}(X^n,\rhat;h)(\rhat_j - t).
\end{align*}
We start with the first term, i.e., 
\begin{align*}
& \frac{1}{n(n-1)}\mathop{\sum\sum}_{1 \leq i \neq j \leq n} W_{2, it, -j}(X^n,\rhat; b) \left\{m'(t)(r_i - \rhat_i) + \frac 12 m''(\bar t_i)(r_i-t)^2  + \epsilon_i + \rho_i\right\} W_{jt}(X^n,\rhat; h)(A_j - \rhat_j) \\
 & \equiv \frac{1}{n(n-1)}\mathop{\sum\sum}_{1 \leq i \neq j \leq n} T_{ij} 
\end{align*}

Under the assumption that $Q_{bt}$ has eigenvalues bounded above and below away from zero, $nb \to \infty$, $\|\rhat - r\|_\infty = o(b)$, $\|\widehat{Q}_{bt,-j}^{-1}\|_{\rm op}$ is also bounded, uniformly over $j$, with probability tending to one:

\begin{align*}
    \widehat{Q}_{bt, -j} = \frac{n}{n-1} \widehat{Q}_{bt} - \frac{K_{bt}(\rhat_j) g_{bt}(\rhat_j) g_{bt}(\rhat_j)^\intercal}{(n-1)} \implies \min_j \lambda_\min(\widehat{Q}_{bt, -j}) \gtrsim 1.
\end{align*}
Therefore,
we have
\begin{align*}
&|W_{2, it, -j}(X^n,\rhat; b)| \ \lesssim b^{-1}\|H_{bt}(\rhat_i)\|_2 \lesssim \frac 1{b^2}\cdot 1(|\rhat_i-t|\leq b)\quad \text{ and } \\&|W_{jt}(X^n,\rhat; h)| \ \lesssim \|H_{ht}(\rhat_j)\|_2\lesssim \frac 1h \cdot 1(|\rhat_j-t|\leq h).
\end{align*}
For shorthand notation let $S_i = m'(t)(r_i - \rhat_i) + \frac 12 m''(\bar t_i)(r_i-t)^2 + \epsilon_i + \rho_i$, $G_j = A_j - \rhat_j$ . Then  $T_{ij} =W_{2, it, -j}(X^n,\rhat; b)\cdot S_i\cdot W_{jt}(X^n,\rhat; h)\cdot G_j$. Let also $\rhat^n$ denote the vector of random variables $\rhat(X_1), \ldots, \rhat(X_n)$. 

Since $\E(\epsilon_i \mid X_i, D^n) = 0$ and $ |r_i - \rhat_i + \frac 12m''(\bar t_i)(r_i-t)^2| \ \lesssim \|\rhat - r\|_\infty + (\rhat_i - t)^2$, we can bound
\begin{align*}
    |\E(S_i\mid \rhat^n, D^n)| & \lesssim\|\rhat-r\|_\infty+(\rhat_i-t)^2+\sup_{t}|\E(\rho\mid \rhat = t, D^n)|,
\end{align*}
\begin{align*}
        |\E(G_j\mid \rhat^n , D^n)| & \leq\|r-\rhat\|_\infty,
\end{align*} 
and
\begin{align*}
        |\E(S_i G_i\mid \rhat^n, D^n)| & \leq \sqrt{\E(S_i^2\mid \rhat^n, D^n)\cdot\E(G_i^2\mid \rhat^n, D^n)}\lesssim 1 
\end{align*}

Under the assumption that $\|\rhat - r\|_\infty = o(b)$, we have that
\begin{align*}
    |\Pb\left[T_{ij}^2\right]| 
    &\leq \Pb\left[W_{2, it, -j}^2(X^n,\rhat; b) \cdot \E(S_i^2\mid \rhat^n, D^n) \cdot W_{jt}^2(X^n,\rhat; h) \cdot \E(G_j^2\mid \rhat^n, D^n) \right] \\
    &\lesssim \Pb\left[\|H_{ht}(\rhat_j)\|_2^2\cdot \frac 1{b^2}\|H_{bt}(\rhat_i)\|_2^2\right] 
    \lesssim\frac{1}{b^3 h},
    \end{align*}
\begin{align*}
    \left| \Pb(T_{ij}T_{ji} )\right|
    &\leq \Pb\Big[|W_{2, it, -j}(X^n,\rhat; b)| \cdot |W_{jt}(X^n,\rhat; h)| \cdot |W_{2, jt, -i}(X^n,\rhat; b)| \cdot |W_{it}(X^n,\rhat; h)| \\
    &\qquad\qquad \cdot |\E(S_iG_i\mid \rhat^n, D^n)|\cdot|\E(S_j G_j\mid \rhat^n, D^n)| \Big] \\
    &\lesssim \frac{1}{b^2 \cdot \max(b^2, h^2)},
\end{align*}
\begin{align*}
      |\Pb(T_{ij}T_{il})|
    &\leq \Pb\Big[|W_{2, it, -j}(X^n,\rhat; b)| \cdot |W_{2, it, -l}(X^n,\rhat; b)| \cdot |W_{jt}(X^n,\rhat; h)| \cdot |W_{lt}(X^n,\rhat; h)| \\
    &\qquad\qquad \cdot |\E(S_i^2\mid \rhat^n, D^n)|\cdot|\E(G_j\mid \rhat^n, D^n)|\cdot|\E(G_l\mid \rhat^n, D^n)| \Big] \\
    &\lesssim \frac{\|r-\rhat\|_\infty^2}{b^3},
\end{align*}
\begin{align*}
    |\Pb(T_{ij}T_{li})| 
    &\leq \Pb\Big[|W_{2, it, -j}(X^n,\rhat; b)| \cdot |W_{2, lt, -i}(X^n,\rhat; b)| \cdot |W_{jt}(X^n,\rhat; h)| \cdot |W_{lt}(X^n,\rhat; h)| \\
    &\qquad\qquad \cdot |\E(S_iG_i\mid \rhat^n D^n)|\cdot|\E(G_j\mid \rhat^n, D^n)|\cdot|\E(S_l\mid \rhat^n, D^n)| \Big] \\
    &\lesssim \frac{(\|\rhat - r\|_\infty + b^2 + \sup_t|\E(\rho\mid \rhat, D^n)|)\cdot \|\rhat - r\|_\infty}{b^2 \cdot \max(b, h)},
\end{align*}
\begin{align*}
    |\Pb(T_{ij}T_{lj})| 
    &\leq \Pb\Big[|W_{2, it, -j}(X^n,\rhat; b)| \cdot |W_{2, lt, -j}(X^n,\rhat; b)| \cdot |W_{jt}(X^n,\rhat; h)|^2 \\
    &\qquad\qquad \cdot |\E(S_i\mid \rhat^n, D^n)|\cdot|\E(G_j^2\mid \rhat^n, D^n)|\cdot|\E(S_l\mid \rhat^n, D^n)| \Big] \\
    &\lesssim \frac{\{\|\rhat - r\|_\infty + b^2 + \sup_t |\E(\rho\mid \rhat = t, D^n)|\}^2}{b^2h},
\end{align*}
\begin{align*}
    |\Pb(T_{ij}T_{jl})| 
    &\leq \Pb\Big[|W_{2, it, -j}(X^n,\rhat; b)| \cdot |W_{2, jt, -l}(X^n,\rhat; b)| \cdot |W_{jt}(X^n,\rhat; h)| \cdot |W_{lt}(X^n,\rhat; h)| \\
    &\qquad\qquad \cdot |\E(S_i\mid \rhat^n, D^n)|\cdot|\E(S_jG_j\mid \rhat^n, D^n)|\cdot|\E(G_l\mid \rhat^n, D^n)| \Big] \\
    &\lesssim \frac{(\|\rhat - r\|_\infty + b^2 + \|\E(\rho\mid \rhat, D^n)\|_\infty)\cdot \|\rhat - r\|_\infty}{b^2 \cdot \max(b, h)},
\end{align*}
\begin{align*}
    |\Pb(T_{ij}T_{lm})| 
    &\leq \Pb\Big[|W_{2, it, -j}(X^n,\rhat; b)| \cdot |W_{2, lt, -m}(X^n,\rhat; b)| \cdot |W_{jt}(X^n,\rhat; h)| \cdot |W_{mt}(X^n,\rhat; h)| \\
    &\qquad\qquad \cdot |\E(S_i\mid \rhat^n, D^n)|\cdot|\E(G_j\mid \rhat^n, D^n)|\cdot|\E(S_l\mid \rhat^n, D^n)| \cdot|\E(G_m\mid \rhat^n, D^n)| \Big] \\
    &\lesssim \frac{(\|\rhat - r\|_\infty + b^2 + \sup_t|\E(\rho\mid \rhat = t, D^n)|)^2\cdot \|\rhat - r\|^2_\infty}{b^2}.
\end{align*}
Next, we write
\begin{align*}
    \left(\mathop{\sum\sum}_{1 \leq i \neq j \leq n} T_{ij}\right)^2 & = \mathop{\sum\sum}_{1 \leq i \neq j \leq n} (T^2_{ij} + T_{ij}T_{ji}) + \mathop{\sum\sum\sum}_{1 \leq i \neq j \neq l \leq n} (T_{ij}T_{il} + T_{ij}T_{li} + T_{ij}T_{lj} + T_{ij}T_{jl}) \\
    & \hphantom{=} + \mathop{\sum\sum\sum\sum}_{1 \leq i \neq j \neq l \neq m \leq n} T_{ij}T_{lm}
\end{align*}
and, using the inequalities above, we bound
\begin{align*}
    & \left\{\E\left(\frac{1}{n(n-1)}\mathop{\sum\sum}_{1 \leq i \neq j \leq n} T_{ij}\right)^2\right\}^{1/2} \\
    & \lesssim \frac{1}{\sqrt{n^2 h b^3}} + \frac{\|\rhat - r\|_\infty + \sup_t|\E(\rho\mid \rhat = t, D^n)|}{b \sqrt{n h}}+ \frac{\|\rhat - r\|_\infty}{\sqrt{nb^3}}+\frac{b}{\sqrt {nh}} \\ &\quad+ b \cdot \|\rhat - r\|_\infty + \frac{\|\rhat - r\|_\infty(\|\rhat - r\|_\infty + \sup_t|\E(\rho\mid \rhat, D^n)|)}{b}
\end{align*}
The result follows since $U = O_{\Pb}(\sqrt{\E
(U^2)})$.
Finally, consider the second term, i.e.,
\begin{align*}
& \frac{1}{n(n-1)}\mathop{\sum\sum}_{1 \leq i \neq j \leq n} W_{2, it, -j}(X^n,\rhat; b) \{m'(t)(r_i - \widehat{r}_i) + \frac 12 m''(\bar t_i)(r_i-t)^2+ \epsilon_i + \rho_i\} W_{jt}(X^n,\rhat; h)(\rhat_j-t).
\end{align*}
Fix $i$ and write
\begin{align*}
    \frac{1}{n-1}\sum_{j = 1, j \neq i}^n W_{jt}(X^n,\rhat; h)(\rhat_j - t) & = e_1^T \widehat{Q}^{-1}_{ht}\frac{1}{n-1}\sum_{j = 1, j \neq i}^n K_{ht}(\rhat_j) g_{ht}(\rhat_j) g^T_{ht}(\rhat_j) \begin{bmatrix} 0 \\ h \end{bmatrix}\\
    & = e_1^T \widehat{Q}^{-1}_{ht}\left(\frac{1}{n-1}-\frac{1}{n}\right)\sum_{j = 1, j \neq i}^n K_{ht}(\rhat_j) g_{ht}(\rhat_j) g^T_{ht}(\rhat_j) \begin{bmatrix} 0 \\ h \end{bmatrix} \\
    & \hphantom{=} -e_1^T\widehat{Q}_{ht}^{-1} \frac{1}{n} K_{ht}(\rhat_i) g_{ht}(\rhat_i) g^T_{ht}(\rhat_i) \begin{bmatrix} 0 \\ h \end{bmatrix} \\
    & \hphantom{=} + e_1^T \widehat{Q}^{-1}_{ht}\frac{1}{n}\sum_{j = 1}^n K_{ht}(\rhat_j) g_{ht}(\rhat_j) g^T_{ht}(\rhat_j) \begin{bmatrix} 0 \\ h \end{bmatrix}
\end{align*}
The third term is equal to zero, while the first two terms are $O(n^{-1})$. Let 
\begin{align*}
    \widehat{Q}_{ht, -i} = (n-1)^{-1} \sum_{j = 1, j \neq i}^n K_{ht}(\rhat_j) g_{ht}(\rhat_j)g^T_{ht}(\rhat_j).
\end{align*} 
We have
\begin{align*}
& \frac{1}{n(n-1)}\mathop{\sum\sum}_{1 \leq i \neq j \leq n} W_{2, it, -j}(X^n,\rhat; b) \cdot S_i \cdot W_{jt}(X^n,\rhat; h)(\rhat_j - t)\\
& = \frac{1}{n^2}\sum_{i=1}^n e_2^\intercal\widehat{Q}_{bt}^{-1} b^{-1} H_{bt}(\rhat_i)\cdot S_i\cdot e_1^T\widehat{Q}^{-1}_{ht} \widehat{Q}_{ht, -i} \begin{bmatrix} 0 \\ h \end{bmatrix}\\
& \hphantom{=} + \frac{1}{n(n-1)^2}\mathop{\sum\sum}_{1 \leq i\neq j \leq n}e_2^\intercal\widehat{Q}^{-1}_{bt} \left\{K_{bt}(\rhat_j) g_{bt}(\rhat_j) g^\intercal_{bt}(\rhat_j) - \widehat{Q}_{bt}\right\} \widehat{Q}_{bt, -j}^{-1} \cdot b^{-1} H_{bt}(\rhat_i)\cdot S_i\\
& \hphantom{+ \frac{1}{n(n-1)^2}\mathop{\sum\sum}_{1 \leq i\neq j \leq n}e_2^\intercal\widehat{Q}^{-1}_{bt} \left\{K_{bt}(\rhat_i) g_{bt}(\rhat_i) g^\intercal_{bt}(\rhat_i) - \widehat{Q}_{bt}\right\}} \qquad \cdot W_{jt}(X^n; \rhat, h) (\rhat_j - t)\\
& \hphantom{=} - \frac{1}{n^2}\sum_{i=1}^ne_2^\intercal\widehat{Q}_{bt}^{-1} b^{-1} H_{bt}(\rhat_i)\cdot S_i\cdot e_1^T\widehat{Q}^{-1}_{ht} K_{ht}(\rhat_i) g_{ht}(\rhat_i)g^T_{ht}(\rhat_i) \begin{bmatrix} 0 \\ h \end{bmatrix} \\
& = O_{\Pb}\left(\frac{\|\rhat - r\|_\infty+\|\E(\rho\mid \rhat, D^n)\|_\infty}{n b} + \frac{1}{n} + \frac{1}{n\sqrt{nb^3}}\right)
\end{align*}
In this light, this term is negligible.
\end{proof}
\subsection{Proof of Proposition \ref{proposition:bc_seive}}
We start by proving the first statement. Consider the decompositions:
\begin{align*}
    & \mu = m(r) + \rho = \Delta_k(\rhat) + \Phi_k(\rhat)^\intercal \beta_k + m'(\overline{r})(r - \rhat) + \rho \\
    & \mu = \Delta_k(r) + \rho + \Phi_k(\rhat)^\intercal\beta_k + (A - \rhat)\dot\Phi_k(\rhat)^\intercal\beta_k + \{\Phi_k(r) - \Phi_k(\rhat) - \dot\Phi_k(\rhat)(A - \rhat)\}^\intercal\beta_k,
\end{align*}
where $\Delta_k(u) = m(u) - \Phi_k(u)^\intercal \beta_k$ and $\overline{r}$ is an intermediate value between $r$ and $\rhat$. In this light, we have
\begin{align*}
   \varphi_1(Z,\rhat,\muhat) & = \dot\Phi_k(\rhat)(A - \rhat) \muhat + \Phi_k(\rhat) Y\\
    & = \dot\Phi_k(\rhat)(A - \rhat) (\muhat - \mu) + \dot\Phi_k(\rhat)(A - \rhat)\{\Delta_{k}(\rhat) + \Phi_k(\rhat)^\intercal \beta_k + m'(\overline{r})(r - \rhat) + \rho\} \\
    & \hphantom{=} + \Phi_k(\rhat)\left\{\Delta_k(r) + \Phi_k(\rhat)^\intercal\beta_k + (A - \rhat)\dot\Phi_k(\rhat)^\intercal\beta_k + \rho + \epsilon \right\} \\
    &\hphantom{=}+\Phi_k(\rhat)\{\Phi_k(r) - \Phi_k(\rhat) - \dot\Phi_k(\rhat)(A - \rhat)\}^\intercal\beta_k\\
    & = \left[\Phi_k(\rhat) \Phi_k(\rhat)^\intercal + (A - \rhat) \{\Phi_k(\rhat) \dot\Phi_k(\rhat)^\intercal + \dot\Phi_k(\rhat) \Phi_k(\rhat)^\intercal\}\right]\beta_k\\
    & \hphantom{=} + \dot\Phi_k(\rhat)(A - \rhat) \{(\muhat - \mu) + \Delta_k(\rhat) + m'(\overline{r})(r - \rhat) + \rho\} \\
    & \hphantom{=} + \Phi_k(\rhat)\left[\Delta_k(r) +  \{\Phi_k(r) - \Phi_k(\rhat) - \dot\Phi_k(\rhat)(r - \rhat)\}^\intercal\beta_k - (A - r)\dot\Phi_k(\rhat)^\intercal \beta_k + \rho + \epsilon \right]
\end{align*}
Therefore,
\begin{align*}
    \Pn (\varphi_1(Z,\rhat,\muhat) - \varphi_2(Z,\rhat)\beta_k) & = \Pn\left[\Phi_k(\rhat)\left\{\Delta_k(r) + \epsilon - \dot\Phi_k(\rhat)^\intercal\beta_k (A - r)  \right\}\right] \\&\hphantom{=}+ \Pn\left[\left\{\Phi_k(\rhat) + \dot\Phi_k(\rhat)(A - \rhat)\right\} \rho\right] \\
    & \hphantom{=} + \Pn\left[\dot\Phi_k(\rhat)(A - \rhat)\left\{(\muhat - \mu) + \Delta_k(\rhat) + m'(\overline{r})(r - \rhat)\right\}\right] \\
    & \hphantom{=} + \Pn\left[\Phi_k(\rhat)\left\{\int_0^1 \ddot\Phi_k(r + u(\rhat - r)) u du \right\}^\intercal\beta_k(\rhat - r)^2\right]
\end{align*}
Consider the first term, we have
\begin{align*}
    & \Pn\left[\Phi_k(\rhat)\left\{\Delta_k(r) + \epsilon - \dot\Phi_k(\rhat)^\intercal\beta_k (A - r)  \right\}\right] \\
    & = \Pn\left[\Phi_k(r)\left\{\Delta_k(r) + \epsilon - \dot\Phi_k(r)^\intercal\beta_k (A - r)  \right\}\right] \\
    & \hphantom{=} - \Pn\left[\Phi_k(r) \left\{\int_0^1 \ddot\Phi_k(r + u(\rhat - r))du\right\}^\intercal (\rhat - r)\beta_k (A - r)  \right] \\
    & \hphantom{=} + \Pn\left[\int_0^1 \dot\Phi_k(r + u(\rhat - r)) du (\rhat - r) \left\{\Delta_k(r) + \epsilon - \dot\Phi_k(\rhat)^\intercal\beta_k (A - r)  \right\}\right].
\end{align*}
Similarly, we expand
\begin{align*}
   & \Pn\left[\left\{\Phi_k(\rhat) + \dot\Phi_k(\rhat)(A - \rhat)\right\} \rho\right] \\
   & = \Pn\left[\left\{\Phi_k(r) + \dot\Phi_k(r)(A - r)\right\}\rho \right] - \Pn \left[\left\{\int_0^1\ddot\Phi_k(r + u(\rhat - r)) u du (\rhat - r)^2\right\}\rho\right]  \\
   & \hphantom{=} + \Pn\left\{\int_0^1 \ddot\Phi_k(r + u(\rhat - r)) du (\rhat - r) (A - r) \rho\right\}.
\end{align*}

Putting everything together, we have reached
\begin{align*}
& \Pn (\varphi_1(Z,\rhat,\muhat) - \varphi_2(Z,\rhat)\beta_k) \\
& = \Pn\left[\Phi_k(r)\left\{\Delta_k(r) + \epsilon - \dot\Phi_k(r)^\intercal\beta_k (A - r)  \right\}\right] + \Pn\left[\left\{\Phi_k(r) + \dot\Phi_k(r)(A - r)\right\}\rho \right] \\
& \hphantom{=} - \Pn\left[\Phi_k(r) \left\{\int_0^1 \ddot\Phi_k(r + u(\rhat - r))du\right\}^\intercal (\rhat - r) \beta_k (A - r)  \right] \\
    & \hphantom{=} + \Pn\left[\int_0^1 \dot\Phi_k(r + u(\rhat - r)) du (\rhat - r) \left\{\Delta_k(r) + \epsilon - \dot\Phi_k(\rhat)^\intercal\beta_k (A - r)  \right\}\right] \\
    & \hphantom{=} - \Pn \left\{\int_0^1\ddot\Phi_k(r + u(\rhat - r)) u du (\rhat - r)^2 \rho\right\} + \Pn\left\{\int_0^1 \ddot\Phi_k(r + u(\rhat - r)) du (\rhat - r) (A - r) \rho\right\} \\
     & \hphantom{=} + \Pn\left[\dot\Phi_k(\rhat)(A - \rhat)\left\{(\muhat - \mu) + \Delta_k(\rhat) + m'(\overline{r})(r - \rhat)\right\}\right] \\
    & \hphantom{=} + \Pn\left[\Phi_k(\rhat)\left\{\int_0^1 \ddot\Phi_k(r + u(\rhat - r)) u du \right\}^\intercal\beta_k(\rhat - r)^2\right] \\
    & \equiv \Pn\left[\Phi_k(r)\left\{\Delta_k(r) + \epsilon - \dot\Phi_k(r)^\intercal\beta_k (A - r)  \right\}\right] + \Pn\left[\left\{\Phi_k(r) + \dot\Phi_k(r)(A - r)\right\}\rho \right] + \sum_{j = 1}^6 R_j.
\end{align*}
Since $\|\Pb( R_1) \|_2= 0$, and
\begin{align*}
    \left |\int_0^1\ddot\Phi_k(r+u(\rhat-r))du^\intercal\beta_k\right | \leq \ \max_t\left|\ddot\Phi_k(t)^\intercal\beta_k\right| =\|m''-\ddot\Delta_k\|_\infty\leq\|\ddot\Delta_k\|_\infty+\|m''\|_\infty,
\end{align*}
we have \begin{align*}
        \Pb\|  R_1\|_2^2 & \leq\frac 1n \Pb\left[\Phi_k(r)^\intercal\Phi_k(r)(r-\rhat)^2(r-A)^2\right](\|\ddot\Delta_k\|_\infty+\|m''\|_\infty)^2\\
        & \lesssim\frac{\xi_k^2}{n} \|r-\rhat\|^2_{2,\Pb}.
\end{align*}         Therefore,
\begin{align*}
    \|R_1\|_2 = O_{\Pb}\left(\frac{\xi_k}{\sqrt n}\|r-\rhat\|_{2,\Pb}\right).
\end{align*}
For $R_2$, similarly we can bound $\dot\Phi_k(\rhat)^\intercal\beta_k$ by
\begin{align*}
        \left |\dot\Phi_k(\rhat )^\intercal\beta_k\right | \leq \ \max_t\left|\dot\Phi_k(t)^\intercal\beta_k\right| \leq\|\dot\Delta_k\|_\infty+\|m'\|_\infty.
\end{align*}
Using lemma \ref{lemma:projection}, we can get
\begin{align*}
    \|\Pb  R_2\|_2& = \left\|\Pb\left[\int_0^1\dot\Phi_k(r+u(\rhat-r))du \cdot(\rhat-r)\Delta_k(r)\right]\right \|_2 \\
    & \leq \|\widetilde Q_{\dot\Phi,k}(r,\rhat)\|_{\rm op}^{1/2}\|(r-\rhat)\Delta_k(r)\|_{2,\Pb}.
\end{align*}
Also,
\begin{align*}
     \Pb\|R_2-\Pb R_2\|_2^2&\lesssim\frac{1}{n}\Pb\|R_2\|^2_2\lesssim \frac {\eta_k^2} {n} \Pb\left[(\rhat-r)^2\left\{\Delta_k(r)^2+\epsilon^2+(\|\dot\Delta_k\|_\infty+\|m'\|_\infty)^2\right\}\right]
\end{align*}
Thus, 
\begin{align*}
      \|  R_2\|_2 = O_{\Pb}\left(\frac{\eta_k}{\sqrt n}\|r-\rhat\|_{2,\Pb}+\|\widetilde Q_{\dot\Phi,k}(r,\rhat)\|_{\rm op}^{1/2}\|(r-\rhat)\Delta_k(r)\|_{2,\Pb}\right).
\end{align*}
For term $R_3$, similarly we have\begin{align*}
    \|\Pb R_3\|_2 &= \left\|\Pb\left[\int_0^1\ddot\Phi_k(r+u(\rhat-r))u du \cdot(\rhat-r)^2\E(\rho\mid r,\rhat, D^n)\right]\right\|_2\\
    &\lesssim\|\widetilde{Q}_{\ddot\Phi,k}(r,\rhat)\|_{\rm op}^{1/2}\|(r-\rhat)^2\E(\rho\mid r,\rhat, D^n)\|_{2,\Pb}.
\end{align*} And \begin{align*}
    \Pb\|R_3-\Pb R_3\|_2^2\lesssim \frac 1n\Pb\|R_3\|_2^2\lesssim\frac{\zeta_k^2\|(r-\rhat)^2\|_{2,\Pb}^2}{n}\sup_{t_1,t_2}|\E(\rho^2\mid r=t_1,\rhat=t_2, D^n)|.
\end{align*}
Then we have\begin{align*}
    \|R_3\|_2 &= O_{\Pb}\Big(\|\widetilde{Q}_{\ddot\Phi,k}(r,\rhat)\|_{\rm op}^{1/2}\|(r-\rhat)^2\E(\rho\mid r,\rhat, D^n)\|_{2,\Pb}\\&\hphantom{ \ = O_{\Pb}\Big(}+\frac{\zeta_k}{\sqrt{n}}\|(r-\rhat)^2\|_{2,\Pb}\cdot\sup_{t_1,t_2}|\E(\rho^2\mid r=t_1,\rhat=t_2, D^n)^{1/2}\Big).
\end{align*}
Notice that $\|\Pb R_4\|_2 = 0$, therefore\begin{align*}
    \|R_4\|_2 = O_{\Pb}\left(\sqrt{\frac 1n\Pb \|R_4\|_2^2}\right) = O_{\Pb} \left(\frac{\zeta_k}{\sqrt{n}}\|r-\rhat\|_{2,\Pb}\cdot\sup_{t_1,t_2}|\E(\rho^2\mid r=t_1,\rhat=t_2, D^n)^{1/2}\right)
\end{align*}
Similarly, we can bound $R_5$\begin{align*}
    \|R_5\|_2 = O_{\Pb}\Bigg(&\|Q_{\dot\Phi,k}(\rhat)\|_{\rm op}^{1/2}\left\{
    \|(r-\hat{r})(\hat\mu-\mu)\|_{2,\mathbb{P}}
    +\|(r-\hat{r})\Delta_k(\hat{r})\|_{2,\mathbb{P}}
    +\|(r-\hat{r})^2\|_{2,\mathbb{P}}\right\}\\
    &+\frac{\eta_k}{\sqrt{n}}\left\{
    \|\hat\mu-\mu\|_{2,\mathbb{P}}
    +\|\Delta_k(\hat{r})\|_{2,\mathbb{P}}
    +\|r-\hat{r}\|_{2,\mathbb{P}}\right\}\Bigg),
\end{align*} And by Lemma \ref{lemma:op_bound_basis} and the assumptions we have $\|Q_{\dot\Phi,k}(\rhat)\|_{\rm op} \lesssim \|Q_{\dot\Phi,k}(r)\|_{\rm op}+o_{\Pb}(\|Q_{\dot\Phi,k}(r)\|_{\rm op})$.

And $R_6$ 
\begin{align*}
    \|R_6\|_2 = O_{\Pb}\left(\|Q_{\Phi,k}(\rhat)\|_{\rm op}^{1/2}\|(r-\rhat)^2\|_{2,\Pb}+\frac{\xi_k}{\sqrt n}\|(r-\rhat)^2\|_{2,\Pb}\right),
\end{align*} where $\|Q_{\Phi,k}(\rhat)\|_{\rm op} \lesssim \|Q_{\Phi,k}(r)\|_{\rm op}+o_{\Pb}(\|Q_{\Phi,k}(r)\|_{\rm op})$ under the assumptions and Lemma \ref{lemma:op_bound_basis}.

Because of Lemma \ref{lemma:op_bound_basis}, under the assumptions, we have $\|\Pn \varphi_2(Z,\rhat) - \Pb\varphi_2(Z,r)\|_{\rm op} = o_{\Pb}(1)$. Therefore, it holds that
\begin{align*}
    & \left\|\widehat\beta_{\rm sbc, k} - \beta_k \right\|_2\lesssim \left\|\Pn\left[\Phi_k(r)\left\{\Delta_k(r) + \epsilon - \dot\Phi_k(r)^\intercal\beta_k (A - r)  \right\}+\left\{\Phi_k(r) + \dot\Phi_k(r)(A - r)\right\}\rho \right]\right\|_2 \\
    & \hphantom{ \left\|\widehat\beta_{\rm sbc, k} - \beta_k \right\|_2\lesssim } + O_{\Pb}\left(\sum_{j = 1}^6 \|R_j\|_2\right)\\
    &\hphantom{\left\|\widehat\beta_{\rm sbc, k} - \beta_k \right\|_2}\lesssim O_\Pb\left(\frac{\xi_k+\eta_k\sqrt{\E\{(A-r)^2\rho^2\}}}{\sqrt n}+\sum_{j = 1}^6 \|R_j\|_2\right)
\end{align*}

And for any unit vector $\alpha$, we have 
\begin{align*}
& \alpha^\intercal\left(\widehat\beta_{\rm sbc, k} - \beta_k \right)\\
&= \alpha^\intercal Q_{\Phi, k}^{-1} \Bigg(
\Pn\left[\Phi_k(r)\left\{\Delta_k(r) + \epsilon - \dot\Phi_k(r)^\intercal\beta_k (A - r) + \rho\right\}\right] + \Pn\left\{\dot\Phi_k(r)(A - r)\rho\right\}
\Bigg)\\
&\hphantom{=} + O_{\Pb}\left(S_n + \sum_{j = 1}^6 \|R_j\|_2\right),
\end{align*}
where
\begin{align*}
    S_n = & \ \alpha^\intercal\left[\{\Pn \widehat\varphi_2(Z, \rhat)\}^{-1} - \{\Pb\varphi_2(Z, r)\}^{-1}\right] \\
    & \qquad \cdot \left(\Pn\left[\Phi_k(r)\left\{\Delta_k(r) + \epsilon - \dot\Phi_k(r)^\intercal\beta_k (A - r) + \rho\right\}\right] + \Pn\left\{\dot\Phi_k(r)(A - r)\rho\right\}\right) 
\end{align*}
A crude bound on $\|S_n\|_2$ is thus
\begin{align*}
    \|S_n\|_2 \lesssim \|\Pn \varphi_2(Z, \rhat) - \Pb \varphi_2(Z,r)\|_{\rm op} \|T_n\|_2,
\end{align*}
where
\begin{align*}
    T_n = \Pn\left[\Phi_k(r)\left\{\Delta_k(r) + \epsilon - \dot\Phi_k(r)^\intercal\beta_k (A - r) + \rho\right\} + \dot\Phi_k(r)(A - r)\rho\right].
\end{align*}
The bound then follows by Lemma \ref{lemma:op_bound_basis} and because
\begin{align*}
    \|T_n\|_2 = O_\Pb\left(\sqrt{\frac{\xi^2_k + \eta^2_k \E\{(A - r)^2 \rho^2\}}{n}}\right).
\end{align*}
\subsection{Proof of Proposition \ref{proposition:cpi_sieve}}

Since
    \begin{align*}
   \lambda_{\rm min}(\widehat{Q}_{\Phi,k}(\rhat))\geq c-\|\widehat{Q}_{\Phi,k}(\rhat)-Q_{\Phi,k}(r)\|_{op},
    \end{align*}
    given lemma \ref{lemma:op_bound_basis} and the assumptions, we have
    \begin{align*}
        &\|\widehat{Q}_{\Phi,k}(\rhat)-Q_{\Phi,k}(r)\|_{\rm op} = o_{\Pb}(1) \quad \text{and} \quad \|\widehat{Q}_{\Phi,q}(\rhat)-Q_{\Phi,q}(r)\|_{\rm op} = o_{\Pb}(1).
    \end{align*}
Therefore the minimum eigenvalues of $\widehat{Q}_{\Phi,k}(\rhat)$ and $\widehat{Q}_{\Phi,q}(\rhat)$ are both bounded away from zero with probability tending to 1. We also have
\begin{align*}
    \min_j \lambda_\min(\widehat{Q}_{\Phi, q, -j}) \geq \frac{n}{n-1}\lambda_\min(\widehat{Q}_{\Phi, q}) - \frac{1}{n-1} \cdot \max_j \|\Phi_q(\rhat_j)\Phi_q(\rhat_j)^\intercal\| \gtrsim 1
\end{align*}
as long as $\xi_q^2 = o(n)$. And, similarly,
\begin{align*}
    \sup_i \|\widehat{Q}_{\Phi, k, -i}(\rhat)\|_{\rm op} \leq \frac{1}{n-1} \sup_{i} \|\Phi_k(\rhat_i)\Phi_k(\rhat_i)^\intercal\|_{\rm op} + \frac{n}{n-1} \|\widehat{Q}_{\Phi, k}(\rhat)\|_{\rm op} \lesssim 1
\end{align*}
with probability tending to 1 as long as $\xi_k^2 = o(n)$.

Recall that $\mu = m(r) + \rho$ and, for shorthand notation, write $\Delta_k(u) = m(u) - \Phi_k(u)^\intercal \beta_k$ and $\dot\Delta_q(u) = m'(u) - \dot\Phi_q(u)^\intercal\beta_q$, with $\beta_q = \E\{\Phi_q(r)\Phi_q(r)^\intercal\}^{-1} \E\{\Phi_q(r) m\}$. 

From the decomposition
\begin{align*}
    \mu & = \rho + \Delta_k(\rhat) + \dot\Delta_q(\rhat) (A - \rhat) - m'(\rhat) (A - r) + \{m'(\bar r)-m'(\rhat)\}(r - \rhat) + \Phi_k(\rhat)^\intercal \beta_k + (A - \rhat)\dot\Phi_q(\rhat)^\intercal \beta_q,
\end{align*}
we have
\begin{align*}
 & \widehat{Q}_{\Phi, k}(\rhat)\left(\widehat\beta_{\rm pi, k} - \beta_k\right) \\
 & = \Pn \left(\Phi_k(\rhat) \left[\Delta_k(\rhat) + \dot\Delta_q(\rhat)(r - \rhat) +\dot\Phi_q(\rhat)^\intercal\beta_q(r-A) + \{m'(\bar r)-m'(\rhat)\}(r - \rhat) + \epsilon + \rho \right]\right) \\
& \hphantom{=} + \Pn \left\{\Phi_k(\rhat)(A - \rhat)\dot\Phi_q(\rhat)^\intercal \beta_q \right\}.
\end{align*}
Then by writing $\Phi_k(\rhat) - \Phi_k(r) = \int_0^1 \dot\Phi_k(r + u(\rhat - r)) du \cdot (\rhat - r)$ and $\dot\Phi_q(\rhat) - \dot\Phi_q(r) = \int_0^1 \ddot\Phi_q(r + u(\rhat - r)) du \cdot (\rhat - r)$, the first term  can be expanded as
\begin{align*}
    &\Pn \left[\Phi_k(r) \left\{\Delta_k(r) + \dot\Phi_q(r)^\intercal\beta_q(r - A) + \epsilon + \rho\right\} \right]\\& + \Pn \left[\Phi_k(\rhat)\int_0^1\ddot\Phi_q(r+u(\rhat-r))du^\intercal\beta_q\ (\rhat - r)(r - A)\right] \\
& + \Pn\left[\int_0^1 \dot\Phi_k(r + u(\rhat - r)) du \cdot (\rhat - r) \left\{\Delta_k(r) + \dot\Phi_q(r)^\intercal\beta_q(r-A) + \epsilon + \rho\right\} \right] \\
& + \Pn\left(\Phi_k(\rhat) \left[\dot\Delta_q(\rhat)(r - \rhat) + \{m'(\bar r)-m'(\rhat)\}(r - \rhat)+\Delta_k(\rhat)-\Delta_k(r)\right]\right).  
\end{align*}
From the decomposition
\begin{align*}
    \mu = \rho + \Delta_q(\rhat) + m'(\overline{r})(r - \rhat) + \Phi_q(\rhat)^\intercal \beta_q,
\end{align*}
we have
\begin{align*}
& \frac{1}{n(n-1)}\mathop{\sum\sum}_{1 \leq i \neq j \leq n} \dot\Phi_q(\rhat_j)^\intercal \widehat{Q}^{-1}_{\Phi, q, -j}(\rhat)\Phi_q(\rhat_i) Y_i \times \widehat{Q}_{\Phi, k}(\rhat)^{-1} \Phi_k(\rhat_j)(A_j - \rhat_j) \\
& = \frac{1}{n(n-1)}\mathop{\sum\sum}_{1 \leq i \neq j \leq n} \dot\Phi_q(\rhat_j)^\intercal \widehat{Q}^{-1}_{\Phi, q, -j}(\rhat)\Phi_q(\rhat_i) \left\{ \Delta_q(\rhat_i) + m'(\overline{r}_i)(r_i - \rhat_i) + \rho_i + \epsilon_i\right\} \times
\\&\hphantom{= \frac{1}{n(n-1)}\mathop{\sum\sum}_{1 \leq i \neq j \leq n} \dot\Phi_q(\rhat_j)^\intercal \widehat{Q}^{-1}_{\Phi, q, -j}(\rhat)\Phi_q(\rhat_i)}\quad\quad\widehat{Q}_{\Phi, k}(\rhat)^{-1} \Phi_k(\rhat_j)(A_j - \rhat_j) \\
& \hphantom{=} + \widehat{Q}_{\Phi, k}(\rhat)^{-1} \Pn\left\{\Phi_k(\rhat)(A - \rhat) \dot\Phi_q(\rhat)^\intercal\beta_q\right\}.
\end{align*}
Therefore, 
\begin{align*}
    \widehat{Q}_{\Phi, k}(\rhat)( \widehat\beta_{\rm scpi, k, q} - \beta_k ) &= \Pn \left[\Phi_k(r) \left\{\Delta_k(r) + \dot\Phi_q(r)^\intercal\beta_q(r - A) + \epsilon+\rho\right\} \right] \\
   &\hphantom{=}+\sum_{i=1}^3   R_i-\frac{1}{n(n-1)} \mathop{\sum\sum}_{1 \leq i \neq j \leq n} T_{ij},
\end{align*}
where \begin{align*}
    &R_1 \equiv \Pn\left[\Phi_k(\rhat)\int_0^1\ddot\Phi_q(r+u(\rhat-r))du^\intercal\beta_q\ (\rhat - r)(r - A)\right],\\
& R_2  \equiv\Pn\left[\int_0^1 \dot\Phi_k(r + u(\rhat - r)) du \cdot (\rhat - r) \left\{\Delta_k(r) + \dot\Phi_q(r)^\intercal\beta_q(r-A) + \epsilon + \rho\right\}\right],  \\
& R_3  \equiv \Pn\left(\Phi_k(\rhat) \left[\dot\Delta_q(\rhat)(r - \rhat) + \{m'(\bar r)-m'(\rhat)\}(r - \rhat)+\Delta_k(\rhat)-\Delta_k(r)\right]\right),  \\
    &T_{ij} \equiv  \dot\Phi_q(\rhat_j)^\intercal \widehat{Q}^{-1}_{\Phi, q, -j}(\rhat)\Phi_q(\rhat_i) \left\{ \Delta_q(\rhat_i) + m'(\overline{r}_i)(r_i - \rhat_i) + \rho_i + \epsilon_i\right\}\Phi_k(\rhat_j) (A_j - \rhat_j).
\end{align*}

Since $\|\Pb( R_1) \|_2= 0$, and
\begin{align*}
    \left |\int_0^1\ddot\Phi_q(r+u(\rhat-r))du^\intercal\beta_q\right | \leq \ \max_t\left|\ddot\Phi_q(t)^\intercal\beta_q\right| =\|m''-\ddot\Delta_q\|_\infty\leq\|\ddot\Delta_q\|_\infty+\|m''\|_\infty,
\end{align*}
we have 
\begin{align*}
        \Pb\|  R_1\|_2^2 & \leq\frac 1n \Pb\left[\Phi_k(\rhat)^\intercal\Phi_k(\rhat)(r-\rhat)^2(r-A)^2\right](\|\ddot\Delta_q\|_\infty+\|m''\|_\infty)^2\\
        & \lesssim\frac{\xi_k^2}{n} \|r-\rhat\|^2_{2,\Pb}.
\end{align*} Therefore,
\begin{align*}
    \|R_1\|_2 = O_{\Pb}\left(\frac{\xi_k}{\sqrt n}\|r-\rhat\|_{2,\Pb}\right).
\end{align*}
 
For $R_2$, similarly we can bound $\dot\Phi_q(r)^\intercal\beta_q$ by
\begin{align*}
        \left |\dot\Phi_q(r)^\intercal\beta_q\right | \leq \ \max_r\left|\dot\Phi_q(r)^\intercal\beta_q\right| \leq\|\dot\Delta_q\|_\infty+\|m'\|_\infty.
\end{align*}
Using lemma \ref{lemma:projection}, we can get
\begin{align*}
    \|\Pb  R_2\|_2& = \left\|\Pb\left[\int_0^1\dot\Phi_k(r+u(\rhat-r))du \cdot(\rhat-r)\left\{\Delta_k(r)+\E(\rho|r,\rhat,D^n)\right\}\right]\right \|_2 \\
    & \leq \|\widetilde Q_{\dot\Phi,k}(r,\rhat)\|_{\rm op}^{1/2}\|(r-\rhat)\{\Delta_k(r)+\E(\rho\mid r,\rhat,D^n)\}\|_{2,\Pb}.
\end{align*}
Also,
\begin{align*}
     \Pb\|R_2-\Pb R_2\|_2^2&\lesssim\frac{1}{n}\Pb\|R_2\|^2_2\lesssim \frac {\eta_k^2} {n} \Pb\left[(\rhat-r)^2\left\{\Delta_k(r)^2+\epsilon^2+\rho^2+(\|\dot\Delta_q\|_\infty+\|m'\|_\infty)^2\right\}\right]
\end{align*}
Thus, 
\begin{align*}
      \|  R_2\|_2 = O_{\Pb}\left(\frac{\eta_k}{\sqrt n}\|r-\rhat\|_{2,\Pb}+\|\widetilde Q_{\dot\Phi,k}(r,\rhat)\|_{\rm op}^{1/2}\|(r-\rhat)\left\{\Delta_k(r)+\E(\rho|r,\rhat,D^n)\right\}\|_{2,\Pb}\right)
\end{align*}

For $R_3$, we have that $|\Delta_k(\rhat)-\Delta_k(r)|\leq\|\dot\Delta_k\|_\infty|r-\rhat|$ and $|m'(\bar r)-m'(\rhat)|\leq \|m''\|_\infty|r-\rhat|$. Applying Lemma \ref{lemma:projection}, we can show that
\begin{align*}
    \|\Pb  R_3\|_2& \leq \left\|\Pb\left[\Phi_k(\rhat)(\rhat-r)\dot\Delta_q(\rhat)+\Phi_k(\rhat)(r-\rhat)^2+\Phi_k(\rhat)|r-\rhat|\|\dot\Delta_k\|_\infty\right]\right\|_2\\
    &\lesssim\|Q_{\Phi,k}(\rhat)\|_{\rm op}^{1/2}\left\{\|(r-\rhat)\dot\Delta_q(\rhat)\|_{2,\Pb}+\|(r-\rhat)^2\|_{2,\Pb}+\|r-\rhat\|_{2,\Pb}\|\dot\Delta_k\|_\infty\right\}
\end{align*}
And \begin{align*}
    \Pb\|R_3-\Pb R_3\|_2^2&\lesssim\frac{\xi_k^2}{n}\Pb\left[(r-\rhat)^2\left\{\|\dot\Delta_k\|^2_\infty+\dot\Delta_q(\rhat)^2+(r-\rhat)^2\right\}\right].
\end{align*}
Therefore,
\begin{align*}
    \| R_3\|_2 &= O_{\Pb}\left(\left(\|Q_{\Phi,k}(\rhat)\|^{1/2}_{\rm op}+\frac{\xi_k}{\sqrt n}\right)\left\{\|(r-\rhat)\dot\Delta_q(\rhat)\|_{2,\Pb}+\|\dot\Delta_k\|_\infty\|r-\rhat\|_{2,\Pb}+\|(r-\rhat)^2\|_{2,\Pb}\right\}\right).
\end{align*}
Using the notation $\rhat^n = \rhat(X_1), \ldots, \rhat(X_n)$, we expand $T_{ij}$ as:
\begin{align*}
    \frac{1}{n(n-1)} \mathop{\sum\sum}_{1 \leq i \neq j \leq n}T_{ij} & = \frac{1}{n(n-1)} \mathop{\sum\sum}_{1 \leq i \neq j \leq n}U_{ij}+ \frac{1}{n} \sum_{j = 1}^n V_j+ \frac{1}{n} \sum_{i = 1}^n W_i+ \overline{U} , 
\end{align*}
where \begin{align*}
   & U_{ij} = T_{ij} - \E(T_{ij} \mid A_j, \rhat^n, D^n) -\E(T_{ij} \mid A_i, \epsilon_i, X_i, \rhat^n, D^n) + \E(T_{ij}\mid \rhat^n, D^n), \\
    &V_{j} = \frac{1}{n-1} \sum_{i \neq j}^n \left\{\E(T_{ij} \mid A_j, \rhat^n, D^n) - \E(T_{ij} \mid \rhat^n, D^n)\right\}, \\
    &W_{i}  = \frac{1}{n-1}\sum_{j \neq i}^n \left\{\E(T_{ij} \mid A_i, X_i, \epsilon_i, \rhat^n, D^n) -  \E(T_{ij} \mid \rhat^n, D^n)\right\}, \\
    & \overline{U} = \frac{1}{n(n-1)} \mathop{\sum\sum}_{1 \leq i \neq j \leq n} \E(T_{ij} \mid \rhat^n, D^n).
\end{align*}
By the law of total expectation, all three terms are mean 0 \begin{align*}
    \E(U_{ij}\mid \rhat^n, D^n)= \E(V_j\mid \rhat^n, D^n) = \E(W_i\mid \rhat^n, D^n) = 0.
\end{align*}

Note that,
 \begin{align*}
     \E(U_{ij}\mid A_j, \rhat^n, D^n) &= \E(T_{ij}\mid A_j,\rhat^n, D^n) - \E(T_{ij} \mid A_j, \rhat^n, D^n)\\
     &\ \ \ \ -\E\left\{\E(T_{ij} \mid A_i, X_i, \epsilon_i, \rhat^n, D^n)\mid A_j, \rhat^n, D^n\right\} + \E(T_{ij}\mid \rhat^n, D^n) \\&= 0.
 \end{align*}
 And similarly $\E(U_{ij}\mid A_i, X_i, \epsilon_i, \rhat^n, D^n) = 0$. By construction, $U_{ij}$ are uncorrelated with $V_j$ and $W_i$.
And for $k\neq j$,
\begin{align*}
    \E(U_{ij}^\intercal U_{ik}\mid \rhat^n, D^n) = \E\left\{\E(U_{ij}\mid A_i, X_i, \epsilon_i, \rhat^n, D^n)^\intercal\E(U_{ik}\mid A_i, X_i, \epsilon_i, \rhat^n, D^n)\mid \rhat^n, D^n\right\} = 0
\end{align*} 
Following the same reasoning,  $\E(U_{ij}^\intercal U_{kj}\mid \rhat^n, D^n) = 0$ for $i \neq k$.
Therefore,
\begin{align*}
   &\E\left\{\left\| \frac{1}{n(n-1)} \mathop{\sum\sum}_{1 \leq i \neq j \leq n} T_{ij}\right\|^2_2 \mid \rhat^n, D^n\right\} \\
   & = \E\left\{\left\| \frac{1}{n(n-1)} \mathop{\sum\sum}_{1 \leq i \neq j \leq n} U_{ij} + \frac{1}{n}\sum_{i = 1}^n (V_i + W_i) + \overline{U} \right\|^2_2 \mid \rhat^n, D^n\right\}   \\
   & = \E\left\{\left\|\frac{1}{n(n-1)} \mathop{\sum\sum}_{1 \leq i \neq j \leq n} U_{ij}\right\|_2^2 \mid \rhat^n, D^n\right\} +\E\left\{\left\|\frac{1}{n}\sum_{i = 1}^n (V_i + W_i) \right\|_2^2 \mid \rhat^n, D^n\right\} +\|\overline{U}\|_2^2\\
   & = \frac{1}{n^2(n-1)^2}\mathop{\sum\sum}_{1 \leq i \neq j \leq n} \E(\|U_{ij}\|_2^2 + U_{ij}^\intercal U_{ji}  \mid \rhat^n, D^n) + \frac{1}{n^2} \sum_{i = 1}^n \E(\|V_i+W_i\|_2^2 \mid \rhat^n, D^n) + \|\overline{U}\|_2^2
\end{align*}
Taking the expectation with respect to $\rhat^n$ given $D^n$, we have
\begin{align*}
    & \E\left\{\left\| \frac{1}{n(n-1)} \mathop{\sum\sum}_{1 \leq i \neq j \leq n} T_{ij}\right\|^2_2 \mid D^n\right\} \\
    & = \frac{1}{n(n-1)}\E(\|U_{ij}\|_2^2 + U_{ij}^\intercal U_{ji}  \mid D^n) + \frac{1}{n} \E(\|V_i+W_i\|_2^2 \mid D^n) + \E(\|\overline{U}\|_2^2 \mid D^n).
\end{align*}
Let $s_i = \Delta_q(\rhat_i)+ \E\left\{m'(\bar r_i)(r_i-\rhat_i) \mid \rhat_i, D^n\right\} +\E(\rho_i|\rhat_i,D^n)$ and 
\begin{align*}
    B_{-j, n} = \widehat{Q}^{-1}_{\Phi,q, -j} \frac{1}{n-1}\sum_{i = 1, i \neq j}^n \Phi_{q}(\rhat_i) s_i.
\end{align*}
We have
\begin{align*}
    \overline{U} = \frac{1}{n(n-1)} \mathop{\sum\sum}_{1 \leq i \neq j \leq n} \E(T_{ij}\mid \rhat^n, D^n) = \frac{1}{n} \sum_{j = 1}^n \Phi_k(\rhat_j) \E(r_j - \rhat_j \mid \rhat_j, D^n) \dot\Phi_q(\rhat_j)^\intercal B_{-j,n}
\end{align*}
Notice that
\begin{align*}
  B^\intercal_{-j, n} \widehat{Q}_{\Phi, q, -j} B_{-j, n} =  \frac{1}{n-1}\sum_{i = 1, i \neq j}^n \{ \Phi_q(\rhat_i)^\intercal B_{-j, n}\}^2 \leq \frac{1}{n-1}\sum_{i = 1, i \neq j}^n s_i^2,
\end{align*}
so that on the event that $\min_j \lambda_\min (\widehat{Q}_{\Phi, q, -j}) \gtrsim 1$, we have
\begin{align*}
    \|B_{-j, n}\|_2^2 \ \lesssim \frac{1}{n-1}\sum_{i = 1, i \neq j}^n s_i^2 \implies \max_{1 \leq j \leq n} \|B_{-j, n}\|_2^2 \ \leq \frac{n}{n-1} \Pb_n s^2 \lesssim \Pn s^2.
\end{align*}
Furthermore,
\begin{align*}
    \|\overline{U}\|_2^2 & \lesssim \|\widehat{Q}_{\Phi, k}(\rhat)\|_{\rm op} \frac{1}{n}\sum_{j = 1}^n \{\E(r_j - \rhat_j \mid \rhat_j, D^n)\}^2B_{-j, n}^\intercal\dot\Phi_{q}(\rhat_j) \dot\Phi_q(\rhat_j)^\intercal B_{-j, n} 
\end{align*}
This then yields
\begin{align*}
    & \E(\|\overline{U}\|_2^2 \mid D^n) \\
    & \lesssim \min \left\{ \|\rhat - r\|^2_\infty \|Q_{\dot\Phi, q}(\rhat)\|_{\rm op}, \|\rhat - r\|^2_{2, \Pb} \eta^2_q\right\} \cdot \|Q_{\Phi, k}(\rhat)\|_{\rm op} \cdot \{\|\Delta_q(\rhat)\|^2_{2, \Pb} + \|\rhat - r\|^2_{2, \Pb} + \|E(\rho \mid \rhat, D^n)\|^2_{2, \Pb}\}.
\end{align*}
Similarly, we have
\begin{align*}
   V_j = \Phi_k(\rhat_j)\cdot \{(A_j - \rhat_j) - \E(r_j - \rhat_j \mid \rhat_j, D^n)\} \cdot \dot\Phi_q(\rhat_j)^\intercal B_{-j, n}
\end{align*}
so that
\begin{align*}
    \E(\|V_j\|_2^2 \mid \rhat^n_{-j}, D^n) \lesssim \xi^2_k \cdot \|Q_{\dot\Phi, q}(\rhat)\|_{\rm op} \cdot \|B_{-j, n}\|^2_2 
\end{align*}
and
\begin{align*}
    \E(\|V_j\|^2 \mid D^n) \lesssim \xi^2_k \cdot \|Q_{\dot\Phi, q}(\rhat)\|_{\rm op} \cdot \{\|\Delta_q(\rhat)\|^2_{2, \Pb} + \|\rhat - r\|^2_{2, \Pb} + \|E(\rho \mid \rhat, D^n)\|^2_{2, \Pb}\}.
\end{align*}
Next, letting $w_i = \Delta_q(\rhat_i) + m'(\overline{r}_i)(r_i - \rhat_i) + \rho_i + \epsilon_i$ and $\overline{w}_i = w_i - \E(w_i \mid \rhat^n, D^n)$, we have
\begin{align*}
    W_i = \frac{1}{n-1} \sum_{j = 1, j \neq i}^n \E(r_j - \rhat_j \mid \rhat_j, D^n) \Phi_k(\rhat_j) \dot\Phi_q(\rhat_j)^\intercal \widehat{Q}_{\Phi, q, -j}^{-1}(\rhat) \Phi_q(\rhat_i) \overline{w}_i.
\end{align*}
Notice that, for any $f_j$, we have, on the event that $\|\widehat{Q}_{\Phi, k, -i}(\rhat)\|_{\rm op} \lesssim 1$,
\begin{align*}
   \left\|\frac{1}{n-1} \sum_{j = 1, j \neq i}^n\Phi_k(\rhat_j) f_j\right\|_2^2 \leq \|\widehat{Q}_{\Phi, k, -i}(\rhat)\|_{\rm op} \frac{1}{n-1} \sum_{j = 1, j \neq i}^n f_j^2 \lesssim  \frac{1}{n-1} \sum_{j = 1, j \neq i}^n f_j^2.
\end{align*}
Applying the inequality above with $f_j = \E(r_j - \rhat_j \mid \rhat_j, D^n) \dot\Phi_q(\rhat_j)^\intercal \widehat{Q}_{\Phi, q, -j}^{-1}(\rhat) \Phi_q(\rhat_i) \overline{w}_i$, we have
\begin{align*}
    \|W_i\|_2^2 \lesssim \ \overline{w}^2_i \cdot \Phi_q(\rhat_i)^\intercal \left(\frac{1}{n-1} \sum_{j = 1, j \neq i}^n \left[\{\E(r_j - \rhat_j \mid \rhat_j, D^n)\}^2 \widehat{Q}_{\Phi, q, -j}^{-1}(\rhat) \dot\Phi_q(\rhat_j) \dot\Phi_q(\rhat_j)^\intercal \widehat{Q}_{\Phi, q, -j}^{-1}(\rhat)\right] \right) \Phi_q(\rhat_i).
\end{align*}
Therefore, on the even that $\min_j \lambda_\min (\widehat{Q}_{\Phi, q, -j}) \gtrsim 1$, 
\begin{align*}
    \E(\|W_i\|_2^2 \mid D^n) \lesssim \xi^2_q \cdot \min\left\{\|Q_{\dot\Phi, q}(\rhat)\|_{\rm op} \|\rhat - r\|^2_\infty, \eta^2_q \|\rhat - r\|^2_{2, \Pb}\right\}.
\end{align*}
Finally, conditioning on the event that $\min_j \lambda_\min(\widehat{Q}_{\Phi, q, -j}) \gtrsim 1$,
\begin{align*}
\E(\|U_{ij}\|_2^2 + U_{ij}^\intercal U_{ji} \mid D^n) & \lesssim \E(\|U_{ij}\|_2^2 + \|U_{ji}\|_2^2 \mid D^n) \\
& \lesssim \E(\|T_{ij}\|_2^2 + \|T_{ji}\|_2^2 \mid D^n) \\
& \lesssim \E\left\{\frac{\xi_k^2}{n}\sum_{j = 1}^n \dot\Phi_{q}(\rhat_j)^\intercal \widehat{Q}_{\Phi, q, -j}^{-1} \dot\Phi_q(\rhat_j) \mid D^n\right\} \\
& \lesssim \xi_k^2 \cdot \eta_q^2.
\end{align*}
By Assumption \ref{assump:sieve_eigen} and Lemma \ref{lemma:op_bound_basis}, we have
$
    \|Q_{\Phi,k}(\rhat)\|_{\rm op} = \|Q_{\Phi,k}(r)\|_{\rm op} + o_\Pb(1)$, $
        \|Q_{\Phi,q}(\rhat)\|_{\rm op} = \|Q_{\Phi,q}(r)\|_{\rm op} + o_\Pb(1)$ and $
         \|Q_{\dot\Phi,q}(\rhat)\|_{\rm op} = \|Q_{\dot\Phi,q}(r)\|_{\rm op}+o_{\Pb}(\|Q_{\dot\Phi,q}(r)\|_{\rm op}).
$

Putting everything together, and invoking Lemma \ref{lemma:op_bound_basis} for the second term below, we have reached
\begin{align*}
\widehat\beta_{\rm scpi, k, q} - \beta_k & = Q_{\Phi, k}(r)^{-1} \Pn \left[ \Phi_k(r)\left\{\Delta_k(r) + \dot\Phi_q(r)^\intercal\beta_q(r - A) + \epsilon + \rho\right\} \right] \\
& \qquad + \ \{\widehat{Q}^{-1}_{\Phi, k}(\rhat) - Q_{\Phi, k}(r)^{-1}\} \Pn \left[ \Phi_k(r)\left\{\Delta_k(r) + \dot\Phi_q(r)^\intercal\beta_q(r - A) + \epsilon + \rho\right\} \right] \\
& \qquad + \ \widehat{Q}_{\Phi, k}(\rhat)^{-1} \left\{ \sum_{i=1}^3   R_i-\frac{1}{n(n-1)} \mathop{\sum\sum}_{1 \leq i \neq j \leq n} T_{ij} \right\}\\
& = Q_{\Phi, k}(r)^{-1} \Pn \left[ \Phi_k(r)\left\{\Delta_k(r) + \dot\Phi_q(r)^\intercal\beta_q(r - A) + \epsilon + \rho\right\} \right] \\
& \qquad + O_\Pb\left(\frac{\xi_k}{\sqrt{n}} \cdot \left(\sqrt{\frac{\xi_k^2\cdot\log k}{n}} +\|\rhat-r\|_\infty \|\widetilde{Q}_{\dot\Phi,k}(r,\rhat)\|^{1/2}_{\rm op}\right)\right) \\
& \qquad + \ O_{\Pb}\Bigg(
\frac{\xi_k}{\sqrt n}\|r-\rhat\|_{2,\Pb} \\
& \hphantom{\widehat\beta_{\rm scpi, k} - \beta O_{\Pb}\Bigg[\,} + \frac{\eta_k}{\sqrt n}\|r-\rhat\|_{2,\Pb}+\|\widetilde Q_{\dot\Phi,k}(r,\rhat)\|_{\rm op}^{1/2}\|(r-\rhat)\left\{\Delta_k(r)+\E(\rho|r,\rhat,D^n)\right\}\|_{2,\Pb} \\
&\hphantom{\widehat\beta_{\rm scpi, k} - \beta O_{\Pb}\Bigg[\,}
+ \left(1 + \frac{\xi_k}{\sqrt n} \right) \left\{\|(r-\rhat)\dot\Delta_q(\rhat)\|_{2,\Pb}+\|\dot\Delta_k\|_\infty\|r-\rhat\|_{2,\Pb}+\|(r-\rhat)^2\|_{2,\Pb}\right\}\\
&\hphantom{\widehat\beta_{\rm scpi, k} - \beta O_{\Pb}\Bigg[\,}
+ \frac{\xi_k \eta_q}{n} + \frac{\xi_k}{\sqrt{n}} \cdot \|Q_{\dot\Phi, q}(\rhat)\|^{1/2}_{\rm op} \cdot \{\|\Delta_q(\rhat)\|_{2, \Pb} + \|\rhat - r\|_{2, \Pb} + \|E(\rho \mid \rhat, D^n)\|_{2, \Pb}\}\\
&\hphantom{\widehat\beta_{\rm scpi, k} - \beta O_{\Pb}\Bigg[\,}
+ \frac{\xi_q}{\sqrt{n}} \cdot \min\left\{\|Q_{\dot\Phi, q}(\rhat)\|^{1/2}_{\rm op} \|\rhat - r\|_\infty, \eta_q \|\rhat - r\|_{2, \Pb}\right\}
\\
&\hphantom{\widehat\beta_{\rm scpi, k} - \beta O_{\Pb}\Bigg[\,}
+ \min \left\{ \|\rhat - r\|_\infty \|Q_{\dot\Phi, q}(\rhat)\|^{1/2}_{\rm op}, \|\rhat - r\|_{2, \Pb} \eta_q\right\} \\
& \left. \vphantom{\frac{\xi_k}{\sqrt{n}}} \hphantom{\widehat\beta_{\rm scpi, k} - \beta O_{\Pb}\Bigg[\,} \qquad \qquad \cdot \{\|\Delta_q(\rhat)\|_{2, \Pb} + \|\rhat - r\|_{2, \Pb} + \|E(\rho \mid \rhat, D^n)\|_{2, \Pb}\} \right)
\end{align*}
\section{Supporting Lemmas}
\begin{lemma}\label{lemma:min_eigen_local}
    Let $\lambda_\min(M)$ denote the smallest eigenvalue of a symmetric matrix $M$. Suppose that $\lambda_\min(\E\{K_{ht}(r) g_{ht}(r) g_{ht}(r)^\intercal\}) \geq c$, for some constant $c > 0$. Then,
    \begin{align*}
    & \|\Pn \varphi_2(Z,\rhat) - \Pb\varphi_2(Z,r)\|_{\rm op} = O_{\Pb}\left(\frac{1}{\sqrt{n}h^2} \cdot(h+\|r-\rhat\|_\infty)^{1/2}+ \frac{\|\rhat - r\|_\infty^2}{h^3}\cdot(h+\|r-\rhat\|_\infty)\right), \text{ and } \\
    & \lambda_\min\left\{\Pn \varphi_2(Z,\rhat)\right\} \geq c - O_{\Pb}\left(\frac{1}{\sqrt{n}h^2} \cdot(h+\|r-\rhat\|_\infty)^{1/2}+ \frac{\|\rhat - r\|_\infty^2}{h^3}\cdot(h+\|r-\rhat\|_\infty)\right).
    \end{align*}
\end{lemma}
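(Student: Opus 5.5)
We bound the perturbation of the Gram-type matrix by splitting
\begin{align*}
\Pn \varphi_2(Z,\rhat) - \Pb\varphi_2(Z,r) = (\Pn - \Pb)\varphi_2(Z,\rhat) + \left\{\Pb\varphi_2(Z,\rhat) - \Pb\varphi_2(Z,r)\right\},
\end{align*}
conditioning throughout on $D^n$ so that $\rhat$ is a fixed function and the $Z_i$ are iid. Since the matrices are $2\times 2$, we bound the Frobenius norm, which dominates the operator norm. The eigenvalue claim then follows from Weyl's inequality: $\lambda_\min\{\Pn\varphi_2(Z,\rhat)\} \geq \lambda_\min\{\Pb\varphi_2(Z,r)\} - \|\Pn\varphi_2(Z,\rhat)-\Pb\varphi_2(Z,r)\|_{\rm op}$. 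Note that $\Pb\varphi_2(Z,r) = \Pb G_{ht}(r)$, because the $(A-r)$ term has conditional mean zero given $X$. Hence $\lambda_\min\{\Pb\varphi_2(Z,r)\}\geq c$ by hypothesis.

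\textbf{Stochastic term.} Write $G_{ht} = H_{ht}g_{ht}^\intercal$. Since $K$ has compact support and $g_{ht}(\rhat)$ is bounded on its support, $\|G_{ht}(\rhat)\|_{\rm F}\lesssim h^{-1}1(|\rhat-t|\leq h)$. The derivative part $G'_{ht}(\rhat)(A-\rhat)$ picks up $K'_{ht}$ and $g'_{ht}$, so it is of order $h^{-2}1(|\rhat-t|\le h)$, using that $A$ and $\rhat$ are bounded. Next, $1(|\rhat-t|\le h)\le 1(|r-t|\le h+\|\rhat-r\|_\infty)$, and $f_r$ is bounded. Together these give
\begin{align*}
\Pb\|\varphi_2(Z,\rhat)\|_{\rm F}^2\lesssim h^{-4}(h+\|\rhat-r\|_\infty).
\end{align*}
Chebyshev's inequality, applied conditionally and then unconditionally, gives $\|(\Pn-\Pb)\varphi_2(Z,\rhat)\|_{\rm F} = O_\Pb\big(n^{-1/2}h^{-2}(h+\|\rhat-r\|_\infty)^{1/2}\big)$, which is the first term of the claimed rate.

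\textbf{Bias term.} This step carries the second-order structure. Condition on $X$ and use $\E(A\mid X)=r$. Then $\Pb\varphi_2(Z,\rhat) = \Pb\{G_{ht}(\rhat) + G'_{ht}(\rhat)(r-\rhat)\}$. Taylor's theorem with integral remainder gives
\begin{align*}
G_{ht}(r) - G_{ht}(\rhat) - G'_{ht}(\rhat)(r-\rhat) = (r-\rhat)^2\int_0^1 (1-u)\,G''_{ht}(\rhat+u(r-\rhat))\,du,
\end{align*}
so $\Pb\varphi_2(Z,\rhat)-\Pb G_{ht}(r)$ equals minus the expectation of this remainder. Now $G''_{ht}$ involves $K''_{ht}$, which exists because $K$ is twice continuously differentiable, together with $K'_{ht}g'_{ht}$ and $K_{ht}(g'_{ht})^2$. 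Each of these is $O(h^{-3})$ and supported where $|r-t|\lesssim h+\|\rhat-r\|_\infty$. The remainder is therefore bounded in Frobenius norm by $\|\rhat-r\|^2_\infty h^{-3}(h+\|\rhat-r\|_\infty)$, which is the second claimed term.

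The main point of care is that the bias is genuinely quadratic rather than $h^{-2}\|\rhat-r\|_\infty$. This hinges on the influence-function correction $G'_{ht}(\rhat)(A-\rhat)$ cancelling the linear term exactly after conditioning on $X$. The remaining bookkeeping concerns the support of the interpolated kernel argument. Bounding the density of $r$ near $t$ requires only that $f_r$ is bounded, which holds by Assumption \ref{main:assumption}.
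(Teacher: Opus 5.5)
Your proposal is correct and follows the paper's proof essentially step for step. Like the paper, you split the error into $(\Pn-\Pb)\varphi_2(Z,\rhat)$ plus the bias $\Pb\{\varphi_2(Z,\rhat)-\varphi_2(Z,r)\}$ and pass from the Frobenius norm to Weyl's inequality; you then control the stochastic part by a conditional second moment of order $h^{-4}(h+\|\rhat-r\|_\infty)/n$, and use $\E(A\mid X)=r$ to cancel the linear term, leaving an integral-form Taylor remainder in $G''_{ht}$ of order $h^{-3}$. Your remainder $\int_0^1(1-u)G''_{ht}(\rhat+u(r-\rhat))\,du$ coincides with the paper's $\int_0^1 u\,G''_{ht}(r+u(\rhat-r))\,du$ after $u\mapsto 1-u$. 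You also make explicit two points the paper leaves implicit: that $\Pb\varphi_2(Z,r)=\Pb G_{ht}(r)$, and the support and density bookkeeping.
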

\begin{proof}
    For any matrices $M$ and $\widehat{M}$, we have
    \begin{align*}
        \lambda_\min(\widehat{M}) \geq \lambda_\min(M) - \|\widehat{M} - M\|_{\rm op} \geq \lambda_\min(M) - \|\widehat{M} - M\|_{\rm F}
    \end{align*}
    where $\|M\|^2_{\rm F} = \sum_{j, l} M^2_{jl}$ denotes (the square of) the Frobenius norm. Thus, it is sufficient to bound
    \begin{align*}
\|\Pn \varphi_2(Z,\rhat) - \Pb\varphi_2(Z,r)\|_{\rm F} = \|(\Pn - \Pb)\varphi_2(Z,\rhat)  + \Pb\{\varphi_2(Z,\rhat) - \varphi_2(Z,r)\}\|_{\rm F}.
    \end{align*}

        By direct calculation, we have
    \begin{align*}
        G'_{ht}(s) = K'_{ht}(s) \begin{bmatrix}
            1 & \frac{s - t}{h} \\
            \frac{s - t}{h} & \left(\frac{s - t}{h}\right)^2
        \end{bmatrix} + K_{ht}(s) \cdot \frac{1}{h}  \begin{bmatrix}
            0 & 1 \\
            1 & \frac{2(s - t)}{h}
        \end{bmatrix},
    \end{align*}

    and  \begin{align*}
        G''_{ht}(s) = K''_{ht}(s) \begin{bmatrix}
            1 & \frac{s - t}{h} \\
            \frac{s - t}{h} & \left(\frac{s - t}{h}\right)^2
        \end{bmatrix} + 2\cdot K'_{ht}(s) \cdot \frac{1}{h}  \begin{bmatrix}
            0 & 1 \\
            1 & \frac{2(s - t)}{h}
        \end{bmatrix}+K_{ht}(s)\cdot
         \begin{bmatrix}
            0 & 0 \\
            0 & \frac 2 {h^2}
        \end{bmatrix}.
    \end{align*}
 So component-wise, we have $G'_{ht}(s)\lesssim h^{-2}\cdot 1 (|s-t|\leq h)$ and $G''_{ht}(s)\lesssim\frac 1 {h^3} \cdot1(|s-t|\leq h)$.
    
    We have
    \begin{align*}
        \Pb \{\varphi_2(Z,\rhat) - \varphi_2(Z,r)\}&=\Pb\left\{ G_{ht}'(\rhat) (r - \rhat) + G_{ht}(\rhat) - G_{ht}(r)\right\} \\& = -\Pb\left\{\int_0^1 G''(r + u(\rhat - r))(r - \rhat)^2 u du\right\} \\
        & = O_{\Pb}\left(\frac{\|\rhat - r\|^2_\infty}{h^3}\cdot (h+\|r-\rhat\|_\infty)\right),
    \end{align*}
    where the $O_{\Pb}(\cdot)$ statement holds component-wise. Similarly, we have
    \begin{align*}
       \E(\|(\Pn - \Pb)\varphi_2(Z,\rhat)\|^2_{\rm F} \mid D^n) \leq \frac {\Pb(\|\varphi_2(Z,\rhat)\|^2_{\rm F})}{n}\lesssim \frac {\E\left[\|G'_{ht}(\rhat)\|_{\rm F}^2 \mid D^n\right]}{n}
    \end{align*}
 Thus, we have reached that
    \begin{align*}
        \|\Pn \varphi_2(Z,\rhat) - \Pb \varphi_2(Z,r)\|_{\rm F} = O_{\Pb}\left(\frac{1}{\sqrt{n}h^2} \cdot(h+\|r-\rhat\|_\infty)^{1/2}+ \frac{\|\rhat - r\|_\infty^2}{h^3}\cdot(h+\|r-\rhat\|_\infty)\right).
    \end{align*}
    and the two statements of the lemma follow.
\end{proof}
\begin{lemma}\label{lemma:op_bound_basis}
Suppose $\sup_t\|\Phi_k(t)\|_2 \lesssim \xi_k$,  $\sup_t\|\dot\Phi_k(t)\|_2 \lesssim \eta_k$ and $\sup_t\|\ddot\Phi_k(t)\|_2 \lesssim \zeta_k$. Then, it holds that:
\begin{align*}
  &\text{(1) }  \|Q_{\Phi,k}(\rhat)-Q_{\Phi,k}(r)\|_{\rm op} 
  = O_{\Pb}\Big(
  \|\widetilde{Q}_{\dot\Phi,k}(r,\rhat)\|_{\rm op}^{1/2}\cdot\min\left\{\|r-\rhat\|_{\infty}\|Q_{\Phi,k}(r)\|_{\rm op}^{1/2}, \xi_k\|r-\rhat\|_{2,\Pb}\right\}\\
&\hphantom{\text{(1) }  \|Q_{\Phi,k}(\rhat)-Q_{\Phi,k}(r)\|_{\rm op} 
  = O_{\Pb}\Big(}+\min\left\{\|r-\rhat\|^2_\infty \|\widetilde Q_{\dot\Phi,k}(r,\rhat)\|_{\rm op},\eta_k^2\|r-\rhat\|_{2,\Pb}^2\right\}
  \Big),\\
 &\text{(2)  }  \|Q_{\dot\Phi,k}(\rhat)-Q_{\dot\Phi,k}(r)\|_{\rm op} 
  =  O_{\Pb}\Big(
  \|\widetilde{Q}_{\ddot\Phi,k}(r,\rhat)\|_{\rm op}^{1/2}\cdot\min\left\{\|r-\rhat\|_{\infty}\|Q_{\dot\Phi,k}(r)\|_{\rm op}^{1/2}, \eta_k\|r-\rhat\|_{2,\Pb}\right\}\\
 &\hphantom{\text{(2) }  \|Q_{\Phi,k}(\rhat)-Q_{\Phi,k}(r)\|_{\rm op} 
  = O_{\Pb}\Big(}+\min\left\{\|r-\rhat\|^2_\infty \|\widetilde Q_{\ddot\Phi,k}(r,\rhat)\|_{\rm op},\zeta_k^2\|r-\rhat\|_{2,\Pb}^2\right\}
  \Big),\\
   &\text{(3) } 
    \|\widehat Q_{\Phi,k}(\rhat)-Q_{\Phi,k}(r)\|_{\rm op}  
    =O_{\Pb}\Bigg(
    \frac{\xi_k^2\log k}{n}
    +\sqrt{\frac{\xi_k^2\cdot\|Q_{\Phi,k}(\rhat)\|_{\rm op}\cdot\log k}{n}} \\
   &\hphantom{   \|\widehat Q_{\Phi,k}(\rhat)-Q_{\Phi,k}(r)\|_{\rm op}  =O_{\Pb}\Bigg(}
   +\|\rhat-r\|_\infty \|\widetilde{Q}_{\dot\Phi,k}(r,\rhat)\|^{1/2}_{\rm op}
   +\|\rhat-r\|^2_\infty \|\widetilde Q_{\dot\Phi,k}(r,\rhat)\|_{\rm op}
   \Bigg),\\
&\text{(4) } \|\Pn\varphi_2(Z,\rhat) - \Pb \varphi_2(Z,r)\|_{\rm op} 
= O_{\Pb}\Bigg(
    \frac{(\xi_k \eta_k+\xi_k^2)\log k}{n} 
    + \sqrt{\frac{\xi_k^2 \|Q_{\Phi,k}(\rhat)\|_{\rm op}\log k}{n}} \\
&\hphantom{\|\Pn\varphi_2(Z,\rhat) - \Pb \varphi_2(Z,r)\|_{\rm op} 
= O_{\Pb}\Bigg(}
    + \sqrt{\frac{\big(\eta_k^2 \|Q_{\Phi, k}(\rhat)\|_{\rm op} 
    + \xi_k^2 \|Q_{\dot\Phi, k}(\rhat)\|_{\rm op}\big)\log k}{n}} \\
&\hphantom{\|\Pn\varphi_2(Z,\rhat) - \Pb \varphi_2(Z,r)\|_{\rm op} 
= O_{\Pb}\Bigg(}
    + \min\Bigg\{
        \|\rhat - r\|_\infty^2 
        \Big(
            \|\widetilde{Q}_{\dot\Phi,k}(r,\rhat)\|_{\rm op} 
            + \|\widetilde{Q}_{\Phi,k}(r,\rhat)\|_{\rm op}^{1/2}
              \|\widetilde{Q}_{\ddot\Phi,k}(r,\rhat)\|_{\rm op}^{1/2}
        \Big), \\
&\hphantom{\|\Pn\varphi_2(Z,\rhat) - \Pb \varphi_2(Z,r)\|_{\rm op} 
= O_{\Pb}\Bigg(+ \min\Bigg\{}
        \|r-\rhat\|_{2,\Pb}^2(\eta_k^2+\xi_k\zeta_k)
    \Bigg\}
\Bigg).
\end{align*}

where
\begin{align*}& \varphi_2(Z,r) = \left\{\dot\Phi_k(r) \Phi_k(r)^\intercal + \Phi_k(r) \dot\Phi_k(r)^\intercal\right\}(A - r) + \Phi_k(r)\Phi_k(r)^\intercal,\\
   &\widetilde{Q}_{\Phi,k}(r,\rhat) = \Pb\left\{\int_0^1 \Phi_k(r + u(\rhat - r))\Phi^\intercal_k(r + u(\rhat - r)) du \right\}. 
   \\& \widetilde{Q}_{\dot\Phi,k}(r,\rhat) = \Pb\left\{\int_0^1 \dot\Phi_k(r + u(\rhat - r))\dot\Phi^\intercal_k(r + u(\rhat - r)) du \right\}\text{, and} \\ 
    &\widetilde{Q}_{\ddot\Phi,k}(r,\rhat) = \Pb\left\{\int_0^1 \ddot\Phi_k(r + u(\rhat - r))\ddot\Phi^\intercal_k(r + u(\rhat - r)) du \right\}.
\end{align*}
\end{lemma}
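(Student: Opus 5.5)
The plan is to handle the four statements of Lemma \ref{lemma:op_bound_basis} separately. Each difference of Gram matrices is written through an integral form of Taylor's theorem. The resulting quadratic forms are then bounded by Cauchy--Schwarz, which produces the $\widetilde Q$ operator norms. Throughout we work conditionally on $D^n$, so that $\rhat$ is a fixed function and the sample is iid given $D^n$.

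For (1), write $\Phi_k(\rhat)=\Phi_k(r)+\Delta$ with $\Delta=\int_0^1\dot\Phi_k(r+u(\rhat-r))du\,(\rhat-r)$. Then
\begin{align*}
Q_{\Phi,k}(\rhat)-Q_{\Phi,k}(r)=\Pb\{\Phi_k(r)\Delta^\intercal+\Delta\Phi_k(r)^\intercal+\Delta\Delta^\intercal\}.
\end{align*}
Take unit vectors $v,w$.
\begin{itemize}
\item For the cross term, Cauchy--Schwarz gives $|\Pb\{v^\intercal\Phi_k(r)\,\Delta^\intercal w\}|\le\{v^\intercal Q_{\Phi,k}(r)v\}^{1/2}\{\Pb(\Delta^\intercal w)^2\}^{1/2}$.
\item By Jensen over $u$, $\Pb(\Delta^\intercal w)^2\le\|\rhat-r\|_\infty^2\, w^\intercal\widetilde Q_{\dot\Phi,k}(r,\rhat)w$. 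This yields the first branch of the minimum.
\item Alternatively, bound $|v^\intercal\Phi_k(r)|\le\xi_k$ and pull the factor $|\rhat-r|$ into an $L_2$ norm. This gives $\xi_k\|\rhat-r\|_{2,\Pb}\|\widetilde Q_{\dot\Phi,k}\|^{1/2}_{\rm op}$, the second branch.
\item The quadratic term $\Pb(\Delta^\intercal v)(\Delta^\intercal w)$ is bounded either by $\|\rhat-r\|_\infty^2\|\widetilde Q_{\dot\Phi,k}\|_{\rm op}$ or, using $\sup\|\dot\Phi_k\|_2\le\eta_k$, by $\eta_k^2\|\rhat-r\|_{2,\Pb}^2$.
\end{itemize}
Statement (2) is identical with $(\Phi,\dot\Phi,\xi_k,\eta_k)$ replaced by $(\dot\Phi,\ddot\Phi,\eta_k,\zeta_k)$. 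These bounds are deterministic given $D^n$, so the $O_\Pb$ is trivial.

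For (3), split
\begin{align*}
\widehat Q_{\Phi,k}(\rhat)-Q_{\Phi,k}(r)=(\Pn-\Pb)\Phi_k(\rhat)\Phi_k(\rhat)^\intercal+\{Q_{\Phi,k}(\rhat)-Q_{\Phi,k}(r)\}.
\end{align*}
The second piece is covered by (1), taking its $\infty$-norm branch with $\|Q_{\Phi,k}(r)\|_{\rm op}\lesssim1$. For the first piece, apply the matrix Bernstein inequality conditionally on $D^n$ (equivalently, Rudelson's inequality as in \cite{belloni2015some}) to the iid summands $\Phi_k(\rhat_i)\Phi_k(\rhat_i)^\intercal-Q_{\Phi,k}(\rhat)$. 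These have operator norm at most $\xi_k^2+\|Q_{\Phi,k}(\rhat)\|_{\rm op}\lesssim\xi_k^2$ and matrix variance at most $\xi_k^2\|Q_{\Phi,k}(\rhat)\|_{\rm op}$. This produces the two $\log k$ terms.

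Statement (4) follows the same template. Decompose $\Pn\varphi_2(Z,\rhat)-\Pb\varphi_2(Z,r)$ into $(\Pn-\Pb)\varphi_2(Z,\rhat)$ plus $\Pb\{\varphi_2(Z,\rhat)-\varphi_2(Z,r)\}$.
\begin{itemize}
\item The empirical part is split into the $\Phi\Phi^\intercal$ block, handled as in (3), and the $(A-\rhat)\{\dot\Phi_k\Phi_k^\intercal+\Phi_k\dot\Phi_k^\intercal\}$ block. Since $A$ is bounded, the latter's summands are bounded by $\xi_k\eta_k$ in operator norm. Their variance proxy is bounded by $\eta_k^2\|Q_{\Phi,k}(\rhat)\|_{\rm op}+\xi_k^2\|Q_{\dot\Phi,k}(\rhat)\|_{\rm op}$, using $\Pb\{\dot\Phi\Phi^\intercal\Phi\dot\Phi^\intercal\}\preceq\xi_k^2Q_{\dot\Phi}$ and the symmetric counterpart. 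Matrix Bernstein then gives the stated terms.
\item For the bias part, use $\E(A\mid X)=r$. With $G(s)=\Phi_k(s)\Phi_k(s)^\intercal$, as in the proof of Lemma \ref{lemma:min_eigen_local}, the bias equals $\Pb\{G'(\rhat)(r-\rhat)+G(\rhat)-G(r)\}=-\Pb\int_0^1G''(r+u(\rhat-r))u\,du\,(r-\rhat)^2$.
\item Here $G''=2\dot\Phi\dot\Phi^\intercal+\ddot\Phi\Phi^\intercal+\Phi\ddot\Phi^\intercal$. Bounding its bilinear forms by Cauchy--Schwarz gives either $\|\rhat-r\|_\infty^2(\|\widetilde Q_{\dot\Phi}\|_{\rm op}+\|\widetilde Q_{\Phi}\|^{1/2}_{\rm op}\|\widetilde Q_{\ddot\Phi}\|^{1/2}_{\rm op})$ or, via sup-norm bounds, $\|\rhat-r\|_{2,\Pb}^2(\eta_k^2+\xi_k\zeta_k)$.
\item One subtlety: the weight $u$ differs from the uniform weight in $\widetilde Q$. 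Since $u\le1$, the Cauchy--Schwarz bound still goes through.
\end{itemize}

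The main obstacle is the matrix concentration in (3)--(4). One must check that the variance proxy for the non-symmetric $\dot\Phi\Phi^\intercal$ terms is controlled by the stated operator norms rather than by the cruder $\xi_k^2\eta_k^2$. I would handle this by using the Hermitian dilation, then bounding $\Pb(MM^\intercal)$ and $\Pb(M^\intercal M)$ separately as above, and finally applying Tropp's Bernstein inequality conditionally on $D^n$.
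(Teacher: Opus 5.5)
Your proposal is correct and follows essentially the same route as the paper: a Taylor expansion of $\Phi_k(\rhat)$ with Cauchy--Schwarz and Jensen over $u$ for (1)--(2), which the paper packages as Lemma~\ref{lemma:projection}. For (3) the paper likewise uses a matrix concentration bound (Lemma 6.2 of \cite{belloni2015some}) together with the $\|\cdot\|_\infty$ branch of (1), and for (4) it uses the same three-way split with matrix Bernstein and the second-order remainder $-\Pb\int_0^1 G''(r+u(\rhat-r))\,u\,du\,(r-\rhat)^2$ obtained from $\E(A\mid X)=r$. The only cosmetic difference is that the paper uses no Hermitian dilation: it keeps the symmetric matrix $\dot\Phi_k\Phi_k^\intercal+\Phi_k\dot\Phi_k^\intercal$ and bounds the cross term in its square by AM--GM, which gives the same variance proxy $\eta_k^2\|Q_{\Phi,k}(\rhat)\|_{\rm op}+\xi_k^2\|Q_{\dot\Phi,k}(\rhat)\|_{\rm op}$.
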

\begin{proof}
First, we notice that, for any $v$ and $s_u = r + u(\rhat - r)$, we have
\begin{align*}
    \int_0^1 \left\{v^\intercal \Phi_k(s_u) \right\}^2 du \geq \left\{\int_0^1 v^\intercal \Phi_k(s_u) du\right\}^2 \implies \|\widetilde{Q}_{\Phi, k}(r, \rhat)\|_{\rm op} \geq \left\|\Pb\left\{\int_0^1 \Phi_k(s_u) du \int_0^1 \Phi_k(s_u)^\intercal du\right\}\right\|_{\rm op}.
\end{align*}
By lemma \ref{lemma:projection} and the inequality above, we have
\begin{align*}
&\left\|\Pb\left[\left(\int_0^1\dot\Phi_k(\rhat+u(r-\rhat))du \right)\Phi_k(r)^\intercal(\rhat-r)\right]\right\|_{\rm op}\\ =& \sup _{v:\|v\|_2 = 1}\left\|\Pb\left[\left(\int_0^1\dot\Phi_k(\rhat+u(r-\rhat))du \right)\Phi_k(r)^\intercal(\rhat-r)v\right]\right\|_2\\\lesssim &\sup _{v:\|v\|_2 = 1}\|\widetilde{Q}_{\dot\Phi,k}(r,\rhat)\|_{\rm op}^{1/2} \sqrt{\Pb\{(r-\rhat)^2 v^\intercal\Phi_k(r) \Phi_k(r)^\intercal v\}}\\
    \lesssim&\|\widetilde{Q}_{\dot\Phi,k}(r,\rhat)\|_{\rm op}^{1/2}\cdot\min\left\{\|r-\rhat\|_{\infty}\|Q_{\Phi,k}(r)\|_{\rm op}^{1/2}, \xi_k\|r-\rhat\|_{2,\Pb}\right\}.
\end{align*}
Then by the expansion of $\Phi_k(\rhat)$, we have that
\begin{align*}\|Q_{\Phi,k}(\rhat)-Q_{\Phi,k}(r)\|_{\rm op} &\leq 2\left\|\Pb\left[\left(\int_0^1\dot\Phi_k(\rhat+u(r-\rhat))du \right)\Phi_k(r)^\intercal(\rhat-r)\right]\right\|_{\rm op}\\
&\hphantom{\leq}+\min\left\{\|r-\rhat\|^2_\infty \|\widetilde Q_{\dot\Phi,k}(r,\rhat)\|_{\rm op},\eta_k^2\|r-\rhat\|_{2,\Pb}^2\right\} \\ &\lesssim\|\widetilde{Q}_{\dot\Phi,k}(r,\rhat)\|_{\rm op}^{1/2}\cdot\min\left\{\|r-\rhat\|_{\infty}\|Q_{\Phi,k}(r)\|_{\rm op}^{1/2}, \xi_k\|r-\rhat\|_{2,\Pb}\right\}\\
&\hphantom{\leq}+\min\left\{\|r-\rhat\|^2_\infty \|\widetilde Q_{\dot\Phi,k}(r,\rhat)\|_{\rm op},\eta_k^2\|r-\rhat\|_{2,\Pb}^2\right\}.\label{equ:Q(rhat)-Q(r)}
  \end{align*}

 Similarly, we can show the second statements
 \begin{align*}
    \|Q_{\dot\Phi,k}(\rhat)-Q_{\dot\Phi,k}(r)\|_{\rm op} &=O_{\Pb}\Big(\|\widetilde{Q}_{\ddot\Phi,k}(r,\rhat)\|_{\rm op}^{1/2}\cdot\min\left\{\|r-\rhat\|_{\infty}\|Q_{\dot\Phi,k}(r)\|_{\rm op}^{1/2}, \eta_k\|r-\rhat\|_{2,\Pb}\right\}
 \\&\hphantom{\|Q_{\dot\Phi,k}(\rhat)-Q_{\dot\Phi,k}(r)\|_{\rm op} =O_{\Pb}\Big(}\hphantom{\leq}+\min\left\{\|r-\rhat\|^2_\infty \|\widetilde Q_{\ddot\Phi,k}(r,\rhat)\|_{\rm op},\zeta_k^2\|r-\rhat\|_{2,\Pb}^2\right\}\Big)
 \end{align*}

 Using Lemma 6.2 in \cite{belloni2015some}, we obtain
\begin{align*}
    &\|\widehat{Q}_{\Phi,k}(\rhat)-Q_{\Phi,k}(\rhat)\|_{\rm op} = O_{\Pb}\left(\frac{\xi_k^2\log k}{n} +\sqrt{\frac{\xi_k^2\|Q_{\Phi,k}(\rhat)\|_{\rm op}\log k}{n}}\right)
\end{align*}
Then, combine the above together, we get the third statement.

For the fourth statement, notice that
    \begin{align*}
    & \|\Pn \varphi_2(Z,\rhat) - \Pb\varphi_2(Z,r)\|_{\rm op}\\
    & \leq \|(\Pn - \Pb) \Phi_k(\rhat)\Phi_k(\rhat)^\intercal\|_{\rm op} + \left\|(\Pn{-\Pb)}\left[\{\dot\Phi_k(\rhat)\Phi_k(\rhat)^\intercal + \Phi_k(\rhat)\dot\Phi_k(\rhat)^\intercal\}(A - \rhat)\right]\right\|_{\rm op} \\
    & \qquad + \|\Pb\{\varphi_2(Z,\rhat) - \varphi_2(Z,r)\}\|_{\rm op}.
    \end{align*}

    The first term is just $\|\widehat{Q}_{\Phi,k}(\rhat)-Q_{\Phi,k}(\rhat)\|_{\rm op}$. To bound the second term, we rely on Bernstein inequality for matrices (Theorem 5.4.1 in \cite{Vershynin_2026}). 
   We have
    \begin{align*}
        \left\|\left\{\dot\Phi_k(\rhat)\Phi_k(\rhat)^\intercal + \Phi_k(\rhat)\dot\Phi_k(\rhat)^\intercal\right\}(A - \rhat) \right\|_{\rm op} \lesssim \|\dot\Phi_k(\rhat)\|_2 \|\Phi_k(\rhat)\|_2 \ \lesssim \xi_k\cdot\eta_k.
    \end{align*}
    and
    \begin{align*}
    & \left\{\dot\Phi_k(\rhat)\Phi_k(\rhat)^\intercal + \Phi_k(\rhat)\dot\Phi_k(\rhat)^\intercal\right\}^2 \\
    & = \|\dot\Phi_k(\rhat)\|^2_2 \cdot \Phi_k(\rhat)\Phi_k(\rhat)^\intercal + \|\Phi_k(\rhat)\|^2_2 \cdot \dot\Phi_k(\rhat)\dot\Phi_k(\rhat)^\intercal + \{\dot\Phi_k(\rhat)^\intercal\Phi_k(\rhat)\}\{\Phi_k(\rhat)\dot\Phi_k(\rhat)^\intercal + \dot\Phi_k(\rhat)\Phi_k(\rhat)^\intercal\}.
    \end{align*}
Notice that, for any $v \in \R^k$, we have
\begin{align*}
  & v^\intercal \{\dot\Phi_k(\rhat)^\intercal\Phi_k(\rhat)\}\{\Phi_k(\rhat)\dot\Phi_k(\rhat)^\intercal + \dot\Phi_k(\rhat)\Phi_k(\rhat)^\intercal\} v \\
  & = 2  \{\dot\Phi_k(\rhat)^\intercal\Phi_k(\rhat)\} \{v^\intercal \Phi_k(\rhat)\} \{v^\intercal \dot\Phi_k(\rhat)\} \\
  & \leq 2\|\dot\Phi_k(\rhat)\|_2\|\Phi_k(\rhat)\|_2\cdot |v^\intercal \Phi_k(\rhat)| \cdot |v^\intercal \dot\Phi_k(\rhat)| \\
  & \leq \|\dot\Phi_k(\rhat)\|_2^2 \{v^\intercal\Phi_k(\rhat)\}^2 + \|\Phi_k(\rhat)\|_2^2\{v^\intercal \dot\Phi_k(\rhat)\}^2 \\
  & = v^\intercal \left\{\|\dot\Phi_k(\rhat)\|^2_2 \cdot \Phi_k(\rhat)\Phi_k(\rhat)^\intercal + \|\Phi_k(\rhat)\|^2_2 \cdot \dot\Phi_k(\rhat)\dot\Phi_k(\rhat)^\intercal \right\} v
\end{align*}
so that
\begin{align*}
  & \left\| \Pb \left\{\dot\Phi_k(\rhat)\Phi_k(\rhat)^\intercal + \Phi_k(\rhat)\dot\Phi_k(\rhat)^\intercal\right\}^2 \right\|_{\rm op} \\
  & \lesssim \left\|\Pb\left\{\|\dot\Phi_k(\rhat)\|^2_2 \cdot \Phi_k(\rhat)\Phi_k(\rhat)^\intercal + \|\Phi_k(\rhat)\|^2_2 \cdot \dot\Phi_k(\rhat)\dot\Phi_k(\rhat)^\intercal \right\} \right\|_{\rm op} \\
  & \lesssim \eta_k^2 \cdot \left\| \Pb\left\{\Phi_k(\rhat)\Phi_k(\rhat)^\intercal  \right\}\right\|_{\rm op} + \xi_k^2 \cdot \left\| \Pb\left\{\dot\Phi_k(\rhat)\dot\Phi_k(\rhat)^\intercal \right\}\right\|_{\rm op}
\end{align*}
Thus, we have reached
\begin{align*}
& \left\|(\Pn - \Pb)\left[\{\dot\Phi_k(\rhat)\Phi_k(\rhat)^\intercal + \Phi_k(\rhat)\dot\Phi_k(\rhat)^\intercal\}(A - \rhat)\right]\right\|_{\rm op} \\
& = O_{\Pb}\left(\frac{\xi_k \cdot \eta_k \cdot \log k}{n} + \sqrt{\frac{\left(\eta_k^2 \cdot \left\| Q_{\Phi, k}(\rhat) \right\|_{\rm op}  + \xi_k^2 \cdot \left\| Q_{\dot\Phi, k}(\rhat)\right\|_{\rm op} \right) \log k}{n}}\right).
\end{align*}

Finally, we have
\begin{align*}
    \|\Pb\{\varphi_2(Z,\rhat) - \varphi_2(Z,r)\}\|_{\rm op} = \left\|- \E\left\{\int_0^1\frac{\partial^2}{\partial s^2} \Phi_k(s)\Phi^\intercal_k(s)\Big|_{s = r + u(\rhat - r)}u du \cdot (\rhat - r)^2 \mid D^n\right\}\right\|_{\rm op}
\end{align*}
We have
\begin{align*}
   \frac{\partial^2}{\partial s^2} \Phi_k(s)\Phi^\intercal_k(s) = 2\dot\Phi_k(s)\dot\Phi_k(s)^\intercal + \Phi_k(s)\ddot\Phi_k(s)^\intercal + \ddot\Phi_k(s)\Phi_k(s)^\intercal
\end{align*}
yielding that
\begin{align*}
    \|\Pb\{\varphi_2(Z,\rhat) - \varphi_2(Z,r)\}\|_{\rm op} & \leq 2\|\rhat - r\|^2_\infty \cdot \left\|\E\left\{ \int_0^1\dot\Phi_k(s_u)\dot\Phi^\intercal_k(s_u)u du\mid D^n\right\}\right\|_{\rm op} \\
    & \hphantom{=} + 2 \|\rhat - r\|_\infty^2 \left\|\E\left\{ \int_0^1\ddot\Phi_k(s_u)\ddot\Phi^\intercal_k(s_u) udu\mid D^n\right\}\right\|^{1/2}_{\rm op} \\
    & \qquad\qquad\qquad\quad \times \left\|\E\left\{ \int_0^1\Phi_k(s_u)\Phi^\intercal_k(s_u)u du\mid D^n\right\}\right\|^{1/2}_{\rm op}
\end{align*}
where $s_u = r + u(\rhat - r)$.
Another bound also holds:
\begin{align*}
    \|\Pb\{\varphi_2(Z,\rhat) - \varphi_2(Z,r)\}\|_{\rm op} & \leq 2\|\rhat - r\|^2_{2,\Pb}\cdot (\eta_k^2  +\xi_k\zeta_k)
\end{align*}
\end{proof}
\begin{lemma}\label{lemma:projection} Let $\psi$ be a random vector and assume that $Q_\psi = \Pb(\psi\psi^\intercal)$ is positive definite. For any function $f$, define the projection operator onto the space spanned by $\psi$:
\begin{align*}
    \Pi_{\psi}(f) = \psi^\intercal Q_{\psi}^{-1}\Pb(\psi f).
\end{align*} Then it holds that
\begin{align*}
    \|\Pb(\psi f)\|_2 \leq \|Q_{\psi}\|^{1/2}_{\rm op}\|f\|_{2,\Pb}
\end{align*}
    \begin{proof}    
        By inserting $Q_{\psi}^{1/2} Q_{\psi}^{-1/2}$ and applying the operator norm bound, we have:
        \begin{align*}
            \|\Pb (\psi f)\|_2 &= \|Q_{\psi}^{1/2} Q_{\psi}^{-1/2}\Pb (\psi f)\|_2\\
            &\leq\|Q_{\psi}\|^{1/2}_{\rm op}\cdot\|Q_{\psi}^{-1/2}\Pb(\psi f)\|_2\\
            &=\|Q_{\psi}\|^{1/2}_{\rm op}\cdot\sqrt{\Pb( f\psi^\intercal) Q_{\psi}^{-1}\Pb(\psi f)}\\
            & = \|Q_{\psi}\|^{1/2}_{\rm op}\cdot\sqrt{\Pb[\{\Pi_{\psi}(f)\}^2]}\\
            &\leq\|Q_{\psi}\|^{1/2}_{\rm op}\cdot\|f\|_{2,\Pb}.
        \end{align*}

        The last inequality is because 
    \begin{align*}
        \Pb[\psi\{f-\Pi_\psi(f)\}] = \Pb(\psi f) - \Pb(\psi\psi^\intercal)Q_\psi^{-1}\Pb(\psi f) = 0,
    \end{align*}
    $\Pi_\psi(f)$ is an orthogonal projection operator onto the space spanned by $\psi$. Therefore, by Pythagorean theorem $\|f\|_{2,\Pb}^2 = \|f-\Pi_\psi(f)\|_{2,\Pb}^2+\|\Pi_\psi(f)\|_{2,\Pb}^2\geq \|\Pi_\psi(f)\|_{2,\Pb}^2$.
    \end{proof}
\end{lemma}
\section{Additional Simulation results}
\begin{figure}[htbp]
    \centering
    \includegraphics[width=0.8\textwidth]{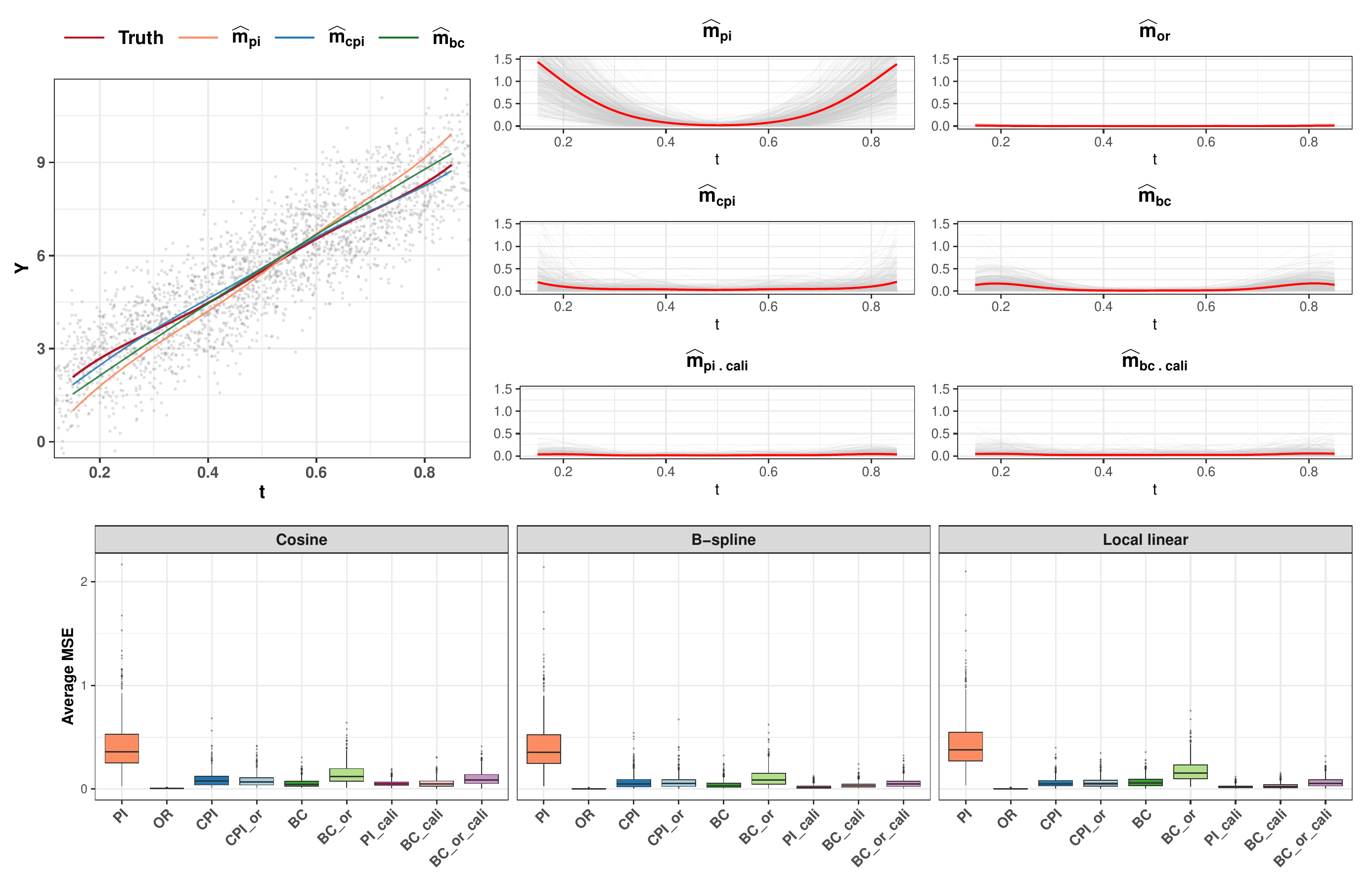}

    \caption{
    \textbf{M2 (linear model with small sinusodal perturbation + linear term in X):} $\mu(X) = 4+ 3r(X)+0.15\sin(4\pi r(X))+\beta_g^T X$, where $\beta_g$ is a fixed coefficient vector. 
    }
    \label{fig:scenarioII-M2}
\end{figure}

\begin{figure}[htbp]
    \centering
    \includegraphics[width=0.8\textwidth]{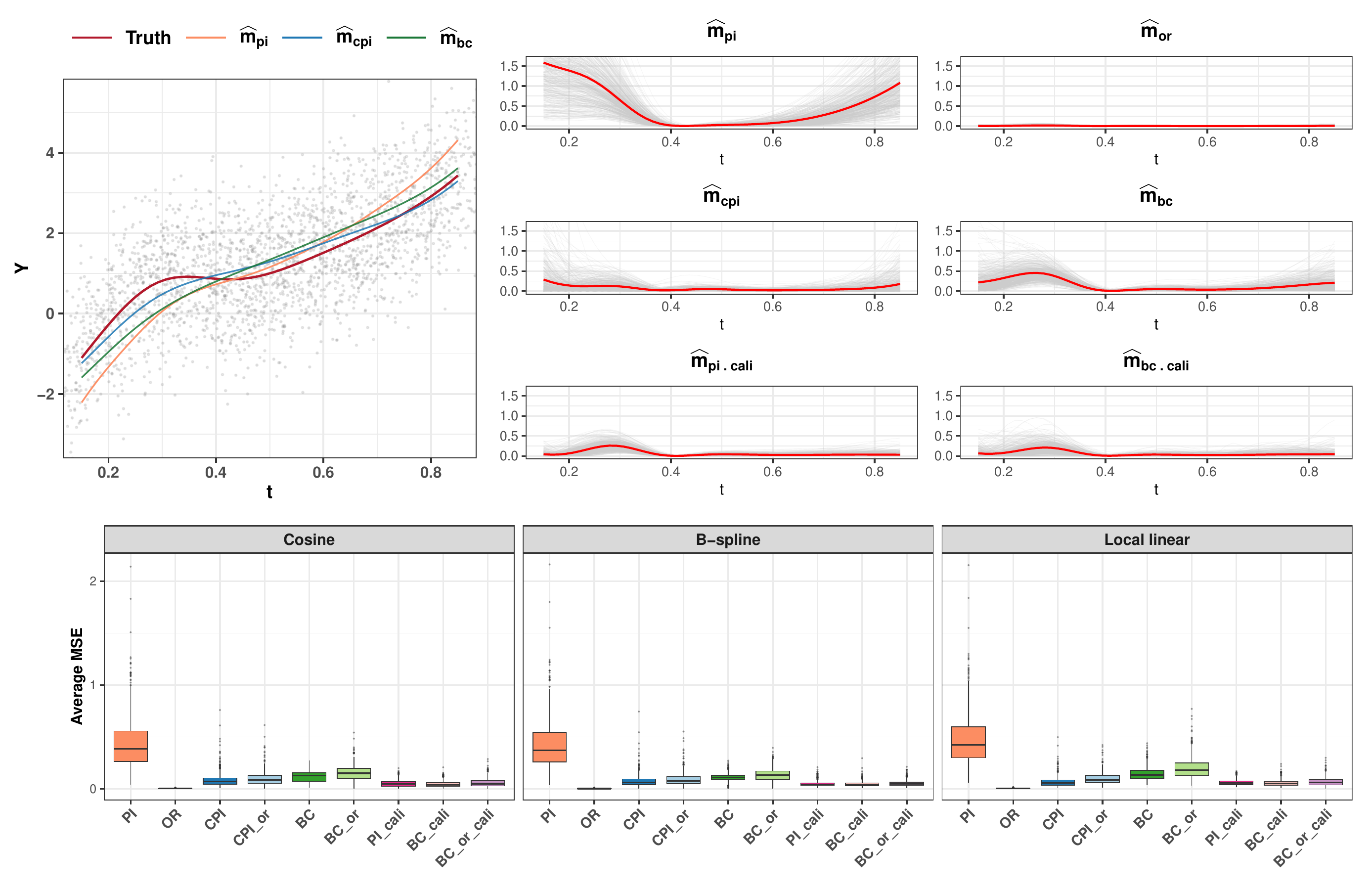}

    \caption{
    \textbf{M3 (Sinusoidal function with a localized bump + linear term in $X$):} $\mu(X) = 1 + 0.1\sin\{2\pi r(X)\} +   \sin\{2\pi r(X)\} *\exp[-40\{r(X) - 0.3\}^2]+\beta_g^T X$, where $\beta_g$ is a fixed coefficient vector. 
    }
    \label{fig:scenarioII-M3}
\end{figure}

\begin{figure}[htbp]
    \centering
\includegraphics[width=0.8\textwidth]{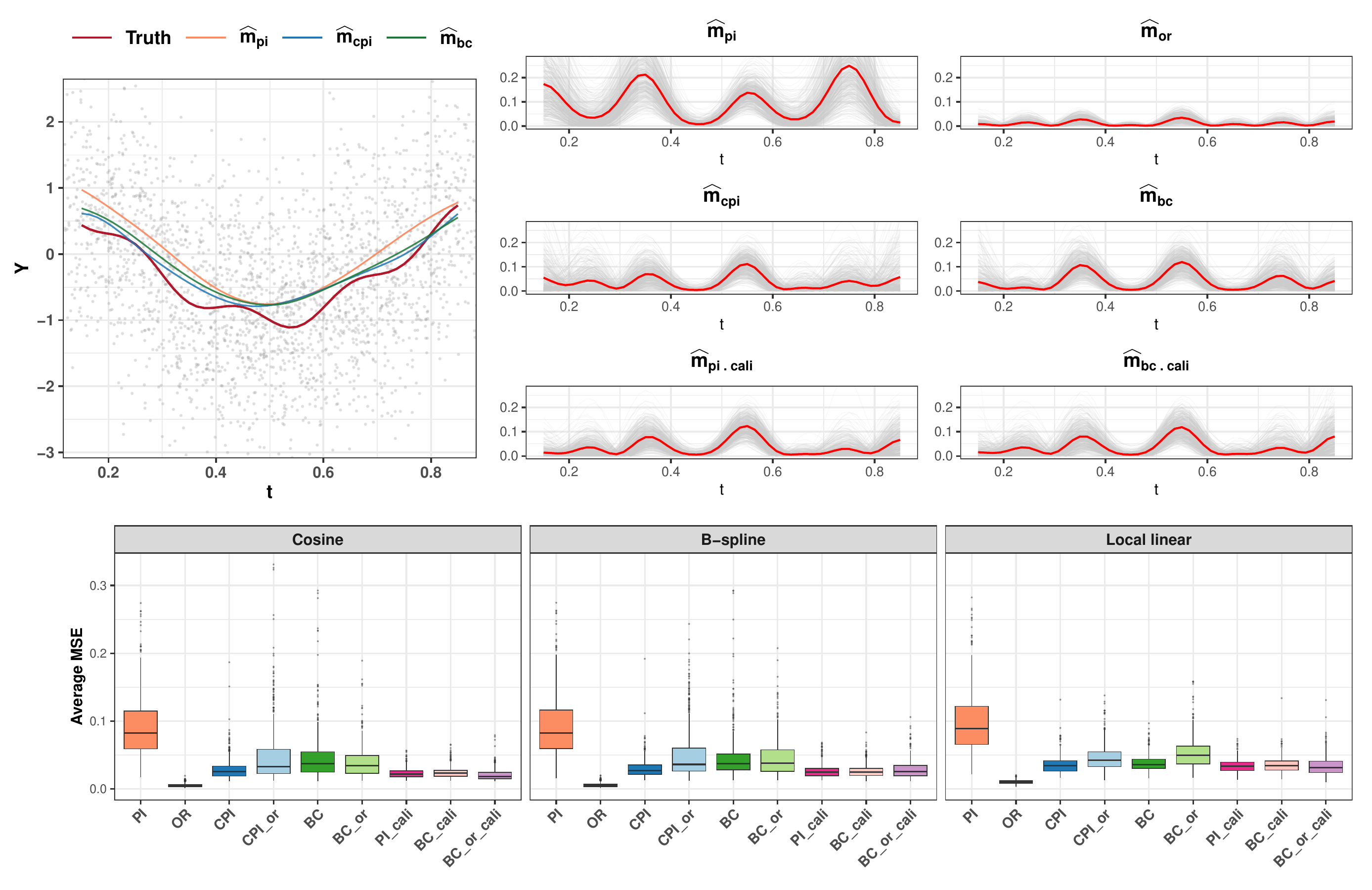}

    \caption{
    \textbf{M1 (sine–cosine function):} 
    $\mu(X) = 0.15\sin\{10\pi r(X)\} + \cos\{2\pi r(X)\}$
    }
    \label{fig:scenarioI-M1}
\end{figure}

\begin{figure}[htbp]
    \centering
    \includegraphics[width=0.8\textwidth]{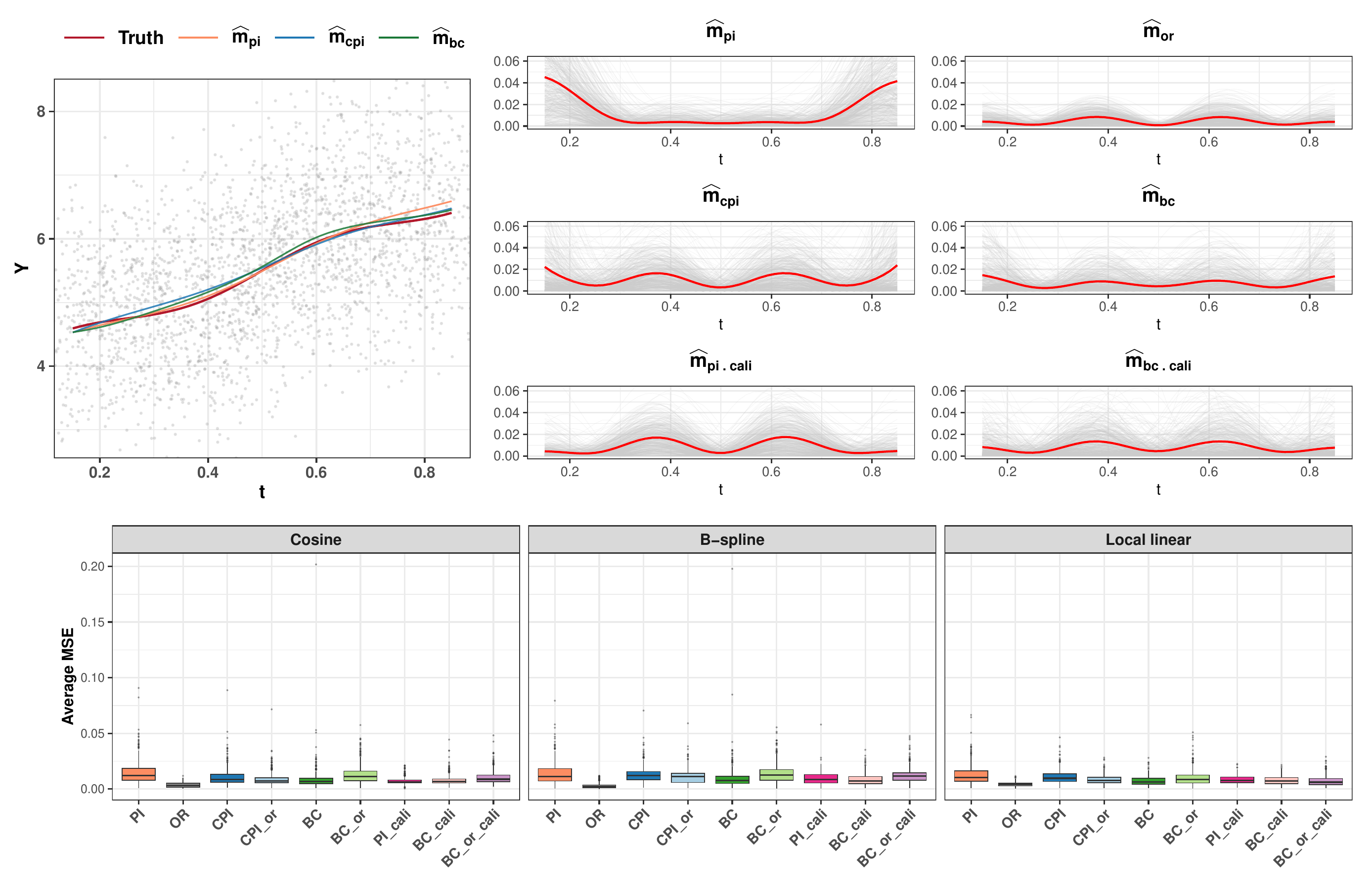}

    \caption{
    \textbf{M2 (linear model with small sinusodal perturbation): $\mu(X) = 4+ 3r(X)+0.15\sin\{4\pi r(X)\}$} 
    }
    \label{fig:scenarioI-M2}
\end{figure}

\begin{figure}[htbp]
    \centering
    \includegraphics[width=0.8\textwidth]{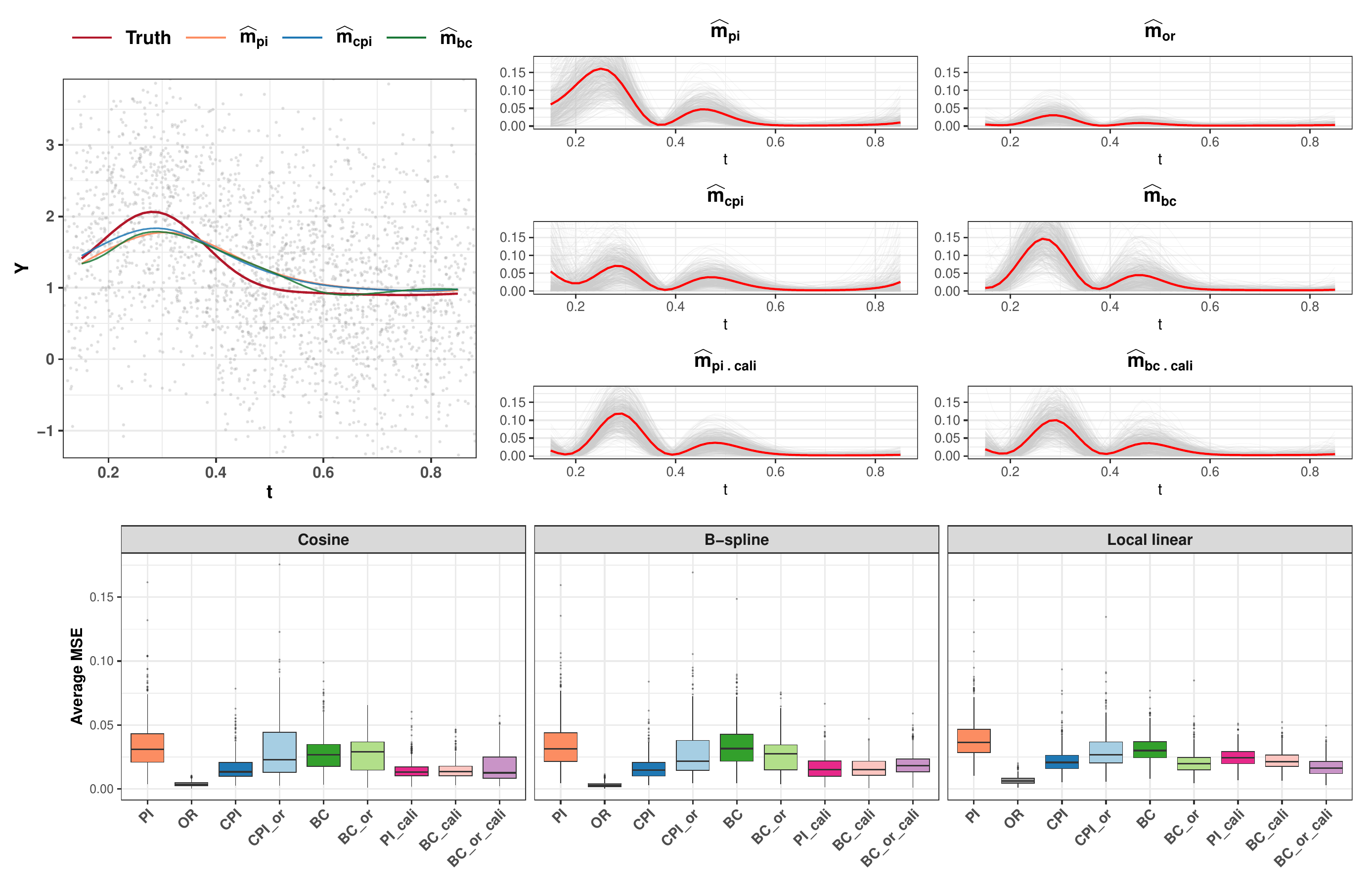}

    \caption{
    \textbf{M3 (Sinusoidal function with a localized bump): $\mu(X) = 1 + 0.1\sin\{2\pi r(X)\} +   \sin\{2\pi r(X)\} *\exp[-40\{r(X) - 0.3\}^2]$} 
    }
    \label{fig:scenarioI-M3}
\end{figure}

\begin{figure}[htbp]
    \centering
    \includegraphics[width=\textwidth]{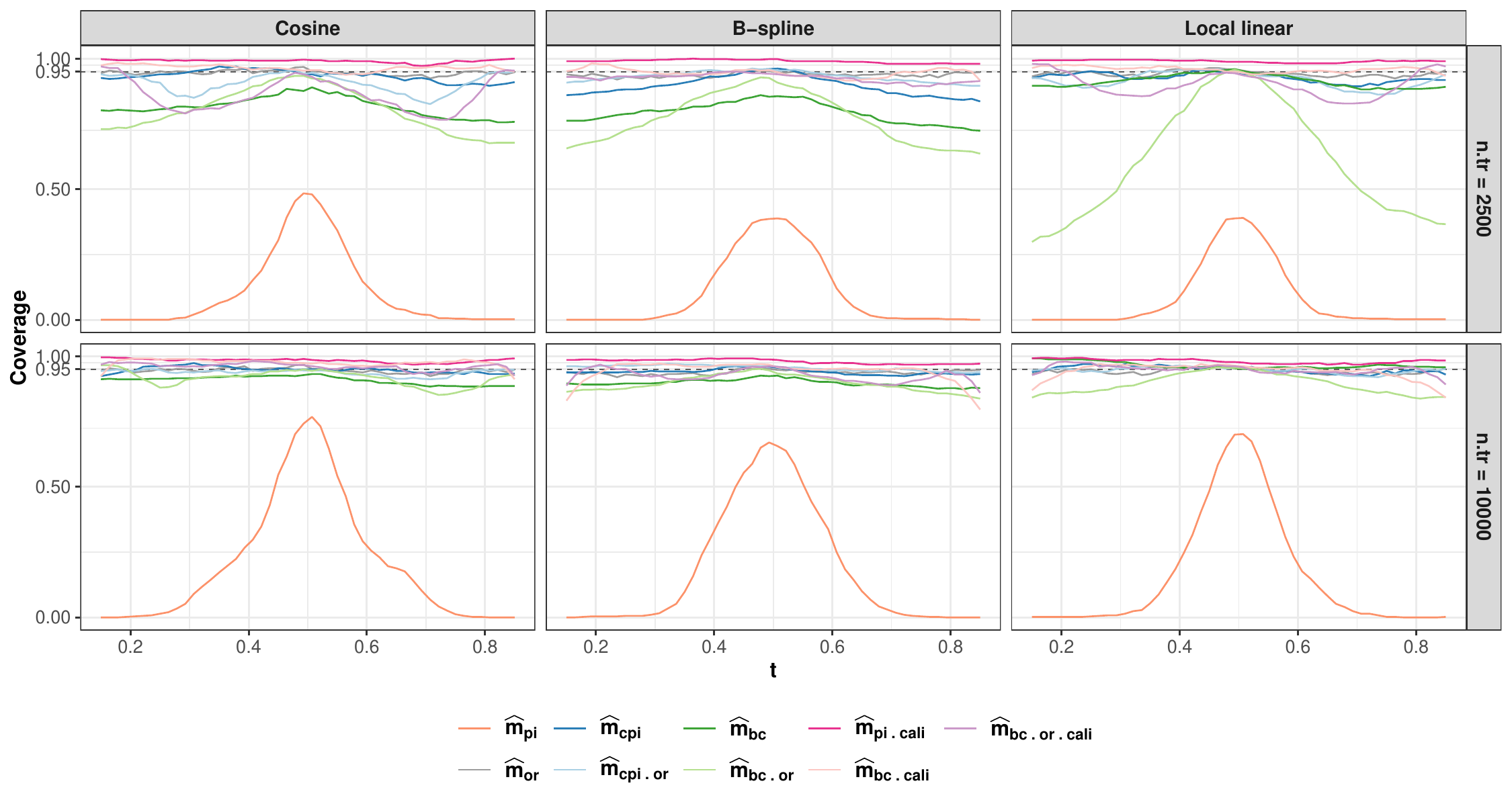}
    \caption{Pointwise coverage relative to the matched same-replication semi-oracle target for  \textbf{M2 (linear model with small sinusodal perturbation + linear term in X):} $\mu(X) = 4+ 3r(X)+0.15\sin(4\pi r(X))+\beta_g^T X$, where $\beta_g$ is a fixed coefficient vector. }
    \label{fig:coverage-ntr_S2M2}
\end{figure}

\begin{figure}[htbp]
    \centering
    \includegraphics[width=\textwidth]{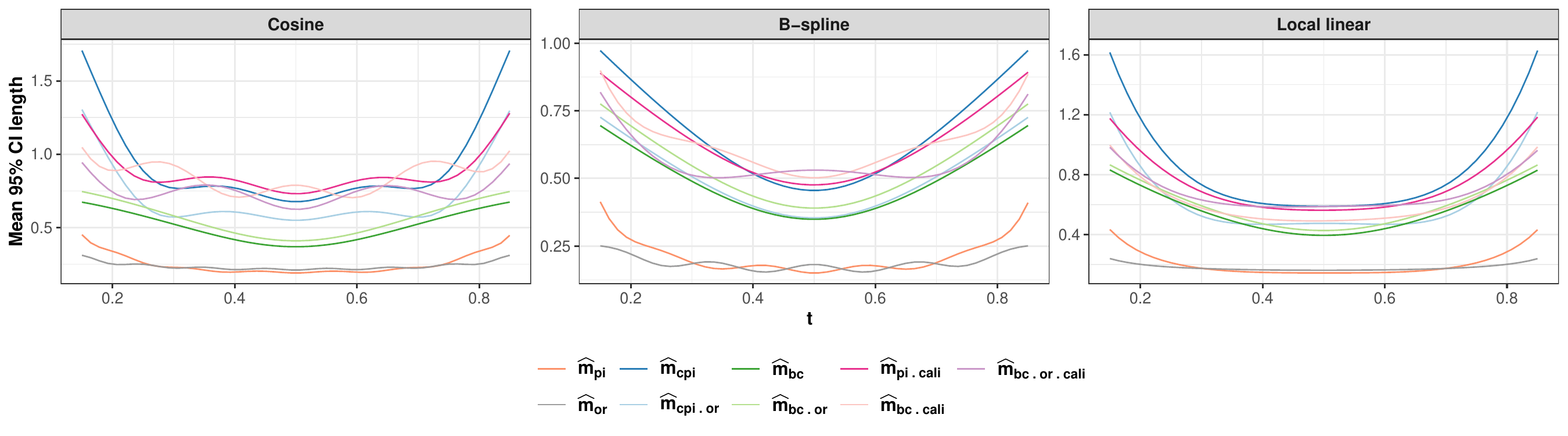}
    \caption{Pointwise confidence interval length for  \textbf{M2 (linear model with small sinusodal perturbation + linear term in X):} $\mu(X) = 4+ 3r(X)+0.15\sin(4\pi r(X))+\beta_g^T X$, where $\beta_g$ is a fixed coefficient vector. }
    \label{fig:cilength-ntr_S2M2}
\end{figure}

\begin{figure}[htbp]
    \centering
    \includegraphics[width=\textwidth]{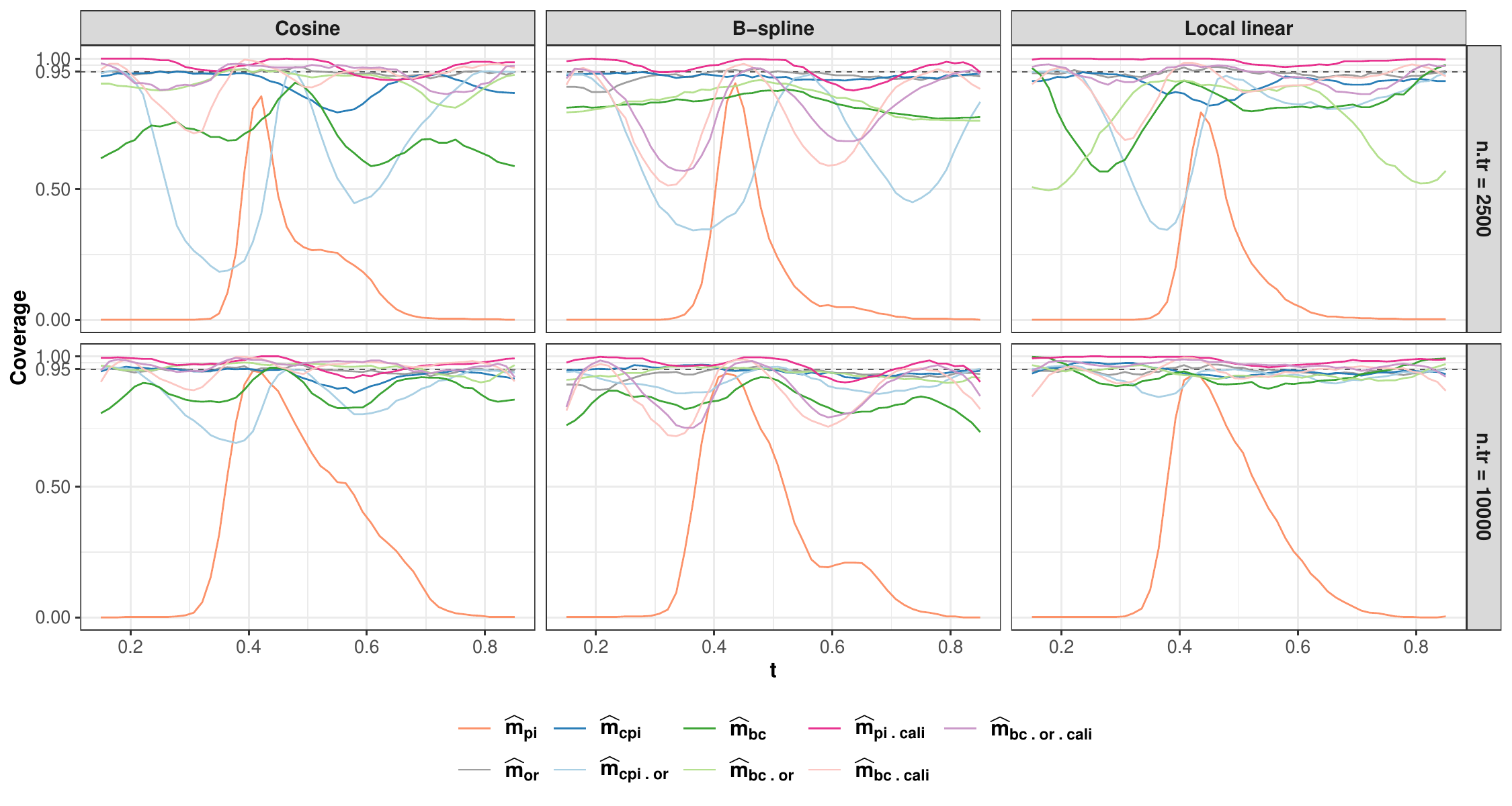}
    \caption{Pointwise coverage relative to the matched same-replication semi-oracle target for     \textbf{M3 (Sinusoidal function with a localized bump + linear term in $X$):} $\mu(X) = 1 + 0.1\sin\{2\pi r(X)\} +   \sin\{2\pi r(X)\} *\exp[-40\{r(X) - 0.3\}^2]+\beta_g^T X$, where $\beta_g$ is a fixed coefficient vector.  }
    \label{fig:coverage-ntr_S2M3}
\end{figure}

\begin{figure}[htbp]
    \centering
    \includegraphics[width=\textwidth]{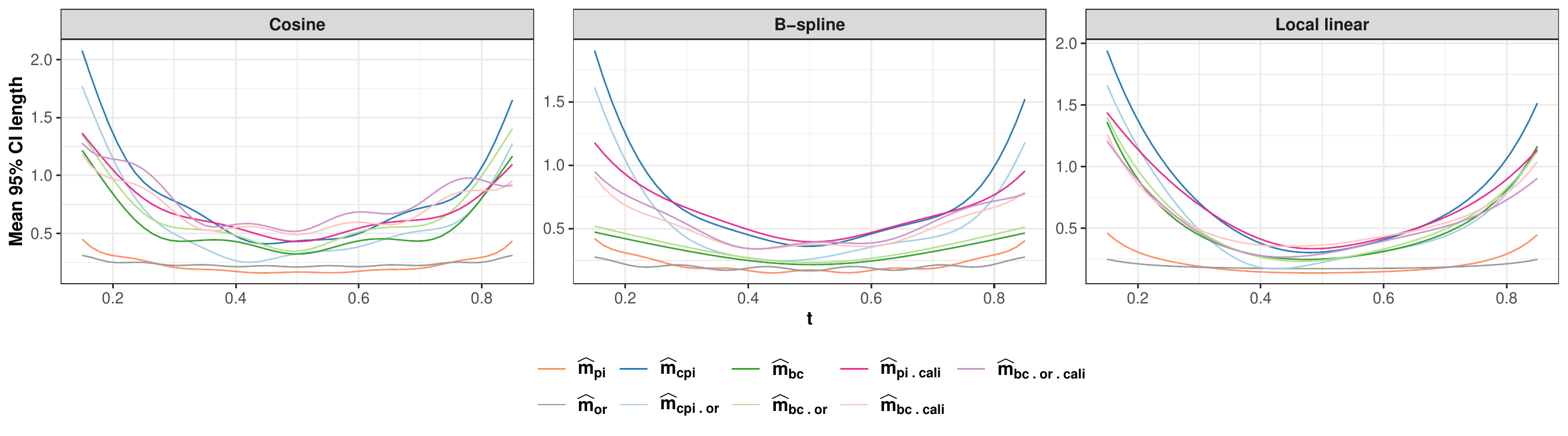}
    \caption{Pointwise confidence interval length for     \textbf{M3 (Sinusoidal function with a localized bump + linear term in $X$):} $\mu(X) = 1 + 0.1\sin\{2\pi r(X)\} +   \sin\{2\pi r(X)\} *\exp[-40\{r(X) - 0.3\}^2]+\beta_g^T X$, where $\beta_g$ is a fixed coefficient vector. }
    \label{fig:cilength-ntr_S2M3}
\end{figure}

\begin{figure}[htbp]
    \centering
    \includegraphics[width=\textwidth]{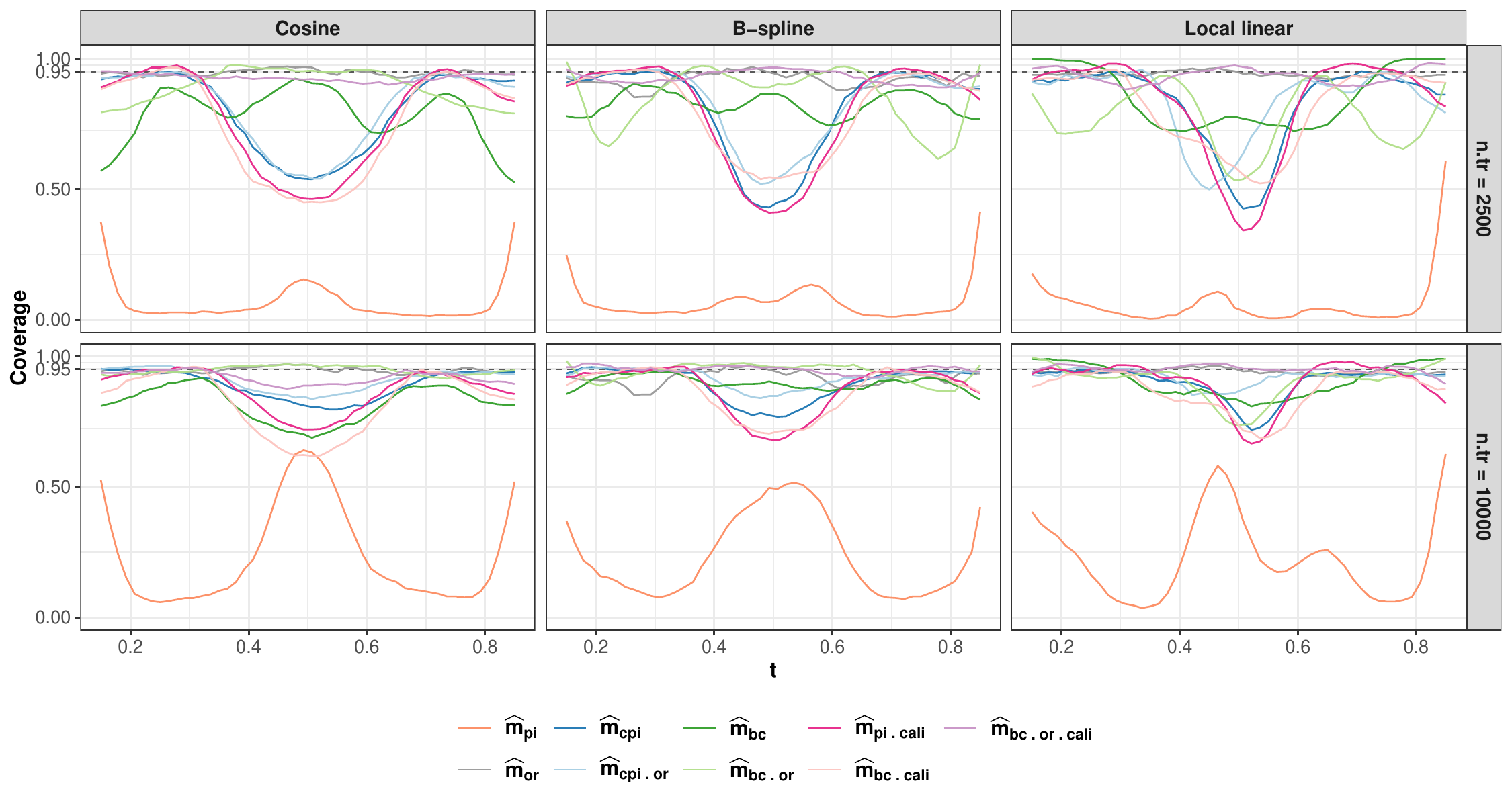}
    \caption{Pointwise coverage relative to the matched same-replication semi-oracle target for \textbf{M1 (sin–cosine function):} 
    $\mu(X) = 0.15\sin(10\pi r(X)) + \cos(2\pi r(X))$.}
    \label{fig:coverage-ntr_S1M1}
\end{figure}

\begin{figure}[htbp]
    \centering
    \includegraphics[width=\textwidth]{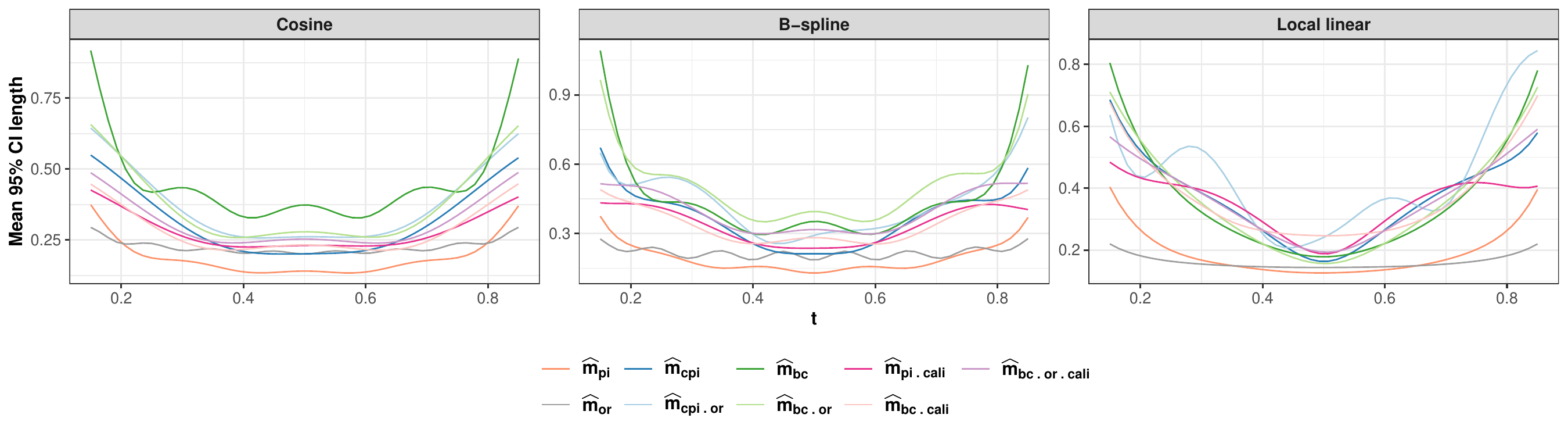}
    \caption{Pointwise confidence interval length for \textbf{M1 (sin–cosine function):} 
    $\mu(X) = 0.15\sin(10\pi r(X)) + \cos(2\pi r(X))$. }
    \label{fig:cilength-ntr_S1M1}
\end{figure}
\begin{figure}[htbp]
    \centering
    \includegraphics[width=\textwidth]{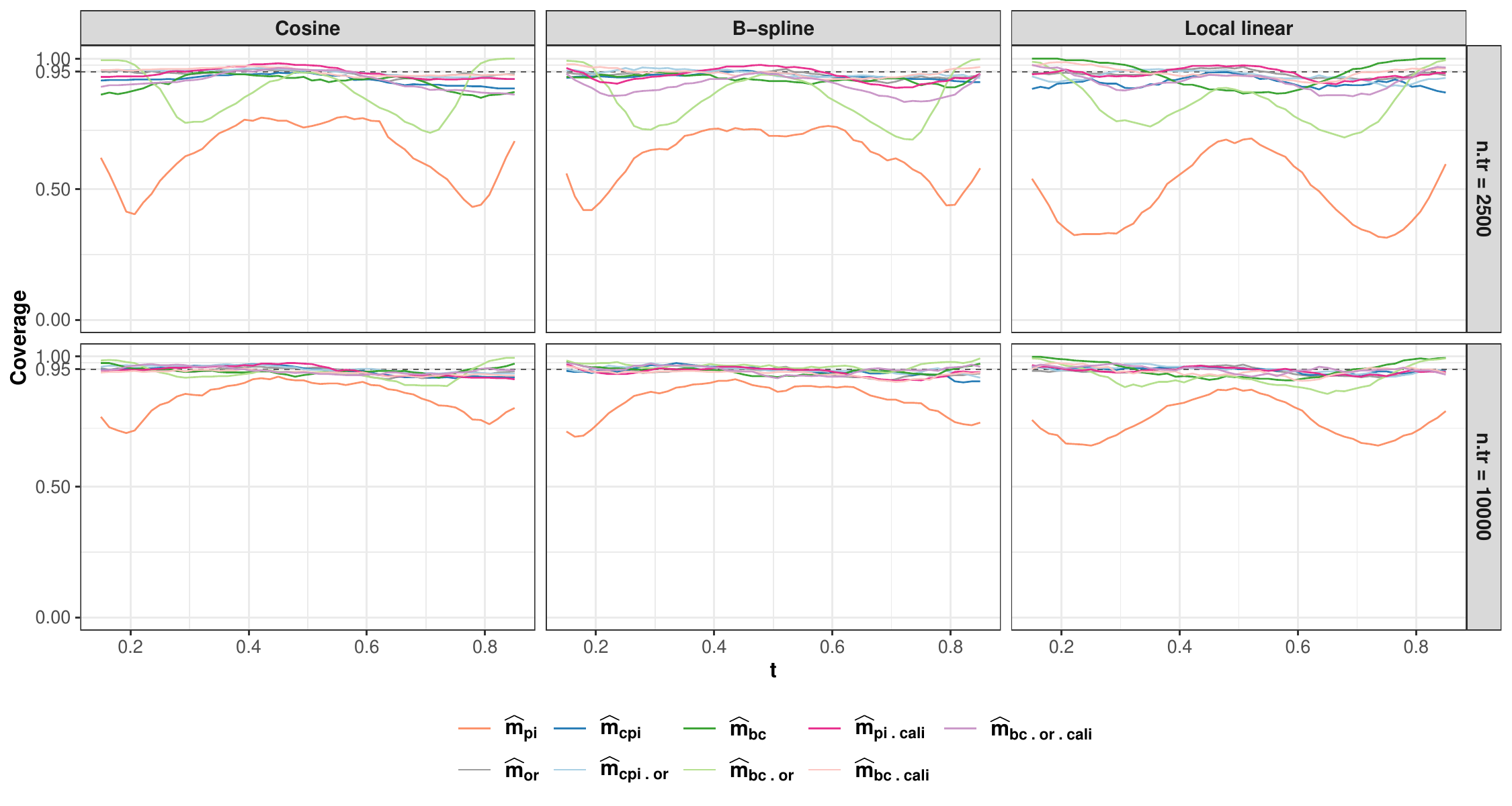}
    \caption{Pointwise coverage relative to the matched same-replication semi-oracle target for \textbf{M2 (linear model with small sinusodal perturbation): $\mu(X) = 4+ 3r(X)+0.15\sin\{4\pi r(X)\}$}}
    \label{fig:coverage-ntr_S1M2}
\end{figure}

\begin{figure}[htbp]
    \centering
    \includegraphics[width=\textwidth]{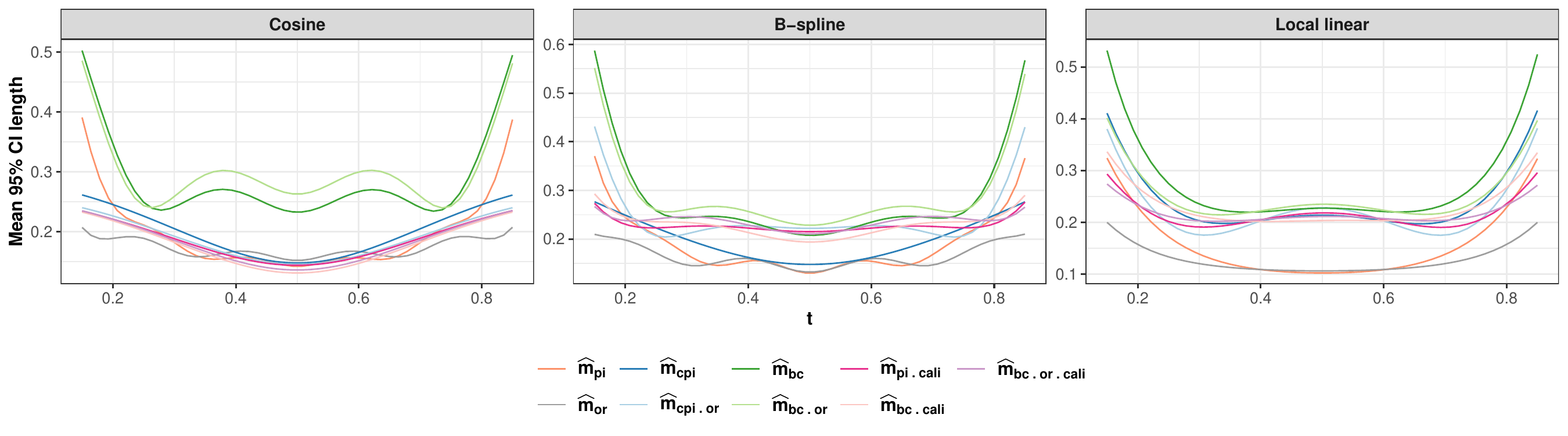}
    \caption{Pointwise confidence interval length for  \textbf{M3 (Sinusoidal function with a localized bump): $\mu(X) = 1 + 0.1\sin\{2\pi r(X)\} +   \sin\{2\pi r(X)\} *\exp[-40\{r(X) - 0.3\}^2]$} }
    \label{fig:cilength-ntr_S1M2}
\end{figure}

\begin{figure}[htbp]
    \centering
    \includegraphics[width=\textwidth]{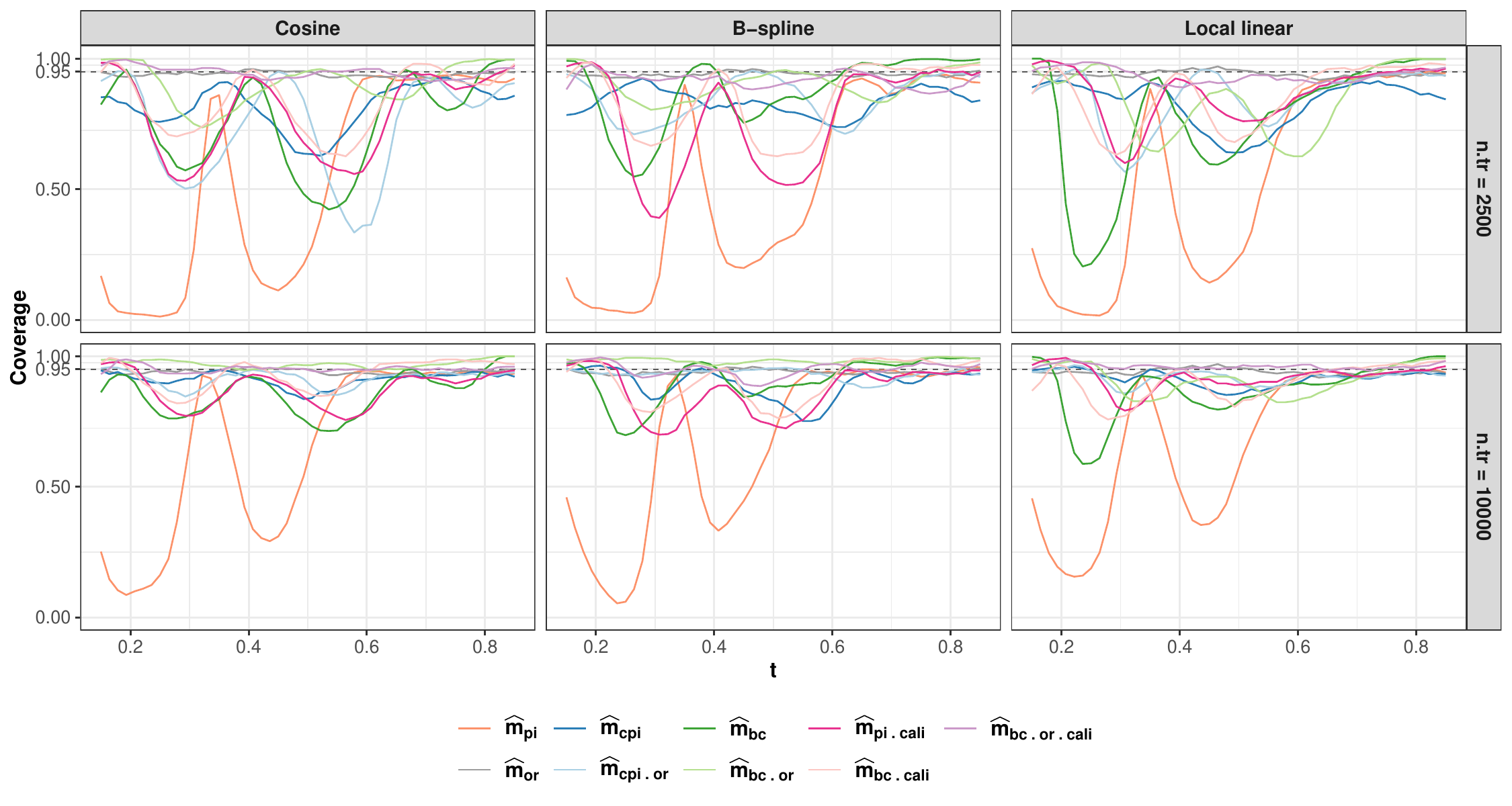}
    \caption{Pointwise coverage relative to the matched same-replication semi-oracle target for     \textbf{M3 (Sinusoidal function with a localized bump): $\mu(X) = 1 + 0.1\sin\{2\pi r(X)\} +   \sin\{2\pi r(X)\} *\exp[-40\{r(X) - 0.3\}^2]$}.}
    \label{fig:coverage-ntr_S1M3}
\end{figure}

\begin{figure}[htbp]
    \centering
    \includegraphics[width=\textwidth]{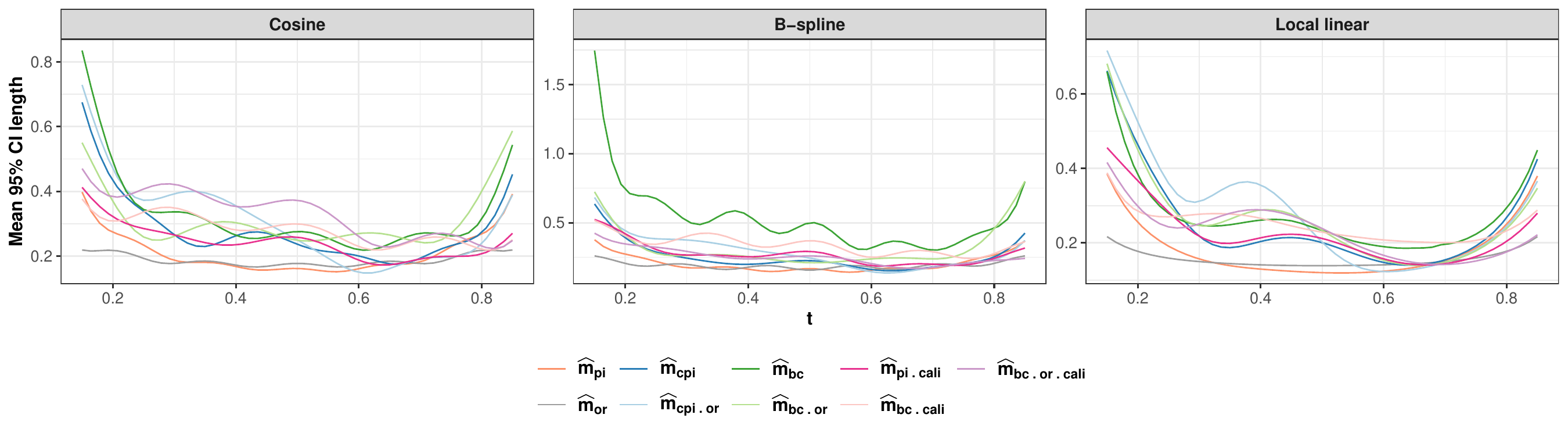}
    \caption{Pointwise confidence interval length for     \textbf{M3 (Sinusoidal function with a localized bump): $\mu(X) = 1 + 0.1\sin\{2\pi r(X)\} +   \sin\{2\pi r(X)\} *\exp[-40\{r(X) - 0.3\}^2]$}. }
    \label{fig:cilength-ntr_S1M3}
\end{figure}
\end{document}